\newtheorem{definition}{Definition}
\newtheorem{proposition}[definition]{Proposition}
\newtheorem{lemma}[definition]{Lemma}
\newtheorem{theorem}[definition]{Theorem}
\def\squareforqed{\hbox{\rlap{$\sqcap$}$\sqcup$}}
\def\qed{\ifmmode\squareforqed\else{\unskip\nobreak\hfil
\penalty50\hskip1em\null\nobreak\hfil\squareforqed
\parfillskip=0pt\finalhyphendemerits=0\endgraf}\fi}
\def\endenv{\ifmmode\;\else{\unskip\nobreak\hfil
\penalty50\hskip1em\null\nobreak\hfil\;
\parfillskip=0pt\finalhyphendemerits=0\endgraf}\fi}
\newenvironment{proof}{\noindent \textbf{{Proof~} }}{\qed}
\newenvironment{remark}{\noindent \textbf{{Remark~}}}{}
\mathchardef\ordinarycolon\mathcode`\:
\def\vcentcolon{\mathrel{\mathop\ordinarycolon}}
\newcommand{\nc}{\newcommand}
\nc{\rnc}{\renewcommand}
\nc{\beq}{\begin{equation}}
\nc{\eeq}{\end{equation}}
\nc{\beqa}{\begin{eqnarray}}
\nc{\eeqa}{\end{eqnarray}}
\nc{\lbar}[1]{\overline{#1}}
\nc{\bra}[1]{\langle#1|}
\nc{\ket}[1]{|#1\rangle}
\nc{\ketbra}[2]{|#1\rangle\!\langle#2|}
\nc{\braket}[2]{\langle#1|#2\rangle}
\nc{\proj}[1]{| #1\rangle\!\langle #1 |}
\nc{\avg}[1]{\langle#1\rangle}
\nc{\Rank}{\operatorname{Rank}}
\nc{\smfrac}[2]{\mbox{$\frac{#1}{#2}$}}
\nc{\tr}{\operatorname{Tr}}
\nc{\ox}{\otimes}
\nc{\dg}{\dagger}
\nc{\dn}{\downarrow}
\nc{\cA}{{\cal A}}
\nc{\cB}{{\cal B}}
\nc{\cC}{{\cal C}}
\nc{\cD}{{\cal D}}
\nc{\cE}{{\cal E}}
\nc{\cF}{{\cal F}}
\nc{\cG}{{\cal G}}
\nc{\cH}{{\cal H}}
\nc{\cI}{{\cal I}}
\nc{\cJ}{{\cal J}}
\nc{\cK}{{\cal K}}
\nc{\cL}{{\cal L}}
\nc{\cM}{{\cal M}}
\nc{\cN}{{\cal N}}
\nc{\cO}{{\cal O}}
\nc{\cP}{{\cal P}}
\nc{\cQ}{{\cal Q}}
\nc{\cR}{{\cal R}}
\nc{\cS}{{\cal S}}
\nc{\cT}{{\cal T}}
\nc{\cX}{{\cal X}}
\nc{\cY}{{\cal Y}}
\nc{\cZ}{{\cal Z}}
\nc{\csupp}{{\operatorname{csupp}}}
\nc{\qsupp}{{\operatorname{qsupp}}}
\nc{\var}{{\operatorname{var}}}
\nc{\rar}{\rightarrow}
\nc{\lrar}{\longrightarrow}
\nc{\polylog}{{\operatorname{polylog}}}
\nc{\wt}{{\operatorname{wt}}}
\nc{\av}[1]{{\left\langle {#1} \right\rangle}}
\nc{\RR}{{{\mathbb R}}}
\nc{\CC}{{{\mathbb C}}}
\nc{\FF}{{{\mathbb F}}}
\nc{\NN}{{{\mathbb N}}}
\nc{\ZZ}{{{\mathbb Z}}}
\nc{\PP}{{{\mathbb P}}}
\nc{\QQ}{{{\mathbb Q}}}
\nc{\UU}{{{\mathbb U}}}
\nc{\EE}{{{\mathbb E}}}
\nc{\id}{{\operatorname{id}}}
\nc{\CHSH}{{\operatorname{CHSH}}}
\nc{\Aram}{{\operatorname{\sf A}}}
\nc{\be}{\begin{equation}}
\nc{\ee}{{\end{equation}}}
\nc{\bea}{\begin{eqnarray}}
\nc{\eea}{\end{eqnarray}}
\nc{\Hom}[2]{\mbox{Hom}(\CC^{#1},\CC^{#2})}
\nc{\rU}{\mbox{U}}
\nc{\ob}[1]{#1}
\begin{document}

\title{On zero-error communication via quantum channels\protect\\ 
       in the presence of noiseless feedback\footnote{A preliminary version of this paper 
       was presented as a poster at QIP 2012, 12-16 December 2011, Montr\'{e}al.}}

\author{Runyao Duan}
\email{runyao.duan@uts.edu.au}
\affiliation{Centre for Quantum Computation and Intelligent Systems (QCIS), Faculty of Engineering and Information Technology, University of Technology, Sydney, NSW 2007, Australia}
\affiliation{State Key Laboratory of Intelligent Technology and Systems, Tsinghua National Laboratory for Information Science and Technology, Department of Computer Science and Technology, Tsinghua University, Beijing 100084, China}
\affiliation{UTS-AMSS Joint Research Laboratory for Quantum Computation and Quantum Information Processing, Academy of Mathematics and Systems Science, Chinese Academy of Sciences, Beijing 100190, China}

\author{Simone Severini}
\email{simoseve@gmail.com}
\affiliation{Department of Computer Science and Department of Physics and Astronomy, University College London, WC1E 6BT London, U.K.}

\author{Andreas Winter}
\email{andreas.winter@uab.cat}
\affiliation{ICREA \&{} F\'{\i}sica Te\`{o}rica: Informaci\'{o} i Fen\`{o}mens Qu\`{a}ntics, 
Universitat Aut\`{o}noma de Barcelona, ES-08193 Bellaterra (Barcelona), Spain}
\affiliation{Department of Mathematics, University of Bristol, Bristol BS8 1TW, U.K.}
\affiliation{Centre for Quantum Technologies, National University of Singapore, 2 Science Drive 3, Singapore 117542}

\date{20 April 2016}

\begin{abstract}
We initiate the study of zero-error communication via quantum channels when
the receiver and sender have at their disposal a noiseless feedback channel
of unlimited quantum capacity, generalizing Shannon's zero-error
communication theory with instantaneous feedback.

We first show that this capacity is a function only of the linear span of
Choi-Kraus operators of the channel, which generalizes the bipartite equivocation
graph of a classical channel, and which we dub ``non-commutative bipartite graph''.
Then we go on to show that the feedback-assisted capacity is non-zero 
(allowing for a constant amount of activating noiseless communication)
if and only if the non-commutative bipartite graph is non-trivial,
and give a number of equivalent characterizations. This result involves
a far-reaching extension of the ``conclusive exclusion'' of quantum
states [Pusey/Barrett/Rudolph, \emph{Nature Phys.} {\bf 8}(6):475-478, 2012].

We then present an upper bound on the feedback-assisted zero-error
capacity, motivated by a conjecture originally made by Shannon and 
proved later by Ahlswede. We demonstrate this bound to have
many good properties, including being additive and given by a minimax formula.
We also prove a coding theorem showing that this quantity is the 
entanglement-assisted capacity against an adversarially chosen channel
from the set of all channels with the same Choi-Kraus span, which can also
be interpreted as the feedback-assisted \emph{unambiguous} capacity. 
The proof relies on a generalization of the ``Postselection Lemma''
(de Finetti reduction) [Christandl/K\"onig/Renner, \emph{Phys. Rev. Lett.} {\bf 102}:020504, 2009]
that allows to reflect additional constraints, and which we believe
to be of independent interest.
This capacity is a relaxation of the feedback-assisted zero-error 
capacity; however, we have to leave open the question of whether they 
coincide in general.

We illustrate our ideas with a number of examples, including 
classical-quantum channels and Weyl diagonal channels, 
and close with an extensive discussion of open questions.
\end{abstract}

\maketitle

\thispagestyle{empty}

\vfill\pagebreak

\tableofcontents


\setcounter{page}{1}

\section{Zero-error communication assisted by noiseless quantum feedback}
\label{sec:intro}
In information theory it is customary to consider not only
asymptotically long messages but also asymptotically vanishing,
but nonzero error probabilities, which leads to a probabilistic
theory of communication characterized by entropic capacity
formulas~\cite{Shannon48,CoverThomas}. It is well-known that
when communicating by block codes over a discrete memoryless channel 
at rate below the capacity, the error probability goes to zero 
exponentially in the block length, and while it is one of the major
open problems of information theory to characterize the tradeoff
between rate and error exponent in general, we have by now a fairly 
good understanding of it. 
However, if the error probability is required to vanish faster 
than exponential, or equivalently is required to be zero exactly
(at least in the case of finite alphabets), 
we enter the strange and much less understood realm of zero-error 
information theory~\cite{Shannon56,KoernerOrlitsky}, which 
concerns asymptotic combinatorial problems, most of which
are unsolved and are considered very difficult.
There are a couple of exceptions to this rather depressing
state of affairs, one having been already identified by Shannon in 
his founding paper~\cite{Shannon56}, namely the discrete memoryless 
channel $N(y|x)$ assisted by instantaneous noiseless feedback, whose 
capacity is given by the \emph{fractional packing number} of a 
bipartite graph $\Gamma$ representing the possible transitions $N(y|x)>0$. 
The other one is the the recently considered assistance by no-signalling
correlations~\cite{CLMW:0}, which is also completely solved in terms
the fractional packing number of the same bipartite graph $\Gamma$. 

Recent years have seen attempts to create a theory of quantum
zero-error information theory~\cite{prehist}, identifying some rather strange
phenomena there such as superactivation~\cite{CCH,Duan:zero} or 
entanglement advantage for classical channels~\cite{CLMW-ent-zero,IQC-ent-zero}, 
but resulting also in some general structural progress such as
a quantum channel version of the Lov\'{a}sz number~\cite{DSW:q-theta}.
Motivated by the success in the above-mentioned two models,
two of us in~\cite{DW:ns-ass} (see also~\cite{DuanWang}) have developed 
a theory of zero-error communication over memoryless quantum channels 
assisted by quantum no-signalling correlations, which largely
(if not completely) mirrors the classical channel case; in particular,
it yielded the first capacity interpretation of the Lov\'{a}sz number 
of a graph. Some of the techniques and insights developed in~\cite{DW:ns-ass}
will play a central role also in the present paper.

\medskip
In the present paper, we take as our point of departure the other 
successful case, Shannon's theory of zero-error communication assisted
by noiseless instantaneous feedback. In detail,
consider a quantum channel $\cN:\cL(A) \longrightarrow \cL(B)$, i.e.~a
completely positive and trace preserving (cptp) linear map from the operators
on $A$ to those of $B$ (both finite-dimensional Hilbert spaces), 
where $\cL(A)$ denotes the linear operators (i.e.~matrices) on $A$,
with Choi-Kraus and Stinespring representations
\[
  \cN(\rho) = \sum_j E_j \rho E_j^\dagger = \tr_C V\rho V^\dagger,
\]
for linear operators $E_j : A \longrightarrow B$ such that
$\sum_j E_j^\dagger E_j = \1$, and an isometry $V: A \longrightarrow B\ox C$,
respectively. The linear
span of the Choi-Kraus operators is denoted by 
\[
  K = \cK(\cN) := \operatorname{span} \{ E_j : j \} < \cL(A\rightarrow B),
\]
where ``$<$" means that $K$ is a subspace of $\cL(A\rightarrow B)$, the
linear operators (i.e.~martrices) mapping $A$ to $B$. 
We will discuss a model of communication where Alice uses the channel
$n$ times in succession, allowing Bob after each round to send
her back an arbitrary quantum system. They may also share an entangled
state prior to the first round (if not, they can have it anyway from
the second round on, since Bob could use the first feedback to create
an arbitrary entangled state). Their goal is to allow Alice to
send one of $M$ messages down the channel uses such that Bob
is able to distinguish them perfectly.
More formally, the most general \emph{quantum feedback-assisted code}
consists of a state (w.l.o.g.~pure) $\ket{\phi} \in X_0 \ox Y_0$
and for each message $m=1,\ldots,M$ isometries for encoding and 
feedback decoding
\begin{equation}\begin{split}
  \label{eq:feedback-code}
  U^{(m)}_t &: X_{t-1} \ox F_{t-1} \longrightarrow A_t \ox X_t, \\
  W_t       &: Y_{t-1} \ox B_t     \longrightarrow F_t \ox Y_t,
\end{split}\end{equation}
for $t=1,\ldots,n$ and appropriate local quantum systems $X_t$
(Alice) and $Y_t$ (Bob), as well the feedback-carrying systems
$F_t$; see Fig.~\ref{fig:feedback-diagram}. For consistency 
(and w.l.o.g.), $F_0=F_n=\CC$ are trivial. Note that Bob can use the feedback channel to 
create any entangled state $\ket{\phi}$ with Alice for later use before they actually 
send messages. We use isometries, rather than general cptp maps, to represent 
encoders and decoders in the feedback-assisted communication scheme, because
by the Stinespring dilation~\cite{Stinespring}, all local cptp maps can be 
``purified'' to local isometries. Thus every seemingly more general protocol
involving cptp maps can be purified to one of the above form. We will 
find this form convenient in the later analysis as it allows us to reason
on the level of Hilbert space vectors.

\begin{figure}[ht]
  \includegraphics[width=9cm]{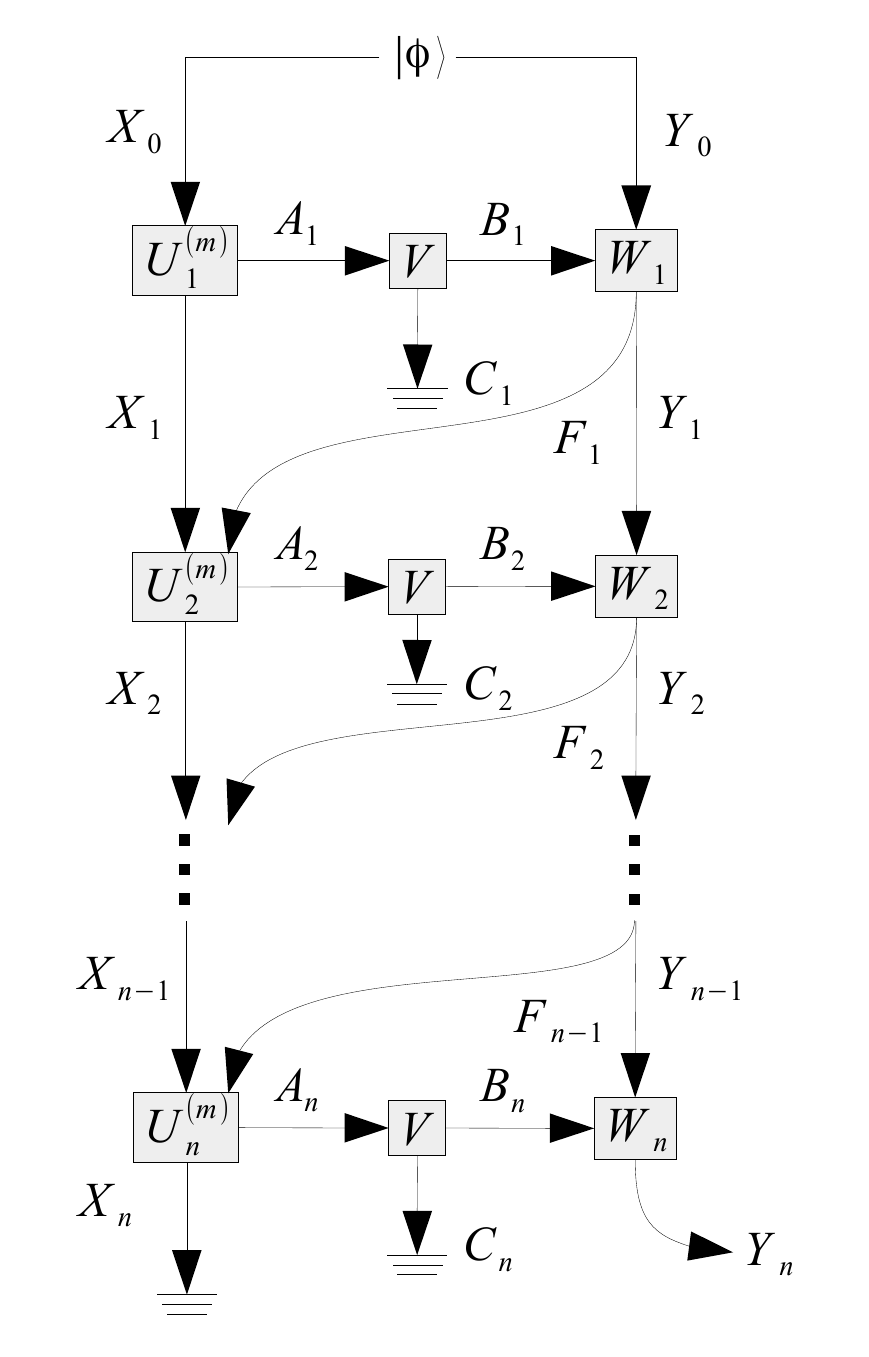}
  \caption{Diagrammatic representation of a feedback-assisted code for
    messages $m$ sent down a channel $\cN$ used $n$ times, in the form
    of a schematic circuit diagram.
    All boxes are isometries (acting on suitably large input and
    output quantum registers), and the solid lines and arrows represent
    the ``sending'' of the respective register.
    Bob's final output state $\rho_m$ after $n$ rounds of using the channel
    and feedback is in register $Y_n$.}
  \label{fig:feedback-diagram}
\end{figure}

We call this quantum feedback-assisted code a \emph{zero-error code} if 
there is a measurement on $Y_n$ that distinguishes Bob's output states
$\rho^{(m)} = \sum_{\underline{j}} \rho^{(m)}_{\underline{j}}$, with 
certainty, where the sum is over the states
\begin{equation}
  \label{eq:feedback-code-outputs}
  \rho^{(m)}_{\underline{j}} 
           = \tr_{X_n} \left( \prod_{t=n}^1 (W_t E_{j_t} U_t^{(m)}) 
                                 \proj{\phi} 
                              \prod_{t=1}^n ({U_t^{(m)}}^\dagger E_{j_t}^\dagger W_t^\dagger) \right),
\end{equation}
which are the output states given a specific sequence $\underline{j}=j_1\ldots j_n$
of Kraus operators. [Note that here and below, for convenience, we use 
$\prod_{t=n}^1 Q_t$ to represent right-to-left multiplications of  operators 
$Q_t$, namely $\prod_{t=n}^1 Q_t:=Q_n\cdots Q_1$.] 
In other words, these states $\rho^{(m)}$ have to have
mutually orthogonal supports, i.e.~for all $m\neq m'$, all $\underline{j}$,
$\underline{k}$ and all $\xi \in \cL(X_n)$,
\[
  0 = \bra{\phi} \prod_{t=1}^n ({U_t^{(m')}}^\dagger E_{j_t}^\dagger W_t^\dagger)
                   \xi
                 \prod_{t=n}^1 (W_t E_{k_t} U_t^{(m)}) \ket{\phi}
    =: \bra{\phi^{(m')}_{\underline{j}}} \xi \ket{\phi^{(m)}_{\underline{k}}}.
\]
By linearity, we see that this condition depends only on the linear
span of the Choi-Kraus operator space $K$, in fact it can evidently be expressed as the orthogonality of a tensor
defined as a function of $\ket{\phi}$, the $U_t^{(m)}$ and $W_t$,
to the subspace $(K \ox K^\dagger)^{\ox n}$ -- cf.~similar albeit
simpler characterizations of zero-error and entanglement-assisted zero-error
codes in terms of the ``non-commutative graph'' 
$S=K^\dagger K:={\rm span}\{E_k^\dag E_j: k,j\} < \cL(A)$~\cite{CCH,Duan:zero,DSW:q-theta}, and of
no-signalling assisted zero-error codes in terms of the
``non-commutative bipartite graph'' $K$~\cite{DW:ns-ass}.
Thus we have proved

\begin{proposition}
  \label{prop:basic}
  A quantum feedback-assisted code for a channel $\cN$ being zero-error 
  is a property solely of the Choi-Kraus space $K = \cK(\cN)$. The maximum
  number of messages in a feedback-assisted zero-error code is denoted
  $M_f(n;K)$.
  Hence, the \emph{quantum feedback-assisted zero-error capacity} of $\cN$,
  \[
    C_{0EF}(K) := \lim_{n\rightarrow\infty} \frac{1}{n} \log M_f(n;K)
                = \sup_n \frac{1}{n} \log M_f(n;K),
  \]
  is a function only of $K$.
  \qed
\end{proposition}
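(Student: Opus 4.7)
The heart of the argument is already assembled in the discussion that precedes the proposition; what remains is to check carefully that the orthogonality condition defining zero-error really is linear in the Choi-Kraus operators, and then to derive the existence of the capacity limit. I would proceed in three steps.

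First, I would unpack what it means for a feedback-assisted code to be zero-error. The output states $\rho^{(m)} = \sum_{\underline j} \rho^{(m)}_{\underline j}$ in Eq.~\eqref{eq:feedback-code-outputs} are sums of subnormalized pure states (after tracing out $X_n$), obtained from vectors $\ket{\phi^{(m)}_{\underline k}} = \prod_{t=n}^1 (W_t E_{k_t} U_t^{(m)}) \ket{\phi}$ living in $X_n \ox Y_n$. The states $\rho^{(m)}$ are perfectly distinguishable by some measurement on $Y_n$ if and only if their supports are mutually orthogonal, and this in turn is equivalent to $\tr_{X_n}\!\bigl(\ketbra{\phi^{(m)}_{\underline k}}{\phi^{(m')}_{\underline j}}\bigr) = 0$ for all $m\neq m'$ and all $\underline j,\underline k$, which is the condition $\bra{\phi^{(m')}_{\underline j}} \xi \ket{\phi^{(m)}_{\underline k}} = 0$ for all $\xi \in \cL(X_n)$ already noted in the text.

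Second, I would observe that for fixed $\ket{\phi}$, $\{U_t^{(m)}\}$, $\{W_t\}$ and $\xi$, the scalar $\bra{\phi^{(m')}_{\underline j}} \xi \ket{\phi^{(m)}_{\underline k}}$ is a multilinear form in the $2n$-tuple $(E_{j_1},\ldots,E_{j_n},E_{k_1}^\dagger,\ldots,E_{k_n}^\dagger)$ — each Kraus operator enters exactly once, sandwiched between fixed isometries. Hence this expression extends to a unique linear functional on $K^{\ox n} \ox (K^\dagger)^{\ox n}$, equivalently a tensor $T_{m,m'}(\xi) \in \bigl(K \ox K^\dagger\bigr)^{\ox n}$ whose vanishing for \emph{every} choice of Kraus operators with span contained in $K$ is equivalent to the zero-error condition above. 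In particular, two channels $\cN,\cN'$ with $\cK(\cN) = \cK(\cN') = K$ admit exactly the same family of feedback-assisted zero-error codes, so $M_f(n;K)$ is a well-defined function of $K$ alone.

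Third, to pass from $M_f(n;K)$ to the capacity, I would invoke Fekete's lemma: the sequence $\log M_f(n;K)$ is super-additive because the concatenation of a zero-error $n$-round code with a zero-error $m$-round code (using fresh local registers and letting Bob reuse the feedback channel to prepare any shared entanglement required at the start of the second block) yields a zero-error $(n+m)$-round code with $M_f(n;K)\cdot M_f(m;K)$ messages. Thus $\frac{1}{n}\log M_f(n;K)$ converges to its supremum, and the resulting $C_{0EF}(K)$ depends only on $K$. The only step that is not purely formal is the linearity check in the second paragraph, but since each Kraus operator appears in exactly one position in each factor $\ket{\phi^{(m)}_{\underline k}}$ and its conjugate, linearity is immediate; I therefore do not expect any real obstacle.
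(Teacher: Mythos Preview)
Your proposal is correct and follows essentially the same approach as the paper: the proof is contained in the discussion preceding the proposition, which derives the orthogonality condition $\bra{\phi^{(m')}_{\underline{j}}} \xi \ket{\phi^{(m)}_{\underline{k}}} = 0$ and notes that by (multi)linearity in the Kraus operators it depends only on $K$. You add an explicit Fekete-type super-additivity argument for the existence of the limit, which the paper leaves implicit; this is a welcome clarification and raises no issues.
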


In the case of a classical channel $N:\cX \longrightarrow \cY$
with transition probabilities $N(y|x)$,
assisted by classical noiseless feedback, the above problem was first 
studied -- and completely solved -- by Shannon~\cite{Shannon56}. 
To be precise, his model has noiseless instantaneous feedback
of the channel output back to the encoder; it is clear that any protocol
with general actions (noisy channel acting on the output) 
by the receiver can be simulated by the receiver storing the output
and the encoder getting a copy of the channel output, if shared randomness
is available. Our model differs from this only by the additional 
availability of entanglement; that this does not increase further 
the capacity follows from~\cite{CLMW:0}, see our comments below.

Following Shannon, we introduce the (bipartite) \emph{equivocation graph}
$\Gamma = \Gamma(N)$ on $\cX \times \cY$, which has an edge $xy$ iff 
$N(y|x) > 0$, i.e.~the adjacency matrix is $\Gamma(y|x) = \lceil N(y|x) \rceil$; 
furthermore the confusability graph $G=G(N)$ on $\cX$, with an edge $x \sim x'$ iff 
there exists a $y$ such that $N(y|x)N(y|x')>0$, i.e., iff the neighbourhoods of 
$x$ and $x'$ in $\Gamma$ intersect. The \emph{feedback-assisted zero-error capacity}
$C_{0F}(N)$ of the channel $N$ can be seen to depend only on $\Gamma$.

Note that for (the quantum realisation of) a classical channel, i.e.
\[
  \cN(\rho) = \sum_{xy} N(y|x) \ketbra{y}{x} \rho \ketbra{x}{y},
\]
the corresponding subspace is given by
\[
  K = \operatorname{span}\{ \ketbra{y}{x} : xy \text{ is an edge in } \Gamma \},
\]
so $K$ should really be understood as the quantum generalisation of
the equivocation graph (a \emph{non-commutative bipartite graph})~\cite{DW:ns-ass},
much as $S=K^\dagger K$ was advocated in~\cite{DSW:q-theta} 
as a quantum generalisation of an undirected graph.

Shannon proved
\begin{equation}
  \label{eq:feedback-Shannon}
  C_{0F}(N) = 
  C_{0F}(\Gamma) = \begin{cases}
                     0                     & \text{ if } G \text{ is a complete graph}\ (\text{iff } C_0(N)=0), \\
                     \log \alpha^*(\Gamma) & \text{ otherwise}. 
                   \end{cases}
\end{equation}
Here, $\alpha^*(\Gamma)$ is the so-called \emph{fractional packing number}
of $\Gamma$, defined as a linear programme, whose dual linear programme is the
\emph{fractional covering number}~\cite{Shannon56,ScheinermanUllman}:
\begin{equation}\begin{split}
  \label{eq:fractional-packing-n}
  \alpha^*(\Gamma) &= \max \sum_x w_x \text{ s.t. } \forall x\ 0\leq w_x,\ 
                                                    \forall y\ \sum_x w_x \Gamma(y|x) \leq 1, \\
                   &= \min \sum_y v_y \text{ s.t. } \forall y\ 0\leq v_y,\ 
                                                    \forall x\ \sum_y v_y \Gamma(y|x) \geq 1.
\end{split}\end{equation}
This number appears also in other zero-error communication problems, namely
as the zero-error capacity of the channel assisted by no-signalling
correlations~\cite{CLMW:0}. There, it is also shown to be the asymptotic 
simulation cost of a channel with bipartite graph $\Gamma$ in the presence
of shared randomness. This shows that for a classical channel with
bipartite graph $\Gamma$, interpreted as a quantum channel $\cN$ with 
non-commutative bipartite graph $K$, $C_{0F}(\Gamma) = C_{0EF}(K)$.

The first case in eq.~(\ref{eq:feedback-Shannon})
of a complete graph $G$ is easy to understand: whatever the parties do, 
and regardless of the use of feedback, any two inputs may lead to the same output
sequence, so not a single bit can be transmitted with certainty. In either case,
Shannon showed that only some arbitrarily small rate of perfect communication 
(actually a constant amount, dependent only on $\Gamma$) is sufficient to achieve
what we might call the \emph{activated capacity} $\overline{C}_{0F}(N)$, which
is always equal to $\log\alpha^*(\Gamma)$. This was understood better in the work
of Elias~\cite{Elias:list} who showed that the capacity of zero-error 
\emph{list decoding} of $N$ (with arbitrary but constant list size) is exactly 
$\log \alpha^*(\Gamma)$. Thus a coding scheme for $N$ with feedback would
consist of a zero-error list code with list size $L$ and rate
$R \geq \left(1-\frac1L\right)\log\alpha^*(\Gamma) - O\left(\frac1L\right)$
for $n$ uses of the channel $N$, followed by feedback in which Bob lets Alice 
know the list of $L$ items in which he now knows the message falls, followed 
by a noiseless transmission of $\log L$ bits of Alice to resolve the remaining
ambiguity. Shannon's scheme~\cite{Shannon56} is based on a similar idea,
but whittles down the list by a constant factor in each round, so Bob needs 
to update Alice on the remaining list after each channel use. The constant noiseless
communication at the end of this protocol can be transmitted using an unassisted
zero-error code via the given channel $N$ (at most $\log L$ uses), or
via an activating noiseless channel.

The dichotomy in eq.~(\ref{eq:feedback-Shannon}) has the following 
quantum channel analogue (in fact, generalization):
\begin{proposition}
  \label{prop:zero-cap}
  For any non-commutative bipartite graph $K=\cK(\cN) < \cL(A\rightarrow B)$,
  the feedback-assisted zero-error capacity of $K$ vanishes,
  $C_{0EF}(K) = 0$, if and only if the associated non-commutative
  graph is complete, i.e.~$S=K^\dagger K = \cL(A)$, which is
  equivalent to vanishing entanglement-assisted zero-error capacity,
  $C_{0E}(S) = 0$.
\end{proposition}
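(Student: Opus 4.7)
The proposition states a three-way equivalence. I will take the equivalence $S=\cL(A)\iff C_{0E}(S)=0$ as established in prior work on entanglement-assisted zero-error capacity~\cite{CCH,Duan:zero,DSW:q-theta}; it follows from the one-shot criterion that two distinguishable messages require orthogonal input states $\ket{\psi_1}\perp\ket{\psi_2}\in A\ox Y_0$ with matching reduced states on $Y_0$ and satisfying $\bra{\psi_1}(s\ox\1)\ket{\psi_2}=0$ for all $s\in S$. When $S=\cL(A)$ this forces $\tr_{Y_0}\ketbra{\psi_1}{\psi_2}=0$, which via Uhlmann's theorem (writing $\ket{\psi_2}=(V\ox\1)\ket{\psi_1}$ for an isometry $V$ on $A$) is incompatible with the orthogonality of two nonzero unit vectors.

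The easy direction of the remaining equivalence is $C_{0EF}(K)\geq C_{0E}(S)$: every entanglement-assisted code is a feedback-assisted code with trivial feedback registers $F_t=\CC$, so $C_{0EF}(K)=0\Rightarrow C_{0E}(S)=0\Rightarrow S=\cL(A)$.

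For the substantive direction $S=\cL(A)\Rightarrow C_{0EF}(K)=0$, I plan an induction on the number of channel uses $n$. The base case $n=1$ is immediate since $F_0=F_1=\CC$ reduces the protocol to single-shot entanglement-assisted coding, already handled above. For the inductive step, I would peel off the \emph{last} channel use: after $n-1$ rounds the protocol yields, for each $m$, a joint pure state on $X_{n-1}\ox F_{n-1}\ox Y_{n-1}$, and the final round applies $U_n^{(m)}$, the channel $\cN$, and $W_n$. Applying the one-shot argument under $S=\cL(A)$ to this final round should show that Bob's outputs on $Y_n$ for $m\neq m'$ cannot become orthogonal unless the residual Bob-marginals on $Y_{n-1}$ were already orthogonal, which by the induction hypothesis applied to the first $n-1$ rounds is ruled out.

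The main obstacle is the adaptivity of the feedback: Alice's isometries $U_t^{(m)}$ depend on the accumulated feedback $F_1,\ldots,F_{t-1}$, so the first $n-1$ rounds are not a self-contained $(n-1)$-use protocol to which induction can be applied naively. My plan to handle this is to work with the orthogonality tensor $\bra{\phi^{(m')}_{\underline{j}}}\,\xi\,\ket{\phi^{(m)}_{\underline{k}}}=0$ (for $\xi\in\cL(X_n)$) from the excerpt and to absorb the last channel use via the Stinespring identity $V^\dagger(P\ox\1_C)V=\sum_j E_j^\dagger P E_j$. The span of these operators as $P$ ranges over $\cL(B)$ contains $S$, hence all of $\cL(A)$ under the hypothesis, which should let me reduce the $n$-round orthogonality condition to an $(n-1)$-round one and close the induction.
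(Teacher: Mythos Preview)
Your overall plan is sound and is essentially the contrapositive of the paper's argument. The paper runs a \emph{forward} induction on the round index $t$, maintaining the invariant ``Bob's marginal states $\rho^{(0)}_t,\rho^{(1)}_t$ on $Y_t$ are not orthogonal''; your backward induction (peel off the last round, reduce $n$ to $n-1$) is the same reasoning in reverse. The forward version has the minor advantage that one never has to argue that a truncated protocol is itself a valid protocol, but your version works too: once you have shown that orthogonality on $Y_n$ forces orthogonality of Bob's marginals on $Y_{n-1}$, simply redefine $W_{n-1}$ to output all of $F_{n-1}\ox Y_{n-1}$ into Bob's register (and set $F_{n-1}'=\CC$). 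This is a legitimate $(n{-}1)$-round feedback protocol, and orthogonality of the $Y_{n-1}$-marginals \emph{a fortiori} gives orthogonality of the full $F_{n-1}Y_{n-1}$ states, contradicting the induction hypothesis. So the ``adaptivity obstacle'' you flag dissolves.

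There is, however, a genuine error in your last paragraph. The Stinespring identity you wrote, $V^\dagger(P\ox\1_C)V=\sum_j E_j^\dagger P E_j=\cN^\dagger(P)$, does \emph{not} have span containing $S$: for the qubit depolarizing channel one has $S=\cL(\CC^2)$ while $\cN^\dagger(P)=\tfrac{\tr P}{2}\1$, so the span is only $\CC\1$. The operator you actually need is $V^\dagger(\1_B\ox Q)V=\sum_{j,k}\bra{j}Q\ket{k}\,E_j^\dagger E_k$ with $Q\in\cL(C)$, whose span is exactly $S$; this is the right object because when you compute Bob's marginal you trace out $X_n$ \emph{and the environment} $C_n$, so the arbitrary operator $\xi$ lives on $X_n\ox C_n$, not $X_n\ox B_n$. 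Equivalently (and this is what the paper does, bypassing Stinespring entirely), the orthogonality tensor already carries two independent Kraus indices $j_n,k_n$, so after stripping off $W_n$ and $U_n^{(b)}$ you obtain
\[
  \bra{\beta^{(0)}}(\xi_{X_n}\ox E_{j_n}^\dagger E_{k_n}\ox\1_{Y_{n-1}})\ket{\beta^{(1)}}=0
  \quad\forall\,\xi,\,j_n,\,k_n,
\]
and since $\{E_{j_n}^\dagger E_{k_n}\}$ spans $S=\cL(A)$ you conclude $\tr_{Y_{n-1}}\ketbra{\beta^{(1)}}{\beta^{(0)}}=0$, which is what you want. With this correction your proof goes through.
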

\begin{proof}
Clearly $C_{0EF}(K) \geq C_{0E}(S)$ since on the right hand side we simply
do not use feedback, but any code is still a feedback-assisted code.
Hence, if the latter is positive then so is the former. 
It is well known that if $S \neq \cL(A)$, then $C_{0E}(S) \geq 1 > 0$,
in fact each channel use can transmit at least one bit~\cite{Duan:zero,DSW:q-theta}.

Conversely, let us assume that $C_{0E}(S) = 0$, i.e.~$S = K^\dagger K = \cL(A)$. 
We will show by induction on $t$ that for any two distinct messages, w.l.o.g.~$b=0,1$,
Bob's output states after $t$ rounds, $\rho^{(b)}_t$ on $Y_t$, cannot be 
orthogonally supported, meaning $M_f(n;K)=1$.
Here,
\begin{align*}
  \rho^{(b)}_t &= \sum_{j_1\ldots j_t} \tr_{X_t F_t} \proj{\phi^{(b)}_{j_1\ldots j_t}}, \text{ with}\\
  \ket{\phi^{(b)}_{j_1\ldots j_t}}
               &= \prod_{i=t}^1 W_i E_{j_i} U_i^{(b)} \ket{\phi} \in X_t \ox F_t \ox Y_t.
\end{align*}
This is clearly true for $t=0$ since at that point Alice and Bob share only
$\ket{\phi}_{X_0Y_0}$, hence $\rho_0^{(0)} = \rho_0^{(1)} = \tr_{X_0} \phi$.
For $t>0$, let Bob after $t-1$ rounds have one of the states
$\rho^{(b)}_{t-1}$; by the induction hypothesis,
$\rho^{(0)}_{t-1} \not\perp \rho^{(1)}_{t-1}$ -- by a slight
abuse of notation meaning that the supports are not orthogonal, or 
equivalently that the operators are not orthogonal with respect to the 
Hilbert-Schmidt inner product. This means that there are indices
$j_1\ldots j_{t-1}$ and $k_1\ldots k_{t-1}$ such that
\[
  \left(\phi^{(0)}_{t-1}\right)_{Y_{t-1}} := \left(\phi^{(0)}_{j_1\ldots j_{t-1}}\right)_{Y_{t-1}}
     \not\perp 
  \left(\phi^{(1)}_{k_1\ldots k_{t-1}}\right)_{Y_{t-1}} =: \left(\phi^{(1)}_{t-1}\right)_{Y_{t-1}}.
\]
This can be expressed equivalently as 
\begin{equation*}
  \tr_{Y_{t-1}} \ketbra{\phi_{t-1}^{(0)}}{\phi_{t-1}^{(1)}} \neq 0.
\end{equation*}
Now, in the $t$-th round, Alice applies the isometry
$U^{(b)}_t:X_{t-1}F_{t-1} \rightarrow X_t A$ to the $X$ and $F$ registers
of $\ket{\phi^{(b)}_{t-1}}$, hence for $\ket{\psi^{(b)}_t} = U^{(b)}_t\ket{\phi^{(b)}_{t-1}}$
(as we do not touch the $Y_{t-1}$ register)
\begin{equation}
  \label{eq:really-important-step}
  \tr_{Y_{t-1}} \ketbra{\psi^{(0)}_t}{\psi^{(1)}_t} 
     = \tr_{Y_{t-1}} U^{(0)}_t\ketbra{\phi^{(0)}_{t-1}}{\phi^{(1)}_{t-1}}U^{(1)\dagger}_t \neq 0.
\end{equation}
After that, the channel action consists in one of the Choi-Kraus operators
$E_j:A\rightarrow B$. Let us assume, with the aim of establishing a contradiction, that Bob's states
after the channel action were orthogonal, i.e.~for all $j$ and $k$,
\[
  \tr_{X_t} E_j \psi^{(0)}_t E_j^\dagger \perp \tr_{X_t} E_k \psi^{(1)}_t E_k^\dagger.
\]
In other words, for all $j$, $k$ and operators $\xi$ on $X_t$,
\[\begin{split}
  0 &= \bra{\psi^{(1)}_t} \xi \ox E_k^\dagger E_j \ox \1 \ket{\psi^{(0)}_t} \\
    &= \tr\bigl[ (\xi \ox E_k^\dagger E_j) \tr_{Y_{t-1}}\ketbra{\psi^{(0)}_t}{\psi^{(1)}_t} \bigr].
\end{split}\]
But since $\xi$ is arbitrary and the $E_k^\dagger E_j$ span $\mathcal{L}(A)$,
this would imply $\tr_{Y_{t-1}}\ketbra{\psi^{(0)}_t}{\psi^{(1)}_t} = 0$,
contradicting (\ref{eq:really-important-step}).

Thus, applying now also the isometry $W_t:B Y_{y-1} \rightarrow F_t Y_t$, we
find that there exist $j_t$ and $k_t$ such that
\[
  \left(\phi^{(0)}_{j_1\ldots j_{t}}\right)_{F_t Y_{t}}
     \not\perp 
  \left(\phi^{(1)}_{k_1\ldots k_{t}}\right)_{F_t Y_{t}},
    \text{ hence }
  \left(\phi^{(0)}_{j_1\ldots j_{t}}\right)_{Y_{t}}
     \not\perp 
  \left(\phi^{(1)}_{k_1\ldots k_{t}}\right)_{Y_{t}},
\]
and so finally $\rho^{(0)}_t \not\perp \rho^{(1)}_t$, proving the induction step.
\end{proof}

\medskip
Motivated by $\overline{C}_{0F}$ of a classical channel~\cite{Shannon56},
see above, we define also feedback-assisted codes with $n$ channel uses and 
up to $b$ noiseless classical bits of forward communication. The setup
is the same as in eq.~(\ref{eq:feedback-code}) and Fig.~\ref{fig:feedback-diagram}
with $n+b$ rounds, $n$ of which feature the isometric dilation $V$ of $\mathcal{N}$,
and $b$ the isometry $V':\ket{i} \mapsto \ket{i}\ket{i}$ ($i=0,1$) corresponding
to the noiseless bit channel 
$\overline{\id}_2:\rho \mapsto \sum_{i=0}^1 \proj{i} \rho \proj{i}$. It is clear 
that the output states can be written in a way similar to eq.~(\ref{eq:feedback-code-outputs}),
and that the maximum number of messages in a zero-error code depends only
on $n$, $b$ and $K < \mathcal{L}(A\rightarrow B)$, which we denote 
$M_f^{+b}(n;K)$. Clearly, $M_f^{+0}(n;K) = M_f(n;K)$ and in general,
$M_f^{+b+1}(n;K) \geq 2\, M_f^{+b}(n;K)$. Furthermore, it can easily be
verified that
\[
  2^{-b}M_f^{+b}(n;K)\, 2^{-c}M_f^{+c}(m;K) \leq 2^{-b-c}M_f^{+b+c}(n+m;K),
\]
hence we can define the \emph{activated feedback-assisted zero-error capacity}
\[\begin{split}
  \overline{C}_{0EF}(K) := \sup_b \sup_n \frac1n \bigl( \log M_f^{+b}(n;K) - b \bigr) \\
                        = \sup_b \lim_{n\rightarrow\infty} \frac1n \log M_f^{+b}(n;K).
\end{split}\]

Then the above Proposition~\ref{prop:zero-cap} can be rephrased as
\begin{equation}
  \label{eq:feedback-quantum}
  C_{0EF}(K) = \begin{cases}
                 \overline{C}_{0EF}(K) & \text{ if } S = K^\dagger K \neq \cL(A),         \\
                 0                     & \text{ if } S = \cL(A) \ (\text{iff } C_{0E}(S)=0),
               \end{cases}
\end{equation}
motivating our focusing on $\overline{C}_{0EF}(K)$ from now on

\medskip
The rest of the present paper is organized as follows: 
In Section~\ref{sec:feasibility} we start with a concrete example
showing the importance of measurements ``conclusively excluding'' 
hypotheses from a list of options, and go on to show several concise
characterizations of nontrivial channels, i.e.~those for which 
$\overline{C}_{0EF}(K) > 0$.
In Section~\ref{sec:C_min_E} we first review a characterization of
the fractional packing number in terms of the Shannon capacity
minimized over a set of channels, which then 
motivates the definition of $C_{\min E}(K)$ obtained as a minimization of
the entanglement-assisted capacity over quantum channels consistent with 
the given non-commutative bipartite graph. $C_{\min E}(K)$ represents 
the best known upper bound on the feedback-assisted zero-error capacity. 
We illustrate the bound by showing how it allows us to determine
$\overline{C}_{0EF}(K)$ for Weyl diagonal channels, i.e.~$K$ spanned by
discrete Weyl unitaries. We also show that $C_{\min E}(K)$ 
is the ordinary (small error) capacity of the system assisted by entanglement,
against an adversarial choice of the channel
(proof in Appendix~\ref{app:adversarial}, based on a novel Constrained 
Postselection Lemma, aka ``de Finetti reduction'', in Appendix~\ref{app:postselection}).
After that, we conclude in Section~\ref{sec:outro} with a discussion of
open questions and future work.

\section{Characterization of vanishing capacity $\mathbf{\overline{C}_{0EF}(K)}$}
\label{sec:feasibility}
In this section, we will prove the following result.
\begin{theorem}
  \label{thm:C-0EF-feasibility}
  If the non-commutative bipartite graph $K < \mathcal{L}(A\rightarrow B)$
  contains a subspace $\ket{\beta} \ox A^\dag < K$ with a state vector
  $\ket{\beta}\in B$, meaning that the constant channel
  $\mathcal{N}_0:\rho \mapsto \proj{\beta}\tr\rho$ has 
  $\mathcal{K}(\mathcal{N}_0) < K$, then $\overline{C}_{0EF}(K) = 0$;
  we call such $K$ \emph{trivial}.
  
  Conversely, if $K$ is nontrivial, then $\overline{C}_{0EF}(K) > 0$.
\end{theorem}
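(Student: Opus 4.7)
The plan is to prove the two implications separately. For triviality $\Rightarrow \overline{C}_{0EF}(K)=0$, I would exploit Proposition \ref{prop:basic}: since $\cK(\cN_0) < K$, the orthogonality condition defining a feedback-assisted zero-error code for $K$ automatically holds for $\cN_0$ as well, because $(\cK(\cN_0)\ox \cK(\cN_0)^\dagger)^{\ox n} \subset (K\ox K^\dagger)^{\ox n}$, and this inclusion survives tensoring with the $b$ activating noiseless bit channels. But for $\cN_0$, every channel use outputs $\proj{\beta}$ independently of Alice's input, so Bob's side-state evolves deterministically in $m$: the message-dependent encoders $U_t^{(m)}$ act only on registers that are subsequently wiped by the channel, and $m$ reaches Bob exclusively through the $b$ noiseless forward bits. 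A careful induction on rounds, in the spirit of the proof of Proposition \ref{prop:zero-cap}, formalizes this to $M_f^{+b}(n;\cK(\cN_0)) \leq 2^b$, yielding $\log M_f^{+b}(n;K) - b \leq 0$ and hence $\overline{C}_{0EF}(K) = 0$.

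For the converse, I would first rephrase nontriviality in dual form: $K$ is nontrivial iff for every unit $\ket{\beta}\in B$ there exists $f\in K^\perp$ (orthogonal in the Hilbert--Schmidt inner product on $\cL(A\rightarrow B)$) with $f^\dagger\ket{\beta}\neq 0$ in $A$. By a compactness-and-minimax argument over the unit sphere of $B$, this should upgrade to a \emph{generalized conclusive exclusion} statement: there exist an integer $k\geq 2$, an integer $n$, input pure states $\ket{\psi_1},\ldots,\ket{\psi_k}$ on $A^{\ox n}$, and a POVM $\{M_1,\ldots,M_k\}$ on $B^{\ox n}$ satisfying $\tr\bigl(M_j\, E\proj{\psi_j}E^\dagger\bigr)=0$ for every $j$ and every $E\in K^{\ox n}$. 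This is the analogue of the Pusey--Barrett--Rudolph antidistinguishability lemma for non-commutative bipartite graphs; crucially, allowing the inputs to be entangled across $n$ copies (rather than only one) is essential.

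Given such an exclusion, the positive-rate code is straightforward. Alice picks one of $k$ messages and sends the corresponding $\ket{\psi_m}$ in a block of $n$ channel uses; Bob measures $\{M_j\}$, obtains an outcome $j\neq m$ by the exclusion property, and transmits the remaining list of $\leq k-1$ candidates to Alice via the feedback link. A fresh exclusion scheme is then used in the next block for the shrunken list, and after at most $k-1$ blocks of $n$ channel uses each, Bob has identified $m$ with certainty, giving a zero-error rate of $\frac{\log k}{n(k-1)}>0$. A constant number $b$ of activating noiseless bits can be used, if needed, to bootstrap any shared entanglement demanded by the exclusion POVM before the first block; this only shifts the numerator by an additive constant and does not affect the activated rate.

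The main obstacle is unquestionably the generalized conclusive exclusion lemma. A single-use version ($n=1$) need not hold---there are nontrivial $K$ for which no POVM on one copy of $B$ excludes any input---so one must pass to product inputs on $A^{\ox n}$, echoing the multi-copy trick of Pusey--Barrett--Rudolph. The task reduces to showing that the ``no trivial output state'' condition survives tensor powers and can be witnessed by a concrete operator in $(K^{\ox n})^\perp$ on which one can build a POVM of the required form, and proving this in the genuinely non-commutative setting is where the bulk of the technical work lives, going substantially beyond the original PBR argument for independent copies of a fixed ensemble.
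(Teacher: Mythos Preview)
Your first implication is essentially the paper's argument: since the zero-error property depends only on $K$, one may pass to the constant channel $\cN_0$, after which the $n$ noisy uses are worthless and only the $b$ noiseless bits carry information. The paper closes this by invoking the Quantum Reverse Shannon Theorem rather than an explicit induction, but either works.

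The second implication is where your proposal and the paper diverge. You correctly see that everything hinges on a conclusive-exclusion lemma and that the single-copy version can fail, so tensor powers are needed. But your suggested route---a ``compactness-and-minimax argument over the unit sphere of $B$''---is not a proof, and it is not clear it can be made into one. Compactness tells you the $K^\perp$-witnesses $f_\beta$ can be chosen uniformly bounded away from zero, but it does not hand you a POVM, nor does it explain why passing to tensor powers helps.

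The paper's argument is structurally different and supplies exactly the missing machinery. First, it does \emph{not} attempt to exclude inputs on $A^{\ox n}$ with a POVM on $B^{\ox n}$ as you formulate it. Instead it uses pre-shared entanglement: Alice encodes into Bell states $\ket{\Phi_{uv}}=(X^uZ^v\ox\1)\ket{\Phi}$, sends her half through $\cN$, and Bob holds the other half, so that the effective channel becomes a \emph{cq-channel} $\cM:[a]^2\to\cS(A\ox B)$. Nontriviality of $K$ translates into $\bigl\|\sum_{uv}P_{uv}\bigr\|_\infty<a^2$ for the output supports of $\cM$, equivalently $\Aram(\cM)>1$. Second, for cq-channels the paper proves a quantitative exclusion criterion (Proposition~\ref{key-lemma-2}): if a group acts transitively on the outputs and $a/\bigl\|\sum_i P_i\bigr\|_\infty$ exceeds a polynomial in the number and multiplicity of irreps, then a conclusively excluding POVM exists. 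Third, this criterion is met by passing to $\cM^{\ox n}$ restricted to a single type class, where $S_n$ acts transitively, the irrep data grow only polynomially in $n$, and $\Aram$ grows exponentially because $\Aram(K^{\ox n})=\Aram(K)^n$. The resulting exclusion then feeds into Shannon's classical feedback theorem, as you describe.

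So the gap in your proposal is the exclusion lemma itself: you state it (in a form without shared entanglement, which may well be harder than the entanglement-assisted version the paper actually proves) but offer no mechanism for it. The paper's mechanism is the combination of the Bell-state reduction to cq-channels, the semidefinite packing number $\Aram$ and its multiplicativity, and the representation-theoretic Proposition~\ref{key-lemma-2}; none of these appear in your sketch. A minor further point: your list-shrinking protocol does not guarantee one \emph{new} exclusion per block (the excluded index may repeat), so the rate $\frac{\log k}{n(k-1)}$ is not justified as written; the clean route is to observe that exclusion simulates a classical channel with $\alpha^*(\Gamma)\geq\frac{k}{k-1}$ and then invoke Shannon's formula~(\ref{eq:feedback-Shannon}).
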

\begin{proof}{\bf (``trivial $\mathbf{\Rightarrow}$ zero capacity'')}
We show the stronger statement $M_f^{+b}(n;K) = 2^b$ for all $n$ and $b$.
Indeed, as the zero-error condition is only a property of $K$,
we may assume a concrete constant channel $\mathcal{N}_0$
with $\mathcal{K}(\mathcal{N}_0) = \ket{\beta} \ox A^\dag < K$. The outputs
of the $n$ copies of $\mathcal{N}_0$ in the feedback code do not
matter at all as they are going to be $\beta^{\ox n}$, which Bob
can create himself. Hence the only information arriving at Bob's
from Alice is in the $b$ classical bits in the course of the protocol.
But even assisted by entanglement and feedback, Alice can convey
at most $b$ noiseless bits in this way, due to the Quantum Reverse Shannon Theorem \cite{QRST}.
\end{proof}

\medskip
The opposite implication (``nontrivial $\Rightarrow$ positive capacity'')
will be the subject of the remainder of this section. We will start
by looking at cq-channels first -- Subsection~\ref{subsec:pure-cq}
for pure state cq-channels, Subsection~\ref{subsec:mixed-cq-ex} for a
mixed state example and Subsection~\ref{subsec:general-cq} for 
general cq-channels --, before completing the proof for general
channels in Subsection~\ref{subsec:general-channels}.

\subsection{Pure state cq-channels}
\label{subsec:pure-cq}
For a given orthonormal basis $\{\ket{i}\}$ of the input space $A$,
and pure states $\ket{\psi_i}$ in the output space, consider
the cq-channel
\[
  {\cal N}(\rho) = \sum_i \ketbra{\psi_i}{i} \rho \ketbra{i}{\psi_i},
\]
with Kraus subspace
\[
  K := \cK({\cal N}) = \operatorname{span}\{ \ketbra{\psi_i}{i} \}.
\]

We shall demonstrate first the following result:
\begin{proposition}
  For a pure state cq-channel,
  $\overline{C}_{0EF}(K)$ is always positive unless $K$ is trivial,
  which is equivalent to all $\ket{\psi_i}$ being collinear, i.e. 
  $K = \ket{\psi}\otimes A^\dag$ for some pure state $\ket{\psi}$. 
\end{proposition}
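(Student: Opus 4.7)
My plan is to dispatch the ``$K$ trivial $\Rightarrow \overline{C}_{0EF}(K)=0$'' half by invoking the already-proved first part of Theorem~\ref{thm:C-0EF-feasibility} after a quick algebraic check, and then tackle ``non-trivial $\Rightarrow$ positive capacity'' by pairing a Pusey--Barrett--Rudolph conclusive-exclusion measurement with a feedback-driven list-pruning protocol.

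First I would verify the equivalence between triviality of $K$ and collinearity of the $\ket{\psi_i}$. If $\ket{\beta}\otimes A^\dag < K = \operatorname{span}\{\ketbra{\psi_i}{i}\}$, then for each $i$ the operator $\ketbra{\beta}{i}$ admits an expansion $\sum_j c_j \ketbra{\psi_j}{j}$; reading off the column $\bra{i}$ forces $\ket{\psi_i}\propto\ket{\beta}$, and doing this for every $i$ gives collinearity. Conversely, if $\ket{\psi_i}=c_i\ket{\psi}$ (with $c_i\neq 0$ by normalisation), then $K=\ket{\psi}\otimes A^\dag$ outright, so $K$ is trivial and $\overline{C}_{0EF}(K)=0$ by the part of Theorem~\ref{thm:C-0EF-feasibility} already proved.

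For the non-trivial direction I pick two non-collinear (equivalently, linearly independent) states $\ket{\psi_0},\ket{\psi_1}$. The decisive ingredient is PBR conclusive exclusion: there exist a finite $N$ and a POVM $\{E_{\vec{x}}\}_{\vec{x}\in\{0,1\}^N}$ on $B^{\otimes N}$ with $\bra{\psi_{\vec{x}}}E_{\vec{x}}\ket{\psi_{\vec{x}}}=0$ for every $\vec{x}$, where $\ket{\psi_{\vec{x}}}:=\ket{\psi_{x_1}}\otimes\cdots\otimes\ket{\psi_{x_N}}$. Thus whenever Alice inputs the basis string $\vec{x}$ through $N$ uses of $\cN$, Bob's POVM returns some $\vec{y}\neq\vec{x}$ with certainty and so zero-error rules out the label $\vec{y}$. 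I would use this round by round: Alice holds $m\in[M]$; in round $r=1,\dots,k$ she labels the current plausible set $S_{r-1}\subseteq[M]$ by $\{0,1\}^N$-strings as evenly as possible, transmits $\ket{\psi_{\vec{x}_r(m)}}$ over $N$ channel uses, Bob's PBR outcome $\vec{y}_r$ is fed back, and every message with $\vec{x}_r(\,\cdot\,)=\vec{y}_r$ is struck to form $S_r$. Balanced labelling gives $|S_r|\leq|S_{r-1}|(1-2^{-N})+1$, hence geometric decay; stopping once $|S_k|\leq 2^b$ and using the $b$ activating noiseless bits to name the survivor yields a zero-error code transmitting $M$ messages with $n=Nk$ channel uses and $b$ activation bits. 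Choosing $k=\Theta\!\bigl(2^N(\log M - b)\bigr)$ and then sending $M\rightarrow\infty$ gives
\[
  \overline{C}_{0EF}(K)\;\geq\;\frac{\log M - b}{n}\;=\;\Omega\!\left(\frac{1}{N\,2^N}\right)\;>\;0.
\]

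The main obstacle is the PBR invocation itself: a single-copy antidistinguishing POVM on two non-orthogonal pure states is impossible, so one really needs the multi-copy construction and must certify that the requisite $N$ is finite for any linearly independent pair --- this is exactly PBR, and the least routine step. A secondary bookkeeping point is the $+1$ additive slack when $|S_{r-1}|$ is not a multiple of $2^N$ (some labels then go unused and outcomes pointing to them carry no information); it only absorbs into constant factors and does not spoil positivity of the rate.
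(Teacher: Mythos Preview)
Your proof is correct and takes a genuinely different route from the paper's. The paper, rather than invoking PBR, builds a three-letter construction directly: from two non-collinear $\ket{\psi_0},\ket{\psi_1}$ it forms the product states $\ket{u_a}=\ket{\psi_0\psi_0\psi_1}$, $\ket{u_b}=\ket{\psi_0\psi_1\psi_0}$, $\ket{u_c}=\ket{\psi_1\psi_0\psi_0}$, which have equal pairwise inner products $\epsilon=|\langle\psi_0|\psi_1\rangle|^2$, tensors $n$ copies to drive the inner products down to $\epsilon^n\leq\frac12$, and then applies the Chefles--Jozsa--Winter pure-state transformation criterion~\cite{CJW-pure-trans} to map the triple deterministically to $\frac{1}{\sqrt 2}(\ket{1}+\ket{2})$, $\frac{1}{\sqrt 2}(\ket{2}+\ket{0})$, $\frac{1}{\sqrt 2}(\ket{0}+\ket{1})$. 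Measuring the computational basis then yields a classical $3\times 3$ channel with exactly one zero in each row and column, so $\alpha^*=\frac32$ and $\overline{C}_{0EF}(K)\geq\frac{1}{3n}\log\frac{3}{2}$.

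The paper's argument is thus more elementary and self-contained, needing only CJW and an explicit small classical channel, whereas you import PBR as a black box. On the other hand, your template ``conclusive exclusion on $N$ copies $+$ Shannon-style list pruning via feedback'' is precisely the mechanism the paper develops later for general mixed-state cq-channels (Theorem~\ref{exclude-discrimination}), with Proposition~\ref{key-lemma-2} furnishing the requisite generalization of PBR. So your route, while less self-contained for this special case, anticipates the general machinery. One small quibble: the ``$+1$'' slack in your recursion is just the floor-function error and has nothing to do with unused labels; the genuine stall when $|S_{r-1}|<2^N$ is handled separately by your choice to stop at $|S_k|\leq 2^b$ with $b\geq N$, which you do implicitly but could state more explicitly.
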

\begin{proof}
If $K$ is trivial, then the above proof of the sufficiency of triviality
in Theorem~\ref{thm:C-0EF-feasibility} shows $\overline{C}_{0EF}(K)=0$.

Conversely, if $K$ is non-trivial, then there are two output vectors,
denoted $\ket{\psi_0}$ and $\ket{\psi_1}$, 
that are not collinear, and we shall simply
restrict the channel to the corresponding inputs $0$ and $1$. 
I.e., we focus only on
$K' = \operatorname{span}\{ \ketbra{\psi_0}{0}, \ketbra{\psi_1}{1} \}$ ,
and the corresponding channel
\[
  {\cal N}'(\rho) = \ketbra{\psi_0}{0} \rho \ketbra{0}{\psi_0}
                    + \ketbra{\psi_1}{1} \rho \ketbra{1}{\psi_1}.
\]
Consider using it three times, inputting only the code words
$001$, $010$ and $100$. This gives rise to output states
\begin{align*}
  \ket{u_a} &= \ket{\psi_0}\ket{\psi_0}\ket{\psi_1}, \\
  \ket{u_b} &= \ket{\psi_0}\ket{\psi_1}\ket{\psi_0}, \\
  \ket{u_c} &= \ket{\psi_1}\ket{\psi_0}\ket{\psi_0},
\end{align*}
which have the property that their pairwise inner products are all equal:
$\langle u_x \ket{u_y} = |\langle \psi_0 \ket{\psi_1} |^2 =: \epsilon$.
By using the channel $3n$ times, Alice can prepare the states
\[
  \ket{t_x} = \ket{u_x}^{\otimes n} \quad (x=a,b,c),
\]
whose pairwise inner products are all equal and indeed $\epsilon^n$,
i.e.~arbitrarily close to $0$. Now, if $n$ is large enough (so that
$\epsilon^n \leq \frac{1}{2}$), there is a cptp map that Bob can apply to
transform
\begin{align*}
  \ket{t_a} &\longmapsto \frac{1}{\sqrt{2}}(\ket{1}+\ket{2}), \\
  \ket{t_b} &\longmapsto \frac{1}{\sqrt{2}}(\ket{2}+\ket{0}), \\
  \ket{t_c} &\longmapsto \frac{1}{\sqrt{2}}(\ket{0}+\ket{1}),
\end{align*}
(This follows from well known results on pure-state transformations,
see e.g.~\cite{CJW-pure-trans}.)
By now it may be clear where this is going: Bob measures the computational
basis and overall we obtain a classical channel
$P: \{a,b,c\} \rightarrow \{0,1,2\}$
with exactly one $0$-entry in each row and column:
\[
  P(0|a) = P(1|b) = P(2|c) = 0,
\]
which has zero-error capacity $0$, but assisted by feedback and a
finite number of activating noiseless bits, 
it is $\log\frac{3}{2}$~\cite{Shannon56}.
We conclude that $\overline{C}_{0EF}({\cal N}) \geq \frac{1}{3n}\log\frac{3}{2} > 0$.
\end{proof}

\subsection{Mixed state cq-channel}
\label{subsec:mixed-cq-ex}
To generalize the previous treatment to mixed states, let us
first look at a specific simple example:
Let $\ket{\psi_i}$ ($i=0,1,2$) be three mutually distinct 
but non-orthogonal states in $\CC^3$,
and define a cq-channel $\cN$ with three inputs $i=0,1,2$, mapping
\begin{align}
  0 &\longmapsto \frac{1}{2}\psi_1 + \frac{1}{2}\psi_2, \nonumber\\
  \label{eq:mixed-cq}
  1 &\longmapsto \frac{1}{2}\psi_0 + \frac{1}{2}\psi_2, \\
  2 &\longmapsto \frac{1}{2}\psi_0 + \frac{1}{2}\psi_1. \nonumber
\end{align}
Thus,
\[
  K = \operatorname{span}\{ \ketbra{\psi_1}{0},\ketbra{\psi_2}{0},
                            \ketbra{\psi_0}{1},\ketbra{\psi_2}{1},
                            \ketbra{\psi_0}{2},\ketbra{\psi_1}{2} \},
\]
and the most general channel $\cN'$ consistent with this $K$ is a 
cq-channel of the form
\[
  i \longmapsto \rho_i, \quad \rho_i \text{ supported on }
                        \operatorname{span}\bigl\{ \ket{\psi_j}: j \in \{0,1,2\}\setminus i \bigr\}.
\]

We shall show how to construct a zero-error scheme with feedback,
achieving positive rate, at least for $\ket{\psi_i}$ that are sufficiently
close to being orthogonal.
For the zero-error properties, we may as well focus on $\cN$,
which is easier to reason with. For the following, it may be helpful
to think of eq.~(\ref{eq:mixed-cq}) in a partly classical way: any
input $i$ is mapped to a random $\ket{\psi_j}$, subject to $j\neq i$,
so that for two uses of the channel, each pair $i_1i_2$ is
mapped randomly to one of four $\ket{\psi_{j_1}}\ket{\psi_{j_2}}$,
with $j_1\neq i_1$, $j_2\neq i_2$. Of course, vice versa each of these
nine vectors is reached from exactly four inputs.

Now, assuming that the pairwise inner products of the 
$\ket{\psi_i}$ are small enough, i.e.
\[
  |\bra{\psi_0} \psi_1 \rangle|, \ 
  |\bra{\psi_0} \psi_2 \rangle|, \ 
  |\bra{\psi_1} \psi_2 \rangle| \leq \epsilon,
\]
to guarantee that there is a deterministic pure state transformation (by cptp map)
$\ket{\psi_{j_1}}\ket{\psi_{j_2}} \longmapsto \ket{\varphi_{j_1j_2}}$~\cite{CJW-pure-trans},
where
\[
  \ket{\varphi_{j_1j_2}} = \frac{1}{\sqrt{8}} \sum_{j_1j_2 \in I \subset \{0,1,2\}^2 \atop |I|=2} \ket{I}
                         \in \CC^{36}.
\]
On these states, Bob performs a measurement in the computational basis
of the $\ket{I}$, and we get an effective classical channel
mapping $i_1i_2 \in \{0,1,2\}^2$ randomly to some
$\{j_1j_2, k_1k_2\} = I \subset \{0,1,2\}^2$, subject to the constraint
\[
  (j_1\neq i_1 \text{ \& } j_2\neq i_2) \text{ or }
  (k_1\neq i_1 \text{ \& } k_2\neq i_2),
\]
which means that each $I$ is reached from at most eight out of the nine
pairs $i_1i_2$.
In fact, the observation of $I = \{ j_1j_2,k_1k_2\}$ excludes at least 
two out of nine input symbols, namely $j_1k_2$ and $k_1j_2$, 
meaning that this classical channel has zero-error capacity (plus feedback 
plus a finite number of noiseless bits) of 
$\geq \log \frac{9}{7}$. In conclusion, we achieve for $\cN$, and hence for 
any $\cN'$ with $\cK(\cN') < K$, a rate of $\geq \frac{1}{2}\log\frac{9}{7} > 0$.
\qed

\subsection{General cq-channels}
\label{subsec:general-cq}
The above examples rely on measuring the output states $\rho_i$ of the
cq-channel $\mathcal{N}$ by a POVM $(M_j)$ such that the resulting 
classical(!) channel $N:i \rightarrow j$ with $N(j|i) = \tr \rho_i M_j$
has an equivocation graph $\Gamma$ with $\alpha^*(\Gamma) > 1$, because then 
$\overline{C}_{0EF}(K) \geq \overline{C}_{0F}(\Gamma) = \log \alpha^*(\Gamma) > 0$.
For this, cf.~eq.~(\ref{eq:fractional-packing-n}), it is necessary and
sufficient that each outcome $j$ excludes at least one input $i$,
i.e.~$N(j|i) = \tr \rho_i M_j = 0$, or equivalently $\rho_i \perp M_j$.
A POVM $(M_j)$ with this property is said to ``conclusively exclude''
the set $\{\rho_i\}$ of states~\cite{PBR,BJOP}.
It is clearly only a property of the support projections $P_i$ of
$\rho_i$, and w.l.o.g.~the POVM is indexed by the same $i$'s, i.e.~$(R_i)$
such that $P_i R_i = 0$ for all $i$, as well as $R_i \geq 0$ and
$\sum_i R_i = \1$. 

Our approach in the following will be to characterize
when a set $\{\rho_i\}$ of states, or one of its tensor powers
$\{\rho_i\}^{\ox n} = \{ \rho_{\underline{i}} = \rho_{i_1}\ox\cdots\ox\rho_{i_n} \}$,
can be conclusively excluded.
For instance, Pusey, Barrett and Rudolph~\cite{PBR} showed that for
any two linearly independent pure states $\ket{\psi_0}$ and $\ket{\psi_1}$, 
it is always possible to find an integer $n$ and a $2^n$-outcome POVM 
$\bigl( R_{\underline{i}}: \underline{i}\in \{0,1\}^n \bigr)$ such that 
\[
  \tr R_{\underline{i}} \proj{\psi_{\underline{i}}} = 0,
  \quad 
  \ket{\psi_{\underline{i}}} = \ket{\psi_{i_1}}\ox \ket{\psi_{i_2}}\ox \cdots \ox \ket{\psi_{i_n}}.
\]
I.e.~we can design a quantum measurement that can conclusively exclude the 
$n$-fold states $\ket{\psi_{\underline{i}}}$ with $n$-bit strings 
$\underline{i}=i_1\ldots i_n$ as outcomes, even when $\ket{\psi_0}$ and 
$\ket{\psi_1}$ are not orthogonal. 

We will employ the powerful techniques developed in the proof
of~\cite[Prop.~14]{DW:ns-ass}, 
allowing us to show a far-reaching generalization of the 
Pusey/Barrett/Rudolph result~\cite{PBR}. The version we need
can be stated as follows; it is adapted to a cq-channel with
$a$-dimensional input space $A$ and output states $\rho_i$ ($i=1,\ldots,a$),
whose support projectors are $P_i$ and supports $K_i$, so that
the non-commutative graph is 
\[
K = \sum_i K_i \ox \bra{i}:={\rm span}\{\ketbra{\psi_i}{i}: \ket{\psi_i}\in K_i, i=1,\cdots, a\}.
\]

\begin{proposition}
  \label{key-lemma-2}
  Let $(P_i)_{i=1}^a$ be projectors on a Hilbert space $B$,
  with a transitive group action by unitary conjugation on the $P_i$,
  i.e.~we have a finite group $G$ acting transitively on the 
  labels $i$, and a unitary representation $U^g$ such that 
  $P_{i^g} = (U^g)^\dagger P_i U^g$ for $g \in G$.
  
  Consider the isotypical decomposition of $U^g$,
  \[
    B = \bigoplus_\lambda \cQ_\lambda \ox \cR_\lambda
  \]
  into irreps $\cQ_\lambda$ of $U^g$, with multiplicity spaces $\cR_\lambda$
  (cf.~\cite{FH1991}, see also~\cite{Harrow2005,Christandl2006}).
  Denote the number of
  terms $\lambda$ by $L$, and the largest occurring multiplicity by
  $M = \max_\lambda |\cR_\lambda|$. If now
  \[
    \frac{a}{\left\|\sum_i P_i\right\|_\infty} > 16 L^6 M^9,
  \]
  then there exists a POVM $(R_i)$ with $P_i R_i = 0$ for all $i$.
  In other words, any set $\{\rho_i\}$ with $\operatorname{supp} \rho_i < K_i$
  can be conclusively excluded.
\end{proposition}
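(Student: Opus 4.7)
The plan is to construct the POVM $(R_i)$ explicitly, using the transitive group action on the $P_i$. I would adopt the symmetric ansatz
\[
  R_i := (U^{g_i})^\dagger R_1 U^{g_i},
\]
where $g_i \in G$ is any element satisfying $1^{g_i}=i$ and $R_1 \geq 0$ is an operator supported on $\ker P_1$ and invariant under the stabilizer $\operatorname{Stab}(1)$. The stabilizer-invariance makes the $g_i$ choice immaterial, and one checks that $P_i R_i = (U^{g_i})^\dagger (P_1 R_1) U^{g_i} = 0$ automatically. The problem reduces to finding such an $R_1$ so that $\sum_i R_i = \1$. A counting argument based on transitivity gives $\sum_i R_i = a\,\overline{R_1}$, where $\overline{R_1} := \frac{1}{|G|}\sum_{g\in G}(U^g)^\dagger R_1 U^g$ lies in the commutant $\bigoplus_\lambda \1_{\cQ_\lambda}\otimes \operatorname{End}(\cR_\lambda)$ by Schur's lemma. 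Normalization thus becomes the requirement that $\overline{R_1}$ be the block-scalar $\frac{1}{a}\1_B$.

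The natural starting candidate is the rescaled complement $R_1^{(0)} := \frac{1}{a-q} P_1^\perp$, whose $G$-average computes to $\overline{R_1^{(0)}} = \frac{1}{a-q}\bigl(\1 - \tfrac{1}{a}Q\bigr)$ with $Q = \sum_i P_i$ and $0 \leq Q \leq q\1$. In the multiplicity-free case each block of $Q$ is already a scalar, so $\overline{R_1^{(0)}}$ is automatically proportional to $\1$ and $R_1 = R_1^{(0)}$ works. In general $\overline{R_1^{(0)}}$ differs from $\frac{1}{a}\1$ by an operator $D$ in the commutant, nontrivial on multiplicity spaces. I would then correct by writing $R_1 = R_1^{(0)} + \Delta$, with $\Delta = P_1^\perp \Xi P_1^\perp$ supported in $\ker P_1$ and $\operatorname{Stab}(1)$-invariant, and $\Xi$ chosen so that $\overline{\Delta} = -D$. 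This is a finite-dimensional linear problem inside the commutant, whose dimension is bounded by $\sum_\lambda |\cR_\lambda|^2 \leq L M^2$.

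The hard part---and the source of the combinatorial exponents in $16L^6 M^9$---is to solve this linear system with a norm estimate that preserves positivity $R_1 \geq 0$. I would lower-bound the smallest positive singular value of the linear map $\Xi \mapsto \overline{P_1^\perp \Xi P_1^\perp}$, restricted to the $\operatorname{Stab}(1)$-invariant operators supported on $\ker P_1$ and targeting the commutant, by a quantity polynomial in $1/L$ and $1/M$; the bookkeeping involves repeated passage between operator and Hilbert--Schmidt norms on commutant blocks of dimension up to $M^2$, and summing inequality losses across the $L$ blocks. The hypothesis $\frac{a}{q} > 16 L^6 M^9$ is arranged exactly so as to yield $\|\Delta\|_\infty \leq \frac{1}{2}\|R_1^{(0)}\|_\infty$, whence $R_1 \geq \frac{1}{2} R_1^{(0)} \geq 0$. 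With $R_1$ in hand, the symmetric ansatz produces a genuine POVM $(R_i)$ with $P_i R_i = 0$, and any states $\rho_i$ with $\operatorname{supp}\rho_i \subset \operatorname{range} P_i = K_i$ are conclusively excluded, as desired. This conditioning estimate on the averaging map, adapted from the techniques of~\cite[Prop.~14]{DW:ns-ass}, is the main obstacle; everything else in the argument is bookkeeping on the symmetric ansatz and Schur's lemma.
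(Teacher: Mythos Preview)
Your plan is essentially the paper's own proof: symmetric ansatz $R_i = (U^{g_i})^\dagger R_1 U^{g_i}$, reduction via Schur's lemma to a finite linear system in the commutant, and a perturbation argument showing that the compression-then-average map $\Xi \mapsto \overline{P_1^\perp \Xi P_1^\perp}$ is close enough to the identity (in a suitable basis) to be inverted with norm control. The paper implements this by writing $R_0 = \frac{1}{a}(\1-P_0)X(\1-P_0)$, choosing an explicit orthogonal basis $C_{\lambda\mu}$ of the commutant, forming the transfer matrix $T$ between $C_{\lambda\mu}$ and $D_{\lambda\mu} = (\1-P_0)C_{\lambda\mu}(\1-P_0)$, and showing $\|T-\1\|_\infty \leq 1/\beta$ via the trace-norm lemma $\|P_0 C_{\lambda\mu}\|_1 \leq \sqrt{\tr P_0 |C_{\lambda\mu}|}\sqrt{\|C_{\lambda\mu}\|_1}$ together with $\sum_i s^* P_i \leq \1$. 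Your ``smallest singular value'' language describes exactly this, and your acknowledgement that the conditioning estimate is the heart of the matter (adapted from \cite[Prop.~14]{DW:ns-ass}) is accurate.

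One small correction: your claim that the multiplicity-free case is already solved by $R_1 = R_1^{(0)}$ is not right. Even with all $|\cR_\lambda|=1$, the invariant operator $Q = \sum_i P_i$ is only \emph{block}-scalar, with possibly different eigenvalues $q_\lambda$ on different irreps, so $\overline{R_1^{(0)}} = \frac{1}{a-q}\bigl(\1 - \tfrac{1}{a}Q\bigr)$ has block-eigenvalues $\frac{a-q_\lambda}{a(a-q)}$, which equal $\frac{1}{a}$ only if all $q_\lambda = q$. A correction $\Delta$ is still needed in general. This does not affect your overall argument, since your $\Delta$ absorbs this anyway; the paper sidesteps the issue by taking the starting point $\frac{1}{a}P_0^\perp$ (i.e.\ $X=\1$) rather than $\frac{1}{a-q}P_1^\perp$, which makes the target deviation cleaner to bound.
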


\medskip
Before we prove it, we use it to derive the following general result. To state
it, we need some notation: For a set $\cE=\{\rho_i\}_{i=1}^a$ of states, let
\[
  \cE^{\ox n} = \bigl\{ \rho_{\underline{i}} = \rho_{i_1}\ox\cdots\ox\rho_{i_n}
                             : \underline{i} = i_1\ldots i_n \in [a]^n \bigr\}.
\]
The strings $\underline{i} = i_1\ldots i_n$ are classified according
to \emph{type} $\tau$~\cite{Csiszar:method-of-types}, 
which is the empirical distribution of the letters
$i_t$, $t=1,\ldots,n$. There are only ${n+a-1 \choose a-1} \leq (n+1)^a$
many different types. The subset of $\cE^{\ox n}$ corresponding to type $\tau$
is denoted
\[
  \cE^{(n)}_\tau = \bigl\{ \rho_{\underline{i}} = \rho_{i_1}\ox\cdots\ox\rho_{i_n}
                             : \underline{i} = i_1\ldots i_n \text{ has type } \tau \bigr\}.
\]

We also recall the definition of the \emph{semidefinite packing number}~\cite{DW:ns-ass}
of a non-commutative bipartite graph $K$ with support projection $P_{AB}$
onto the Choi-Jamio\l{}kowski range $(\1 \ox K)\ket{\Phi}$, where
$\ket{\Phi} = \frac{1}{\sqrt{|A|}}\sum_{i=1}^{|A|} \ket{i}\ket{i}$
is the maximally entangled state:
\begin{equation}\begin{split}
  \label{eq:Aram}
  \Aram(K) &= \max \tr S_A \ \text{ s.t. }\  0 \leq S_A,\ \tr_A P_{AB}(S_A\ox\1_B) \leq \1_B \\
           &= \min \tr T_B \ \text{ s.t. }\  0 \leq T_B,\ \tr_B P_{AB}(\1_A\ox T_B) \geq \1_A.
\end{split}\end{equation}
For the cq-channel case, $P_{AB}=\sum_i \proj{i}^A \ox P_i^B$, this simplifies to
\begin{equation}
  \label{eq:Aram-cq}
  \Aram(K) := \max \sum_i s_i \ \text{ s.t. }\ 0\leq s_i,\ \sum_i s_i P_i\leq \1.
\end{equation}
In particular,   for the cq-graph $K$ induced by projections $\{P_i\}$ in Proposition \ref{key-lemma-2}, we have
\[
A(K)= \frac{a}{\left\|\sum_i P_i\right\|_\infty}.
\]

\begin{theorem}
  \label{exclude-discrimination}
  Let $\cE=\{\rho_i\}_{i=1}^a$ be a finite set of quantum states with supports
  $K_i = \operatorname{supp} \rho_i$, and let $K$ 
  be the associated non-commutative bipartite graph
  $\sum_i K_i \otimes \bra{i}$. 
  Then the following are equivalent:
  \renewcommand{\theenumi}{\roman{enumi}}
  \begin{enumerate}
    \item $\overline{C}_{0EF}(K) > 0$;
    \item $K$ is nontrivial;
    \item $\bigcap_i K_i = 0$;
    \item $\left\|\sum_i P_i\right\|_\infty < a$;
    \item $\Aram(K)>1$;
    \item For sufficiently large $n$ and a suitable type $\tau$, the
      set $\cE^{(n)}_\tau$ can be conclusively excluded.
  \end{enumerate}
  \renewcommand{\theenumi}{\arabic{enumi}}
\end{theorem}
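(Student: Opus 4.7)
The plan is to prove the six conditions equivalent by first establishing the chain (i) $\Leftrightarrow$ (ii) $\Leftrightarrow$ (iii) $\Leftrightarrow$ (iv) $\Leftrightarrow$ (v) via direct arguments, then closing the loop by (vi) $\Rightarrow$ (i) (a classical feedback-coding reduction) and (iv) $\Rightarrow$ (vi) (a Schur-Weyl application of Proposition \ref{key-lemma-2}). The equivalence (i) $\Leftrightarrow$ (ii) is already Theorem \ref{thm:C-0EF-feasibility}. For (ii) $\Leftrightarrow$ (iii), a subspace $\ket{\beta}\ox A^\dag$ is contained in $K = \sum_i K_i \ox \bra{i}$ precisely when $\ketbra{\beta}{i} \in K$ for every $i$, i.e.\ when $\ket{\beta} \in \bigcap_i K_i$. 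For (iii) $\Leftrightarrow$ (iv), the equality $\bra{v}\bigl(\sum_i P_i\bigr)\ket{v} = a$ holds for some unit $\ket{v}$ iff $\bra{v}P_i\ket{v}=1$ for every $i$ (each term being at most $1$), iff $\ket{v} \in \bigcap_i K_i$; compactness produces the strict inequality. Finally for (iv) $\Leftrightarrow$ (v), the uniform choice $s_i := 1/\|\sum_j P_j\|_\infty$ is feasible in (\ref{eq:Aram-cq}) and yields $\Aram(K) \geq a/\|\sum_j P_j\|_\infty > 1$; conversely, any vector $\ket{v}\in\bigcap_i K_i$ would force $\sum_i s_i \leq 1$ for every feasible $(s_i)$ via $\bra{v}\bigl(\sum_i s_i P_i\bigr)\ket{v} = \sum_i s_i$.

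For (vi) $\Rightarrow$ (i), a POVM $(R_{\underline{i}})_{\underline{i}\in T_\tau}$ that conclusively excludes $\cE^{(n)}_\tau$, applied after $n$ uses of $\cN$ with inputs restricted to the type class $T_\tau$, induces a classical channel $N : T_\tau \to T_\tau$ with $N(\underline{i}|\underline{i}) = 0$. Its bipartite equivocation graph $\Gamma$ satisfies $\alpha^*(\Gamma) \geq |T_\tau|/(|T_\tau|-1) > 1$, witnessed by uniform weights $w_{\underline{i}} = 1/(|T_\tau|-1)$. Since quantum feedback subsumes classical feedback, Shannon's formula~(\ref{eq:feedback-Shannon}) then yields $\overline{C}_{0EF}(K) \geq \frac{1}{n}\log\alpha^*(\Gamma) > 0$.

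The core of the argument is (iv) $\Rightarrow$ (vi). Set $c := \|\sum_i P_i\|_\infty < a$ and choose a type $\tau$ as balanced as possible, so that $|T_\tau| \geq a^n/\mathrm{poly}(n)$. The symmetric group $S_n$ acts transitively on $T_\tau$ by tensor-factor permutation, with $(U^\pi)^\dag P_{\underline{i}} U^\pi = P_{\pi^{-1}(\underline{i})}$; this places us in the setting of Proposition~\ref{key-lemma-2} applied to the $|T_\tau|$ projectors $\{P_{\underline{i}}\}_{\underline{i}\in T_\tau}$ on $B^{\ox n}$. Its hypothesis reads
\[
  \frac{|T_\tau|}{\bigl\|\sum_{\underline{i}\in T_\tau} P_{\underline{i}}\bigr\|_\infty} > 16 L^6 M^9.
\]
Since $\sum_{\underline{i}\in T_\tau} P_{\underline{i}} \leq \bigl(\sum_i P_i\bigr)^{\ox n}$ in the operator order, the left-hand side is at least $(a/c)^n/\mathrm{poly}(n)$. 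On the right, Schur-Weyl duality bounds the number $L$ of irreducible $S_n$-isotypes occurring in $B^{\ox n}$ by $(n+1)^{|B|-1}$, and the maximum multiplicity $M = \max_\lambda \dim S^\lambda(B)$ is polynomial in $n$ of degree $|B|(|B|-1)/2$ by the Weyl dimension formula; hence $16 L^6 M^9$ is polynomial in $n$. Since $a/c > 1$, for large enough $n$ the exponential left-hand side dominates and Proposition~\ref{key-lemma-2} produces the desired exclusion POVM.

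The main obstacle is this last step's exponential-versus-polynomial estimate: one needs both the tensor-power bound $\sum_{\underline{i}\in T_\tau} P_{\underline{i}} \leq (\sum_i P_i)^{\ox n}$ and the Schur-Weyl polynomial control over $L$ and $M$, and one must verify that the representation-theoretic bookkeeping of Proposition~\ref{key-lemma-2} still leaves room for the exponential separation $(a/c)^n$. Everything else in the six-way equivalence is essentially routine once these pieces are in place.
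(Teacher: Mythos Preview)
Your argument is correct and follows essentially the same route as the paper's: the elementary equivalences among (ii)--(v), the reduction (vi) $\Rightarrow$ (i) via Shannon's classical feedback formula, and the application of Proposition~\ref{key-lemma-2} with Schur--Weyl polynomial bounds to obtain (vi).

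One caveat: invoking the \emph{full} biconditional of Theorem~\ref{thm:C-0EF-feasibility} for (i) $\Leftrightarrow$ (ii) is circular in the paper's logical order, since the direction ``nontrivial $\Rightarrow$ positive capacity'' is precisely what this theorem (together with the general-channel reduction in Subsection~\ref{subsec:general-channels}) is meant to establish. Only (i) $\Rightarrow$ (ii) is available at this point --- and that is all your cycle (ii) $\Rightarrow$ (iii) $\Rightarrow$ (iv) $\Rightarrow$ (vi) $\Rightarrow$ (i) actually needs, so the logic is sound once you restrict the citation accordingly.

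A minor difference in the key step: you prove (iv) $\Rightarrow$ (vi) directly by choosing a near-balanced type and bounding $\bigl\|\sum_{\underline{i}\in T_\tau} P_{\underline{i}}\bigr\|_\infty \leq \bigl\|\sum_i P_i\bigr\|_\infty^n = c^n$, whereas the paper proves (v) $\Rightarrow$ (vi) via the multiplicativity $\Aram(K^{\ox n}) = \Aram(K)^n$ and then selects the type carrying the largest share of the optimal product weights. Since the paper had already noted that $\Aram$ of a transitive cq-graph equals $a/\|\sum_i P_i\|_\infty$, your bound is exactly the uniform-weight instance of theirs; both deliver the same exponential-versus-polynomial separation needed to trigger Proposition~\ref{key-lemma-2}.
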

\begin{proof}
\emph{i. $\Rightarrow$ ii.} has been shown in the first part (necessity)
  of Theorem~\ref{thm:C-0EF-feasibility}, at the start of this section.

\emph{ii. $\Leftrightarrow$ iii.} $\ket{\beta}\ox A^\dag < K = \sum_i K_i \ox \bra{i}$
  if and only if $\ket{\beta} \in \bigcap_i K_i$.

\emph{iii. $\Leftrightarrow$ iv.} 
  $\left\| \sum_i P_i \right\|_\infty \leq \sum_i \| P_i \|_\infty = a$ with
  equality if and only if there is a common eigenvector $\ket{\beta}$
  with eigenvalue $1$ for all of the $P_i$, i.e.~$\ket{\beta} \in \bigcap_i K_i$.

\emph{iv. $\Rightarrow$ v.} We check that 
  $s_i = \frac{1}{\left\| \sum_i P_i \right\|_\infty}$ is feasible for $\Aram(K)$;
  indeed,
  \[
    \sum_i s_i P_i = \frac{1}{\left\| \sum_i P_i \right\|_\infty} \sum_i P_i \leq \1,
  \]
  thus $\Aram(K) \geq \frac{a}{\left\| \sum_i P_i \right\|_\infty} > 1$.

\emph{v. $\Rightarrow$ vi.} Note that the non-commutative bipartite graph
  corresponding to $\cE^{\ox n}$ is $K^{\ox n}$. Let's denote the graph
  of $\cE^{(n)}_\tau$ by $K^{(n)}_\tau$.
  In~\cite{DW:ns-ass} it is shown that
  $\Aram(K)$ is multiplicative, $\Aram(K^{\ox n}) = \Aram(K)^n$; indeed,
  for an optimal assignment of weights $s_i$ feasible for $\Aram(K)$,
  $s_{\underline{i}} = s_{i_1}\cdots s_{i_n}$ is feasible (and optimal)
  for $\Aram(K^{\ox n})$. 
  Hence, there exists a type $\tau$ such that 
  \begin{equation}
    \label{eq:Aram-exponential}
    \Aram(K^{(n)}_\tau) \geq \sum_{\underline{i}\in\tau} s_{\underline{i}} 
                        \geq \frac{1}{\text{poly}(n)} \Aram(K)^n.
  \end{equation}
  On the other hand, the symmetric group $S_n$ acts transitively by permutation
  on the strings of type $\tau$, and equivalently by permutation of the
  $n$ tensor factors of $B^n$. This representation is well known to have
  only $L \leq \text{poly}(n)$ irreps, each of which has multiplicity
  $M \leq \text{poly}(n)$. Thus, from eq.~(\ref{eq:Aram-exponential}), we
  deduce that for sufficiently large $n$, $\Aram(K^{(n)}_\tau) \geq 16L^6M^9$,
  which by Proposition~\ref{key-lemma-2} implies that the set
  $\cE^{(n)}_\tau$ can be conclusively excluded.

\emph{vi. $\Rightarrow$ i.} By sending signals 
  $\underline{i}=i_1\ldots i_n \in \tau$ and measuring the output states 
  $\rho_{\underline{i}}$ with a conclusively
  excluding POVM $(M_{\underline{i}}:\underline{i}\in\tau)$, we simulate
  a classical channel whose bipartite equivocation graph $\Gamma$ has 
  $\alpha^*(\Gamma) > 1$,
  hence $\overline{C}_{0EF}(K) \geq \frac1n \overline{C}_{0F}(\Gamma) > 0$.
\end{proof}

\medskip
\begin{proof}{\bf (of Proposition~\ref{key-lemma-2})}
Assume that we have a feasible $s_i=s^*$ ($i=1,\ldots,a$)
for $\Aram(K)$ such that $A(K)\geq \sum_i s_i=s^*a \geq 16 L^6 M^{9}$. 
Concretely, this means that $\sum_i s_i P_i = s^* \sum_i P_i \leq \1$.

We will show that a desired POVM $(R_i)$ 
can be found, such that $R_{i^g} = (U^{g})^\dag R_i U^g$ for all $i$
and $g$. 
The problem of finding the POVM $(R_i)$ then becomes equivalent to 
finding $0\leq R_0 \leq \1-P_0$ such that
\begin{equation}
  \label{eq:schur-idea}
  \frac{1}{a}\sum_{i=1}^a R_i = \frac{1}{|G|}\sum_{g\in G} (U^g)^\dag R_0 U^g = \frac{1}{a}\1.
\end{equation}
Schur's Lemma~\cite{FH1991} tells us
\[
  \frac{1}{|G|} \sum_g (U^g)^\dagger R_0 U^g 
         = \frac{1}{a} \sum_\lambda Q_\lambda \ox \zeta_\lambda,
\]
where $Q_\lambda$ is the projection onto the irrep $\cQ_\lambda$,
$\zeta_\lambda$ is a semidefinite operator on $\cR_\lambda$.
The equality constraints (\ref{eq:schur-idea}) on $R_0$ are equivalent 
to $\zeta_\lambda = \Pi_\lambda$, the projection onto $\cR_\lambda$,
for all $\lambda$. 

Now, for each $\lambda$ choose an orthogonal basis $\{Z^{(\lambda)}_\mu\}$
of Hermitians over $\cR_\lambda$, 
with $Z^{(\lambda)}_0 = \frac{1}{\tr\Pi_\lambda}\Pi_\lambda$
and $\| Z^{(\lambda)}_\mu \|_2 = 1$ for $\mu \neq 0$. Then the operators
$\frac{1}{\tr Q_\lambda}Q_\lambda \ox Z^{(\lambda)}_\mu$ form a basis of
the $U^g$-invariant operators, hence our constraints on $R_0$ can be rephrased as
\begin{equation}
  \label{eq:feasibility-conds}
  0 \leq R_0 \leq \1-P_0,\quad 
  \tr R_0\!\left (\frac{Q_\lambda}{\tr Q_\lambda}\ox Z^{(\lambda)}_\mu \right)
                                                   = \frac{1}{a}\delta_{\mu 0}\ \forall~\lambda, \mu.
\end{equation}
Notice that here, the semidefinite constraints on $R_0$ leave quite some
room, whereas we have ``only'' $LM^2$ linear conditions to satisfy.
Given $s^*$ satisfying the constraint of $\Aram(K)$, our strategy now will be
to show that we can construct a $0 \leq R_0 \leq \frac{2}{a}(\1-P_0)$ such that 
Eqs. (\ref{eq:feasibility-conds}) hold.

In detail, introduce a new variable $X\geq 0$, with
\[
  R_0 = \frac{1}{a}(\1-P_0)X(\1-P_0),
\]
which makes sure that $R_0$ is automatically supported on the complement of $P_0$.
Now rewrite the conditions (\ref{eq:feasibility-conds})
in terms of $X$, introducing the notation
\[
  C_{\lambda\mu} = \frac{1}{\tr Q_\lambda}Q_\lambda \ox Z^{(\lambda)}_\mu,
  \quad
  D_{\lambda\mu} = (\1-P_0) C_{\lambda\mu} (\1-P_0).
\]
This gives the new form of the constraints as
\begin{equation} 
  \label{eq:X}
  \tr X D_{\lambda\mu} = \delta_{\mu 0}.
\end{equation}

Our goal will be to find a ``nice'' dual set $\{\widehat{D}_{\lambda\mu}\}$ to the 
$\{D_{\lambda\mu}\}$, 
i.e.~$\tr D_{\lambda\mu} \widehat{D}_{\lambda'\mu'} = \delta_{\lambda\lambda'}\delta_{\mu\mu'}$,
with which we can write a solution
$X = \sum_{\lambda\mu} \delta_{\mu 0} \widehat{D}_{\lambda\mu}
   = \sum_\lambda \widehat{D}_{\lambda 0}$.
To this end, we construct first the dual set $\widehat{C}_{\lambda\mu}$ of the 
$\{C_{\lambda_\mu}\}$, which is easy:
\[
  \widehat{C}_{\lambda\mu} = Q_\lambda \ox \widehat{Z}^{(\lambda)}_\mu
                           = \begin{cases}
                               Q_\lambda \ox \Pi_\lambda       & \text{ for } \mu = 0, \\
                               Q_\lambda \ox Z^{(\lambda)}_\mu & \text{ for } \mu\neq 0,
                             \end{cases}
\]
so that indeed 
$\tr C_{\lambda\mu} \widehat{C}_{\lambda'\mu'} = \delta_{\lambda\lambda'}\delta_{\mu\mu'}$.
Now, consider the $LM^2\times LM^2$-matrix $T$,
\[\begin{split}
  T_{\lambda\mu,\lambda'\mu'} &= \tr D_{\lambda\mu} \widehat{C}_{\lambda'\mu'}             \\
                &= \tr (\1-P_0)C_{\lambda\mu}(\1-P_0) \widehat{C}_{\lambda'\mu'}            \\
                &= \delta_{\lambda\lambda'}\delta_{\mu\mu'} - \Delta_{\lambda\mu,\lambda'\mu'},
\end{split}\]
with the deviation
\[
  \Delta_{\lambda\mu,\lambda'\mu'} = \tr P_0 C_{\lambda\mu} (\1-P_0) \widehat{C}_{\lambda'\mu'}
                                     + \tr C_{\lambda\mu} P_0 \widehat{C}_{\lambda'\mu'}.
\]
Here, 
\[\begin{split}
  |\Delta_{\lambda\mu,\lambda'\mu'}| 
            &\leq 2 \| P_0 C_{\lambda\mu} \|_1 \|\widehat{C}_{\lambda'\mu'}\|_\infty \\
            &\leq 2 \| P_0 C_{\lambda\mu} \|_1 
             =    2 \bigl\| P_0 |C_{\lambda\mu}| \bigr\|_1                           \\
            &\leq 2\sqrt{\tr P_0 |C_{\lambda\mu}|}\sqrt{\| C_{\lambda\mu}\|_1},
\end{split}\]
using $\|\widehat{C}_{\lambda'\mu'}\|_\infty \leq 1$, the unitary invariance of
the trace norm, and Lemma~\ref{lemma:tracenorm-trace} stated below. Since
$|C_{\lambda\mu}| = \frac{1}{\tr Q_\lambda} Q_\lambda \ox |Z^{(\lambda)}_\mu|$
is invariant under the action of $U^g$, we have 
$\tr P_0 |C_{\lambda\mu}| = \tr P_i |C_{\lambda\mu}|$ for all $i$, and using
$\sum_i s^* P_i \leq \1$ we get
\begin{equation}
  \label{eq:Delta-bound-2}
  |\Delta_{\lambda\mu,\lambda'\mu'}| \leq 2\sqrt{\frac{1}{s^*a}\|C_{\lambda\mu}\|_1^2}
                                     \leq 2\sqrt{M} (s^*a)^{-1/2}.
\end{equation}
With this and introducing a new parameter $\beta$ we get that
\begin{equation}\begin{split}\label{T-bound-2}
  \| T-\1 \|_\infty \leq \| T-\1 \|_2 
                    &=    \sqrt{ \sum_{\lambda\mu\lambda'\mu'} |\Delta_{\lambda\mu,\lambda'\mu'}|^2} \\
                    &\leq \sqrt{ L^2M^4 4 M (s^*a)^{-1}}
                     \leq \frac{1}{\beta},
\end{split}\end{equation}
where $s^*a \geq 4\beta^2 L^2 M^{5}$. Assuming $\beta \geq 2$ (which will be the
case with our later choice), we thus know that $T$ is invertible; in fact,
we have $T = \1 - \Delta$ with $\| \Delta \|_\infty \leq \frac{1}{\beta} \leq \frac12$, hence
$T^{-1} = \sum_{k=0}^\infty \Delta^k$ and so
\[
  \left\| T^{-1} - \1 \right\|_{\infty} 
                           =    \left\| \sum_{k=1}^\infty \Delta^k \right\|_{\infty}
                           \leq \sum_{k=1}^\infty \| \Delta \|_{\infty}^k
                           \leq    \frac{1}{\beta-1}
                           \leq \frac{2}{\beta}.
\]
I.e., writing $T^{-1} = \1 + \widetilde{\Delta}_{\lambda\mu,\lambda'\mu'}$ we get
\begin{equation}
  \label{eq:tilde-Delta-bound-2}
  |\widetilde{\Delta}_{\lambda\mu,\lambda'\mu'}| \leq \| \widetilde{\Delta} \|_\infty \leq \frac{2}{\beta}.
\end{equation}
The invertibility of $T$ implies that there is a dual set to $\{D_{\lambda\mu}\}$ in 
$\operatorname{span}\{\widehat{C}_{\lambda\mu}\}$. Indeed, from the definition of 
$T_{\lambda\mu,\lambda'\mu'}$ and the dual sets,
\begin{align*}
  \widehat{C}_{\lambda'\mu'} 
           &= \sum_{\lambda\mu} T_{\lambda\mu,\lambda'\mu'} \widehat{D}_{\lambda\mu},
              \text{ which can be rewritten as} \\
  \widehat{D}_{\lambda\mu} 
           &= \sum_{\lambda'\mu'} (T^{-1})_{\lambda'\mu',\lambda\mu} \widehat{C}_{\lambda'\mu'}.
\end{align*}

Now we can finally write down our candidate solution to Eq.~(\ref{eq:X}):
\[\begin{split}
  X &= \sum_{\lambda\mu} \delta_{\mu 0} \widehat{D}_{\lambda\mu} \\
    &= \sum_{\lambda\mu} \delta_{\mu 0} 
               \sum_{\lambda'\mu'} (T^{-1})_{\lambda'\mu',\lambda\mu} \widehat{C}_{\lambda'\mu'} \\
    &= \sum_{\lambda} \widehat{C}_{\lambda 0}                                                    
        +\sum_{\lambda\lambda'\mu'} \widetilde{\Delta}_{\lambda'\mu',\lambda 0} 
                                                      \widehat{C}_{\lambda'\mu'} \\
    &= \1 + \text{Rest}.
\end{split}\]
The rest term can be bounded as follows:
\[\begin{split}
  \| \text{Rest} \|_\infty &\leq \sum_{\lambda\lambda'\mu'} \frac{2}{\beta}
                    =\frac{2}{\beta} L^2M^2 
                           \end{split}\]
using Eq.~(\ref{eq:tilde-Delta-bound-2}). Thus we find
$\| \text{Rest} \|_\infty \leq 1$ if $\beta \geq 2L^2 M^2$ and
$s^*a \geq 4\beta^2 L^2 M^{5} \geq 16 L^6 M^{9}$. 
In this case, we will have $0\leq X \leq 2$ and 
$R_0 := \frac{1}{a}(\1-P_0)X(\1-P_0)$ satisfies 
\[
  0 \leq R_0 \leq \frac{2}{a}(\1-P_0) \leq \1-P_0,
\]
as well as 
\[
  \frac{1}{|G|} \sum_{g\in G} (U^g)^\dag  R_0 U^g = \frac{1}{a}\1.
\]
Thus we get the desired POVM 
$\left( R_i = \frac{a}{|G|}\sum_{g \text{ s.t. } 0^g=i}(U^{g})^\dag R_0 U^g \right)$ 
such that  
\[
  \sum_i R_i = \1,\ R_i\geq 0,\ \tr P_iR_i = 0,
\]
and we are done.
\end{proof}

\begin{lemma}[Lemma 15 in~\cite{DW:ns-ass}]
\label{lemma:tracenorm-trace}
Let $\rho$ be a state and $P$ a projection in a Hilbert space $\cH$. Then,
\[
  \tr \rho P \leq \| \rho P \|_1 \leq \sqrt{\tr\rho P}.
\]
More generally, for $X\geq 0$ and a POVM element $0\leq E \leq \1$,
\[
  \phantom{===================:}
  \tr XE \leq \| XE \|_1 \leq \sqrt{\tr X}\sqrt{\tr XE}.
  \phantom{==============}\qed
\]
\end{lemma}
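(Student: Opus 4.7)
The plan is to reduce everything to the general statement for $X \geq 0$ and $0 \leq E \leq \1$, as the state/projection case is simply the specialization $X = \rho$ (so $\tr X = 1$) and $E = P$ (so $E^2 = E$). I would treat the two inequalities separately.

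For the left inequality $\tr XE \leq \|XE\|_1$, I would first note that $\tr XE = \tr X^{1/2} E X^{1/2} \geq 0$ since both $X$ and $E$ are positive semidefinite, so the absolute value is transparent. Then by the trace-norm bound $|\tr Y| = |\tr Y \cdot \1| \leq \|Y\|_1 \|\1\|_\infty = \|Y\|_1$ (a trivial instance of H\"older's inequality for Schatten norms), applied to $Y = XE$, we obtain $\tr XE \leq \|XE\|_1$.

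For the right inequality $\|XE\|_1 \leq \sqrt{\tr X}\sqrt{\tr XE}$, the key idea is to split $X = X^{1/2}\cdot X^{1/2}$ and apply the Cauchy--Schwarz form of H\"older for Schatten norms, $\|AB\|_1 \leq \|A\|_2 \|B\|_2$. Concretely,
\[
  \|XE\|_1 = \|X^{1/2}\cdot X^{1/2}E\|_1 \leq \|X^{1/2}\|_2 \,\|X^{1/2}E\|_2
           = \sqrt{\tr X}\,\sqrt{\tr EXE}.
\]
Then I would invoke $0 \leq E \leq \1$, which gives $E^2 \leq E$, and since $X \geq 0$, monotonicity of the trace against positive operators (equivalently, $\tr X(E - E^2) = \tr X^{1/2}(E-E^2)X^{1/2} \geq 0$) yields $\tr EXE = \tr XE^2 \leq \tr XE$. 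Combining these finishes the bound.

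There is no real obstacle here; the only subtlety worth flagging is the passage from $\tr XE^2$ to $\tr XE$, which requires $E \leq \1$ rather than mere positivity. The specialization to the state/projection case then follows immediately by setting $X = \rho$ and $E = P$ and noting $\tr\rho = 1$ and $P^2 = P$, so that $\sqrt{\tr X}\sqrt{\tr XE}$ collapses to $\sqrt{\tr \rho P}$.
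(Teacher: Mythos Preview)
Your proof is correct. Note that the paper does not actually prove this lemma itself; it is quoted as Lemma~15 from~\cite{DW:ns-ass} and stated with a $\qed$ in lieu of a proof, so there is no in-paper argument to compare against. Your approach---the trivial H\"older bound $|\tr Y|\leq\|Y\|_1$ for the left inequality, and the split $XE = X^{1/2}\cdot X^{1/2}E$ followed by the Cauchy--Schwarz/H\"older inequality $\|AB\|_1\leq\|A\|_2\|B\|_2$ together with $E^2\leq E$ for the right inequality---is the standard one and is essentially what one finds in the cited reference as well.
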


%

\bigskip
We even recover the Pusey/Barrett/Rudolph result~\cite{PBR} as a corollary:
There, $\cE=\{\ket{\psi_0},\ket{\psi_1}\}$ with (w.l.o.g.)
$\ket{\psi_{0,1}} = \alpha\ket{0}\pm\beta\ket{1}$ qubit states, 
$1 > \alpha \geq \beta > 0$. We have the unitary phase action 
of $\ZZ_2 = \{\1,Z\}$, $Z\ket{\psi_{0,1}} = \ket{\psi_{1,0}}$, and hence
on $\cE^{\ox n}$ we have a transitive action of $G = \ZZ_2^n \rtimes S_n$
(the semidirect product), the symmetric group $S_n$ acting by
permutation of the tensor factors and $\ZZ_2^n$ as $\bigotimes_{t=1}^n Z^{b_t}$.
It has $L=n$ and $M\leq n+1$~\cite{FH1991}, whereas 
\[
  \frac{a}{\left\| \sum_{\underline{i}} P_{\underline{i}} \right\|_\infty}
                        = \frac{2^n}{(2\alpha^2)^n} = \frac{1}{\alpha^{2n}}. 
\]
Hence, for large enough $n$, we have that the latter exceeds $16 L^6 M^9 = \text{poly}(n)$,
and then Proposition~\ref{key-lemma-2} above implies that $\cE^{\ox n}$ can
be conclusively excluded.
\qed

\subsection{General case}
\label{subsec:general-channels}
We shall reduce the case of a general channel to that of a cq-channel.
Indeed, recall that we allow Alice and Bob to share entanglement,
so Alice can encode information into the Bell states
\[
  \ket{\Phi_{uv}} = (\1 \ox Z^v X^u)\ket{\Phi} = (X^u Z^v \ox \1)\ket{\Phi},
\]
with the maximally entangled state 
$\ket{\Phi} = \frac{1}{\sqrt{|A|}}\sum_{i=1}^{|A|} \ket{i}\ket{i}$
and the discrete Weyl operators $X$ and $Z$ (basis and phase shift).
This effectively constructs a cq-channel (with $a=|A|$)
\begin{equation}
  \label{eq:M-channel}
  \cM:[a]^2 \ni uv \mapsto (\id\ox\cN)\proj{\Phi_{uv}} 
                           = (X^u Z^v \ox \1)\rho_{00}(Z^{-v} X^{-u} \ox \1) \in \cS(AB),
\end{equation}
with the Choi-Jamio\l{}kowski state $\rho_{00} = (\id\ox\cN)\proj{\Phi}$.
Applying Theorem~\ref{exclude-discrimination} to this channel is the key to
obtain the following result, which in turn directly implies the
reverse direction (``nontrivial $\Rightarrow$ positive capacity'')
in Theorem~\ref{thm:C-0EF-feasibility}, concluding its proof.

\begin{proposition}
  \label{prop:feasibility-equiv} 
  A non-commutative bipartite graph $K$ with support projection $P_{AB}$
  onto the Choi-Jamio\l{}kowski range $(\1 \ox K)\ket{\Phi}$ has positive activated 
  feedback assisted zero-error capacity, $\overline{C}_{0EF}(K) > 0$, 
  if and only if one of the following equivalent conditions hold:
  \renewcommand{\theenumi}{\roman{enumi}}
  \begin{enumerate}
    \item $K$ is non-trivial, i.e.~there is no constant channel $\mathcal{N}_0$
      with $\mathcal{K}(\mathcal{N}_0) < K$;
    \item There is no state $\ket{\beta}\in B$ with $\ket{\beta}\otimes A^\dag < K$;
    \item $\|P_B\|_{\infty} < |A|$;
    \item $\tr_A (\1-P_{AB})$ has full rank;
    \item $\Aram(K)>1$.
  \end{enumerate}
  \renewcommand{\theenumi}{\arabic{enumi}}
\end{proposition}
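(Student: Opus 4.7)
The plan is to establish all five conditions together with $\overline{C}_{0EF}(K)>0$ by closing a cycle of implications. The bulk of the work is the linear-algebra half (i)$\Leftrightarrow$(ii)$\Leftrightarrow$(iii)$\Leftrightarrow$(iv)$\Leftrightarrow$(v), a sequence of short calculations with the support projection $P_{AB}$; the capacity half then combines the already-proved ``trivial $\Rightarrow$ zero capacity'' direction of Theorem~\ref{thm:C-0EF-feasibility} with a single substantive step, the reduction of the general channel to a cq-channel via Bell-basis encoding, to which Theorem~\ref{exclude-discrimination} directly applies.

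For the linear-algebra equivalences I would proceed as follows. (i)$\Leftrightarrow$(ii) is essentially the definition: a constant channel $\cN_0:\rho\mapsto\sigma\tr\rho$ has Kraus span $\operatorname{supp}(\sigma)\otimes A^\dag$, so any pure vector in $\operatorname{supp}(\sigma)$ witnesses (ii). For (ii)$\Leftrightarrow$(iii), note that $\ket{\beta}\otimes A^\dag<K$ is equivalent, through Choi--Jamio\l{}kowski, to $A\otimes\ket{\beta}\subseteq(\1\otimes K)\ket{\Phi}$, i.e.\ to $\1_A\otimes\proj{\beta}\leq P_{AB}$; tracing out $A$ gives $P_B\geq|A|\proj{\beta}$, so $\|P_B\|_\infty=|A|$, and conversely $\bra{\beta}P_B\ket{\beta}=|A|$ saturates $\tr\bigl(P_{AB}(\1_A\otimes\proj{\beta})\bigr)\leq|A|$, forcing $\1_A\otimes\proj{\beta}\leq P_{AB}$. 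Then (iii)$\Leftrightarrow$(iv) is immediate from $\tr_A(\1-P_{AB})=|A|\1_B-P_B$. For (iii)$\Leftrightarrow$(v) I would use the primal ansatz $S_A=\|P_B\|_\infty^{-1}\1_A$ in~(\ref{eq:Aram}) to obtain $\Aram(K)\geq|A|/\|P_B\|_\infty$, giving (iii)$\Rightarrow$(v); while under the hypothesis $\ket{\beta}\otimes A^\dag<K$ the dual ansatz $T_B=\proj{\beta}$ is feasible (since $\1_A\otimes\proj{\beta}\leq P_{AB}$), forcing $\Aram(K)\leq 1$.

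For the main implication, I would use the cq-channel $\cM$ of eq.~(\ref{eq:M-channel}): Alice and Bob share $\ket{\Phi}$, Alice applies $X^uZ^v$ and sends her half through $\cN$, yielding a cq-channel with $a=|A|^2$ inputs whose $uv$-th output support is $P_{uv}=(X^uZ^v\otimes\1)P_{AB}(Z^{-v}X^{-u}\otimes\1)$. Applying the Weyl twirling identity in the $A$ factor of $P_{AB}$ gives
\[
  \sum_{uv}P_{uv}=|A|\,\1_A\otimes P_B,
\]
so $\bigl\|\sum_{uv}P_{uv}\bigr\|_\infty=|A|\,\|P_B\|_\infty$, and the hypothesis $a>\bigl\|\sum_{uv}P_{uv}\bigr\|_\infty$ of Theorem~\ref{exclude-discrimination}(iv) applied to $\cK(\cM)$ is exactly our condition (iii). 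Theorem~\ref{exclude-discrimination} then delivers $\overline{C}_{0EF}(\cK(\cM))>0$; since $\cM$ is implemented by a single use of $\cN$ together with a fresh entangled pair (freely available through the feedback link), we inherit $\overline{C}_{0EF}(K)\geq\overline{C}_{0EF}(\cK(\cM))>0$. The converse direction $\overline{C}_{0EF}(K)>0\Rightarrow$~(i) is just the contrapositive of the ``trivial $\Rightarrow$ zero capacity'' half of Theorem~\ref{thm:C-0EF-feasibility}, and the cycle closes.

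The main obstacle I anticipate is not analytical but bookkeeping: I must verify that a feedback-assisted zero-error code for the cq-channel $\cM$ really can be simulated round-by-round by a feedback-assisted code for $\cN$, so that the rate inequality $\overline{C}_{0EF}(K)\geq\overline{C}_{0EF}(\cK(\cM))$ holds at equal rates rather than with some lossy conversion factor. The Weyl twirling computation and the equivalence chain among (ii)--(v) are otherwise routine, and Theorem~\ref{exclude-discrimination} does the remaining heavy lifting.
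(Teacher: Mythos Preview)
Your proposal is correct and follows essentially the same route as the paper: the same chain of elementary equivalences among (i)--(v) via the support projection $P_{AB}$, and the same reduction (iii)$\Rightarrow\overline{C}_{0EF}(K)>0$ via the Bell-encoded cq-channel $\cM$ of eq.~(\ref{eq:M-channel}), computing $\sum_{uv}P_{uv}=|A|\,\1_A\otimes P_B$ and invoking Theorem~\ref{exclude-discrimination}. The only cosmetic difference is that for $\neg$(ii)$\Rightarrow\neg$(v) you use the dual feasible point $T_B=\proj{\beta}$ whereas the paper argues on the primal side that any feasible $S_A$ has $\tr S_A\leq 1$; both are one-line. Your bookkeeping worry about simulating an $\cM$-code by an $\cN$-code is unfounded: each use of $\cM$ is literally one use of $\cN$ preceded by Alice's local Weyl unitary on her half of a shared $\ket{\Phi}$, which the feedback link can supply, so the inequality $\overline{C}_{0EF}(K)\geq\overline{C}_{0EF}(\cK(\cM))$ holds without loss.
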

\begin{proof}
\emph{$\overline{C}_{0EF}(K) > 0 \ \Rightarrow$ i.} has been shown in 
  the first part (necessity) of Theorem~\ref{thm:C-0EF-feasibility}, 
  at the start of this section, likewise \emph{i. $\Leftrightarrow$ ii.}.

\emph{ii. $\Leftrightarrow$ iii.} $P_{AB} \leq \1_A\ox\1_B$, hence
  $P_B \leq |A|\1_B$, i.e.~$\|P_B\|_\infty \leq |A|$. Equality is attained
  if and only if there exists an eigenvector $\ket{\beta}$ of $P_B$
  with eigenvalue $|A|$, which is equivalent to
  $|A| = \tr \proj{\beta}P_B = \tr (\1_A\ox\proj{\beta})P_{AB}$.
  But since $\1_A\ox\proj{\beta}$ has trace $|A|$ and $P_{AB}$
  is a projector, this is equivalent to $\1_A\ox\proj{\beta} \leq P_{AB}$,
  or again equivalently $\ket{\beta}\ox A^\dag < K$.

\emph{iii. $\Leftrightarrow$ iv.} $\|P_B\|_{\infty} < |A|$ if and only if
  $P_B = \tr_A P_{AB} < |A|\1_B$, if and only if 
  $\tr_A (\1-P_{AB}) > 0$.

\emph{iii. $\Rightarrow$ v.} Simply observe that $S= \frac{1}{\|P_B\|_\infty}\1_A$
  is feasible for $\Aram(K)$, since $\tr_A (S\ox\1)P_AB = \frac{1}{\|P_B\|_\infty}P_B \leq \1$,
  hence $\Aram(K) \geq \tr S = \frac{|A|}{\|P_B\|_\infty} > 1$.

\emph{v. $\Rightarrow$ ii.} We show the contrapositive: If $\ket{\beta}\otimes A^\dag < K$,
  then $\1\ox\proj{\beta} \leq P_{AB}$. Now, if $S$ is feasible for
  $\Aram(K)$, we have
  $\1_B \geq \tr_A(S\ox\1)P_{AB} \geq \tr_A(S\ox\1)(\1\ox\proj{\beta}) = (\tr S)\proj{\beta}$,
  hence $\tr S \leq 1$, and so $\Aram(K) = 1$.

\emph{iii. $\Rightarrow\ \overline{C}_{0EF}(K) > 0$.} Consider
  the cq-channel $\cM$ in eq.~(\ref{eq:M-channel}). It has output
  state support projectors
  \[
    P_{uv} = (X^u Z^v \ox \1) P_{AB} (Z^{-v} X^{-u} \ox \1),\quad u,v=1,\ldots,a,
  \]
  and we can verify directly that
  $\sum_{uv} P_{uv} = |A|\1_A\ox P_B$, so its norm satisfies
  $$\left\|\sum_{u,v} P_{uv} \right\|_\infty = |A|\, \|P_B\|_\infty < |A|^2.$$
  In other words, it satisfies the requirements of item iv) in Theorem ~\ref{exclude-discrimination},
  hence $\overline{C}_{0EF}(K) \geq \overline{C}_{0EF}(\cM) > 0$.
\end{proof}

\section{Shannon theoretic upper bound on $\mathbf{\overline{C}_{0EF}(K)}$}
\label{sec:C_min_E}
In this section we will develop an upper bound on the feedback-assisted
zero-error capacity via information theoretic ideas. For this
purpose we first review the classical case, due to Shannon.

\subsection{Shannon theoretic characterization of the fractional packing number:
\protect\\ Shannon's Conjecture}
\label{subsec:Shannon-conjecture}
The following characterization of the feedback-assisted zero-error capacity
of a classical channel was conjectured by Shannon at the end of
his seminal paper~\cite{Shannon56}, and to our knowledge proved first by 
Ahlswede~\cite{Ahlswede:feedback}, in the context of his treatment of the capacity 
of arbitrarily varying (classical) channels with instantaneous feedback, and 
using his very general results in that theory. 
Our proof seems more direct, but then it is specially geared towards the 
zero-error setting.

\begin{proposition}
  \label{prop:Shannon}
  For a bipartite graph $\Gamma$ on $\cX \times \cY$ such that 
  every $x \in \cX$ is adjacent to at least one $y \in \cY$,
  \[
    \log \alpha^*(\Gamma) = C_{\min}(\Gamma)
                         := \min \{ C(N) : \Gamma(N) \subseteq \Gamma \},
  \]
  where $C(N)$ is the usual Shannon capacity of a noisy classical 
  channel~\cite{Shannon48}.
\end{proposition}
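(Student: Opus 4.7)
The plan is to prove the two inequalities separately, each drawing on LP duality of the fractional packing number~(\ref{eq:fractional-packing-n}).

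For the direction $C_{\min}(\Gamma)\geq \log\alpha^*(\Gamma)$, I would fix any channel $N$ with $\Gamma(N)\subseteq\Gamma$ and chain together two elementary observations. First, since classical feedback does not increase the Shannon capacity of a DMC and any zero-error (feedback) code is in particular a vanishing-error (feedback) code, $C(N)=C_F(N)\geq C_{0F}(N)=\log\alpha^*(\Gamma(N))$, the last equality being Shannon's own feedback formula~(\ref{eq:feedback-Shannon}). Second, the primal LP of~(\ref{eq:fractional-packing-n}) for $\Gamma(N)$ is a \emph{relaxation} of the primal LP for $\Gamma$: deleting an edge $xy$ only loosens the constraint $\sum_{x'} w_{x'}\Gamma(y|x')\leq 1$. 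Therefore $\alpha^*(\Gamma(N))\geq \alpha^*(\Gamma)$, and taking the minimum over $N$ delivers $C_{\min}(\Gamma)\geq \log\alpha^*(\Gamma)$.

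For the converse $C_{\min}(\Gamma)\leq \log\alpha^*(\Gamma)$, I would exhibit a concrete minimizer $N^*$ built from the dual LP. Let $\{v^*_y\}$ attain the covering minimum in~(\ref{eq:fractional-packing-n}), so that $\sum_y v^*_y=\alpha^*(\Gamma)$ and $c(x):=\sum_y v^*_y\,\Gamma(y|x)\geq 1$ for every $x$; set $N^*(y|x):=v^*_y\,\Gamma(y|x)/c(x)$. This is a legitimate stochastic channel (well-defined thanks to the hypothesis that every $x$ has a neighbor) with $\Gamma(N^*)\subseteq\Gamma$. To upper-bound its Shannon capacity I would apply the standard one-line inequality $C(N^*)\leq \max_x D(N^*(\cdot|x)\,\|\, Q)$, valid for every output distribution $Q$ (immediate from $I(X;Y)=\sum_x p(x)\, D(N(\cdot|x)\,\|\, p_Y)$ and $D(p_Y\,\|\, Q)\geq 0$). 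Choosing $Q(y):=v^*_y/\alpha^*(\Gamma)$, a short calculation collapses the sum to $D(N^*(\cdot|x)\,\|\, Q)=\log\bigl(\alpha^*(\Gamma)/c(x)\bigr)\leq \log\alpha^*(\Gamma)$.

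The main obstacle is mostly bookkeeping rather than any substantial difficulty: one must check that the support of $N^*(\cdot|x)$ is contained in the $\Gamma$-neighborhood of $x$ and that dual slack $c(x)>1$ at non-tight inputs only \emph{improves} the divergence bound. By complementary slackness, inputs $x$ with positive primal weight $w^*_x>0$ satisfy $c(x)=1$ exactly, so $D(N^*(\cdot|x)\,\|\, Q)=\log\alpha^*(\Gamma)$ is saturated there; this matches the input distribution $p(x)=w^*_x/\alpha^*(\Gamma)$ achieving $I(X;Y)=\log\alpha^*(\Gamma)$, confirming that $N^*$ in fact realizes the minimum rather than merely a bound.
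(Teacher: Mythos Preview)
Your converse direction is essentially the paper's own argument: the same channel $N^*(y|x)=\Gamma(y|x)v^*_y/c(x)$ built from an optimal dual solution, the same reference distribution $Q(y)=v^*_y/\alpha^*(\Gamma)$, and the same divergence calculation $D\bigl(N^*(\cdot|x)\,\|\,Q\bigr)=\log\bigl(\alpha^*(\Gamma)/c(x)\bigr)$. The paper additionally verifies (via the Kuhn--Tucker characterisation of capacity) that the primal distribution $p_x=w^*_x/\alpha^*(\Gamma)$ is capacity-achieving for $N^*$, but this is exactly what you sketch in your last paragraph.

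There is, however, a small but genuine gap in your forward direction. You write $C_{0F}(N)=\log\alpha^*(\Gamma(N))$ citing~(\ref{eq:feedback-Shannon}), but that formula is a dichotomy: when the confusability graph $G(N)$ is complete, $C_{0F}(N)=0$, not $\log\alpha^*(\Gamma(N))$. This case does occur for eligible $N$ even when $\alpha^*(\Gamma)>1$ (e.g.\ three inputs, three outputs, each input mapped uniformly to the two ``other'' outputs: $G$ is a triangle, $\alpha^*=3/2$). Your chain then only gives $C(N)\geq 0$. The fix is immediate: replace $C_{0F}$ by the \emph{activated} capacity $\overline{C}_{0F}$, which always equals $\log\alpha^*$; the free activating bits are constant in $n$ and do not affect the comparison with $C_F(N)=C(N)$. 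With that change your monotonicity-via-$\Gamma(N)$ route is a valid (slight) variant of the paper's operational argument. Note also that the paper supplies a second, self-contained proof of $\log\alpha^*(\Gamma)\leq C(N)$ via the inequality $V(p)\leq I(X{:}Y)$ and the method of types, avoiding any appeal to the feedback zero-error capacity altogether.
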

\begin{proof}
The left hand side is the zero-error capacity of $\Gamma$, assisted
by feedback (plus some finite amount of communication), 
$\overline{C}_{0F}(\Gamma)$~\cite{Shannon56}.
From this, and the fact that feedback
does not increase the Shannon capacity of a channel~\cite{Shannon56}
(which may also be proved invoking the Reverse Shannon Theorem~\cite{BSST}),
it follows that $C(N) \geq \log \alpha^*(\Gamma)$
for any eligible $N$, hence $C_{\min}(\Gamma) \geq \log \alpha^*(\Gamma)$.

There is also a direct proof of this that avoids operational arguments,
relying instead only on elementary combinatorial notions. It goes via showing
that for every eligible channel $N$ and input probability distribution $p$,
\begin{equation}
  \label{eq:V}
  V(p) := \log \min_y \frac{1}{\sum_x \Gamma(y|x)p_x} \leq I(X:Y),
\end{equation}
which is enough because $\max_p V(p) = \log\alpha^*(\Gamma)$, 
while of course the maximum of $I(X:Y)$ equals $C(N)$.
Now, eq.~(\ref{eq:V}) is easily seen to be true for uniform distribution
$p_x = \frac{1}{|\cX|}$. Namely, with the equivocation sets 
$\cE_y = \{ x : \Gamma(y|x)=1 \}$
and the output probability distribution $q_y = \sum_x p_x N(y|x)$:
\[\begin{split}
  V(p) &=    \log |\cX| - \max_y \log |\cE_y| \\
       &\leq \log |\cX| - \sum_y q_y \log |\cE_y| \\
       &\leq \log |\cX| - \sum_y q_y H(X|Y=y)        \\
       &=    H(X) - H(X|Y) = I(X:Y),
\end{split}\]
where we have used the fact that $P_{X|Y=y}$ is supported
on $\cE_y$, and the uniformity of the distribution of $X$.
For non-uniform $p$, we use the method of types~\cite{Csiszar:method-of-types}
to reduce to the uniform case. In detail, consider the product distribution $p^{\ox n}$
and $X^n \sim p^{\ox n}$ as input to the i.i.d.~channel $N^{\ox n}$.
Introducing the type $T=T(X^n)$ of the string $X^n$, we have:
\[
  n I(X:Y) = I(X^n:Y^n) = I(T X^n:Y^n) = I(T:Y^n) + I(X^n:Y^n|T).
\]
On the other hand, for every type $\tau$,
\[\begin{split}
  2^{-n{V(p)}} &=    2^{-V(p^{\ox n})}                            \\
                   &=    \max_{y^n} \sum_{x^n} \Gamma(y^n|x^n) p_{x^n}      \\
                   &\geq \max_{y^n} \sum_{x^n\in\tau} \Gamma(y^n|x^n) p_{x^n} \\
                   &=    p^{\ox n}(\tau) \max_{y^n} \sum_{x^n\in\tau} \frac{1}{|\tau|}\Gamma(y^n|x^n),
\end{split}\]
since conditioned on $T(X^n)=\tau$, $X^n \sim u_\tau$ is uniformly distributed.
Hence, using the uniform case of the inequality (\ref{eq:V}),
\[
  n V(P) \leq \log\frac{1}{p^{\ox n}(\tau)} + V(u_\tau)
              \leq \log\frac{1}{p^{\ox n}(\tau)} + I(X^n:Y^n|T=\tau),
\]
and averaging over the different types this gives
\[
  n V(P) \leq H(T) + I(X^n:Y^n|T)
              \leq O(\log n) + n I(X:Y),
\]
because there are only $\text{poly}(n)$ many different types, 
and letting $n\rightarrow\infty$ we are done.

\medskip
So it remains only to show the opposite inequality. The proof
uses the primal and dual linear programming~\cite{Chvatal:LP} 
(LP) characterisations of $\alpha^*(\Gamma)$ to
construct an optimal channel $N(y|x)$, and in fact also an optimal
input distribution $p_x$, such that $C(N) = I(X:Y) = \log \alpha^*(\Gamma)$.

Recall the fractional packing number, eq.~(\ref{eq:fractional-packing-n}),
and choose optimal primal and dual solutions. 
Define an input distribution $p_x := \frac{w_x}{\alpha^*(\Gamma)}$.
This is the one that appears in Shannon's~\cite[Thm.~7]{Shannon56},
and his $\frac{1}{P_0}$ is the same as $\alpha^*(\Gamma)$.
Now, by complementary slackness~\cite{Chvatal:LP},
if $M_x := \sum_y \Gamma(y|x)v_y > 1$,
then $w_x = p_x = 0$; per contrapositive, if $p_x > 0$, then
$M_x = \sum_y \Gamma(y|x)v_y = 1$. Hence, we can define, for these
latter $x$,
\[
  N(y|x) := \Gamma(y|x)v_y,
\]
and in general for all $x$,
\[
  N(y|x) := \frac{1}{M_x}\Gamma(y|x)v_y.
\]

This is our candidate channel, and we have to convince ourselves that
indeed $C(N) = \log \alpha^*(\Gamma)$. First of all, let's confirm that with
the above distribution $p$, the mutual information $I(X:Y)$ equals
$\log \alpha^*(\Gamma)$. Let $D(p\|q)=\sum_x p(x)\log \frac{p(x)}{q(x)}$ be the relative entropy between two probability distributions $\{p_x\}$ and $\{q_x\}$, cf.~\cite{CoverThomas}.  Recall $I(X:Y) = \sum_x p_x D(N(\cdot|x)\|q)$,
with the output distribution
\[
  q_y = \sum_x p_x N(y|x)
      = \sum_x p_x \Gamma(y|x) v_y
      = \frac{v_y}{\alpha^*(\Gamma)},
\]
using once more complementary slackness: the equality is
trivial if $v_y = 0$, and if $v_y > 0$ then $\sum_x \Gamma(y|x)w_x = 1$.
In the present case, we calculate for all $x$,
\[
  D\bigl(N(\cdot|x)\|q\bigr)
   = \sum_y \frac{\Gamma(y|x)v_y}{M_x} \log\frac{\Gamma(y|x)\frac{v_y}{M_x}}{\frac{v_y}{\alpha^*(\Gamma)}}
   = \log \frac{\alpha^*(\Gamma)}{M_x},
\]
which is $\log \alpha^*(\Gamma)$ for all $p_x>0$ as then $M_x=1$.
So indeed $I(X:Y) = \log \alpha^*(\Gamma)$. But we see even more:
While all the relative entropies $D(N(\cdot|x)\|q)$ with $p_x > 0$ are
equal to $\log \alpha^*(\Gamma)$, for $p_x = 0$ instead,
\[
  D\bigl(N(\cdot|x)\|q\bigr)
               =    \log \frac{\alpha^*(\Gamma)}{M_x}
               \leq \log \alpha^*(\Gamma),
\]
because $M_x \geq 1$. These two conditions (for $p_x>0$ and $p_x=0$)
are well known, classic characterizations of the Shannon capacity 
(cf.~\cite{Csiszar:relent,SW:optimal-ens});
they characterize an optimal input distribution for given channel
$N$, so indeed we prove $C(N) = \log \alpha^*(\Gamma)$.
\end{proof}

\medskip\noindent
{\bf Remark.}
Note that neither is $C_{\min}$ altered by allowing the use of 
entanglement as well as feedback~\cite{BSST},
nor $C_{0F}$ by allowing the use of entanglement and other
no-signalling correlations~\cite{CLMW:0}.

\subsection{Quantum generalization of the Shannon bound}
\label{subsec:quantum-Shannon-conjecture}
Recall that for a channel $\cN:\cS(A) \longrightarrow \cS(B)$,
the entanglement-assisted classical capacity~\cite{BSST},
i.e.~the maximum rate of asymptotically error-free communication
via many uses of the channel assisted by a suitable pre-shared
entangled state, is given by
\[\begin{split}
  C_E(\cN) = \max_{\rho} I(A:B)_\sigma
           = \max_{\rho} \left\{ S(\rho) + S({\cal N}(\rho))
                                   - S\bigl( (\id\otimes{\cal N})\phi \bigr) \right\},
\end{split}\]
where $\sigma_{AB} = (\id \ox \cN)\phi_{AA'}$ is the joint
input-output state, $\phi_{AA'}$ is a purification of $\rho$,
and $I(A:B) = S(\sigma_A)+S(\sigma_B)-S(\sigma_{AB})$
is the quantum mutual information. In the particular
case above, we also write it 
$I(\rho;{\cal N}) = S(\rho) + S({\cal N}(\rho)) - S\bigl( (\id\otimes{\cal N})\phi \bigr)$.

Using this, we define for a non-commutative bipartite graph 
$K < \cL(A\rightarrow B)$ such that $\1 \in K^\dagger K$ 
(these are precisely the possible Kraus subspaces of channels):
\[
  C_{\min E}(K) := \min \{ C_E(\cN) : \cK(\cN) < K \}.
\]
That this is indeed a minimum follows from continuity of $C_E$ and
the fact that the eligible channels form a compact convex set.
This definition is of course motivated by Proposition~\ref{prop:Shannon},
suggesting $2^{C_{\min E}(K)}$ as a possible quantum
generalisation of the fractional packing number. For one thing,
for the quantum realisation $K$ of a classical equivocation graph
$\Gamma$, it is easy to see that indeed
$C_{\min E}(K) = C_{\min}(\Gamma) = \log \alpha^*(\Gamma)$,
see the remark at the end of the preceeding 
Subsection~\ref{subsec:Shannon-conjecture}.

At least, this quantity is related to the feedback-assisted zero-error
capacity: Indeed, the result of Bowen~\cite{Bowen:feedback}
(alternatively the Quantum Reverse Shannon Theorem~\cite{QRST,BertaChristandlRenner}) 
tells us that $C_E(\cN)$ is not increased even by allowing feedback,
so that $C_{0EF}(K)$ (and actually even $\overline{C}_{0EF}(K)$) is upper bounded by
the entanglement-assisted capacity $C_E({\cal N})$ for any
channel ${\cal N}$ such that $\cK({\cal N})<K$, hence
\begin{theorem}
  \label{thm:upper-bound}
  $\overline{C}_{0EF}(K) \leq C_{\min E}(K)$ for any non-commutative 
  bipartite graph $K < \cL(A\rightarrow B)$.
  \qed
\end{theorem}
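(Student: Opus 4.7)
The plan is to prove the stronger statement $\overline{C}_{0EF}(K) \leq C_E(\cN)$ for every channel $\cN$ with $\cK(\cN) < K$, and then minimize over eligible $\cN$. The structural input is Proposition~\ref{prop:basic}: the feedback-assisted zero-error property depends only on $K$. Hence any quantum feedback-assisted zero-error code for $K$ using $n$ channel uses and $b$ activating noiseless forward bits is automatically a zero-error, and a fortiori small-error, code for the concrete memoryless combined channel $\cN^{\ox n} \ox \overline{\id}_2^{\ox b}$ in the presence of quantum feedback and pre-shared entanglement.

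Next I would invoke Bowen's theorem~\cite{Bowen:feedback} (alternatively the entanglement-assisted Reverse Shannon Theorem~\cite{QRST,BertaChristandlRenner}), which tells us that quantum feedback does not increase the entanglement-assisted classical capacity of a memoryless quantum channel. Together with additivity of $C_E$ over tensor products and the trivial identity $C_E(\overline{\id}_2)=1$, this says that the feedback- and entanglement-assisted capacity of the combined channel above is exactly $n\,C_E(\cN) + b$. The usual weak converse for entanglement-assisted classical communication then delivers
$$\log M_f^{+b}(n;K) \leq n\,C_E(\cN) + b + o(n) \quad \text{as } n\to\infty.$$

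Subtracting $b$, dividing by $n$ and letting $n\to\infty$ yields $\lim_{n\to\infty}\frac{1}{n}\bigl(\log M_f^{+b}(n;K)-b\bigr) \leq C_E(\cN)$ for every $b$; taking the supremum over $b$ then gives $\overline{C}_{0EF}(K) \leq C_E(\cN)$. Since $\cN$ was an arbitrary channel with $\cK(\cN) < K$, minimizing over this set (the minimum is attained by continuity of $C_E$ on the compact convex set of eligible channels) produces $\overline{C}_{0EF}(K) \leq C_{\min E}(K)$, as claimed.

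No serious obstacle arises: the argument is a direct marriage of the ``only $K$ matters'' structural fact with the Bowen converse. The one point that merits care is the bookkeeping of the $b$ activating bits; they must be absorbed into the channel description as the factor $\overline{\id}_2^{\ox b}$, so that the converse applies to the combined resource and contributes exactly $+b$ to the capacity bound, precisely cancelling the $-b$ in the definition of $\overline{C}_{0EF}$ and leaving a clean per-use bound by $C_E(\cN)$. No strong converse or delicate estimation is needed.
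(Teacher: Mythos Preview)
Your proposal is correct and follows essentially the same approach as the paper: both invoke Proposition~\ref{prop:basic} to pass from $K$ to a concrete channel $\cN$, then use Bowen's result (or the Quantum Reverse Shannon Theorem) that feedback does not increase $C_E$, and finally minimize over eligible $\cN$. You are in fact more explicit than the paper in handling the $b$ activating bits via the factor $\overline{\id}_2^{\ox b}$ and additivity of $C_E$; the paper simply asserts that the bound holds ``actually even'' for $\overline{C}_{0EF}(K)$ without spelling this out.
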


$C_{\min E}(K)$ shares many properties with $C_{\min}(\Gamma)$, to
which it reduces for classical channels.
First, $C_{\min E}(K)$ is given by a minimax
formula (min over channels and max over quantum mutual information
-- see below) to which the minimax theorem applies, so it is also
given by a maximin (Lemma~\ref{lemma:minimax} below). Second, using
this characterisation and properties of the von Neumann entropy,
it can be shown that $C_{\min E}$ is additive
(Lemma~\ref{lemma:add} below). Third, thanks to the operational
definition of $C_{E}$, it can be easily seen to be monotonic
under pre- and post-processing (Lemma~\ref{lemma:mono} below).

We shall need some well-known mathematical properties of the
quantum mutual information. The first is 
that $I(\rho;{\cal N})$ is concave in $\rho$ and convex in ${\cal N}$,
just like its classical counterpart~\cite{AdamiCerf,BSST}.
The convexity in ${\cal N}$ follows from strong subadditivity: Let
\begin{align*}
  \sigma^{AB} &= \bigl( \id\otimes(\lambda{\cal N}^{(1)}+(1-\lambda){\cal N}^{(2)}) \bigr)\phi_\rho \\
              &= \lambda\sigma^{(1)}_{AB} + (1-\lambda)\sigma^{(2)}_{AB}   \\
              &= \tr_{B'} \widetilde{\sigma}_{ABB'},
\end{align*}
with
$\widetilde{\sigma}_{ABB'} = \lambda\sigma_{AB}^{(1)}\otimes\proj{1}_{B'}
                                + (1-\lambda)\sigma_{AB}^{(2)}\otimes\proj{2}_{B'}.$
Then,
\[\begin{split}
  I\bigl(\rho;\lambda{\cal N}^{(1)}+(1-\lambda){\cal N}^{(2)}\bigr)
        &=    I(A:B)_\sigma                  \\
        &\leq I(A:BB')_{\widetilde{\sigma}}  \\
        &=    \lambda I(A:B)_{\sigma^{(1)}} + (1-\lambda) I(A:B)_{\sigma^{(2)}} \\
        &=    \lambda I(\rho;{\cal N}^{(1)}) + (1-\lambda) I(\rho;{\cal N}^{(2)}).
\end{split}\]

The concavity in $\rho$ can be seen as follows, using strong
subadditivity again: For states $\rho^{(1)}$, $\rho^{(2)}$ with purifications
$\phi^{(1)}$, $\phi^{(2)}$, respectively, and $0\leq \lambda\leq 1$, we
construct a purification of the mixture $\lambda\rho^{(1)}+(1-\lambda)\rho^{(2)}$,
as follows:
\[
  \ket{\phi} = \sqrt{\lambda}\ket{\phi^{(1)}}\ket{11}_{A'A''} + \sqrt{1-\lambda}\ket{\phi^{(2)}}\ket{22}_{A'A''}.
\]
With $\sigma_{AA'A''B} = (\id_{AA'A''}\otimes{\cal N})\phi$, we have
\[\begin{split}
  I\bigl( \lambda\rho^{(1)}+(1-\lambda)\rho^{(2)};{\cal N} \bigr)
        &=    I(AA'A'':B)_\sigma  \\
        &\geq I(AA':B)_\sigma  \\
        &\geq I(A:B|A')_\sigma \\
        &=    \lambda I(\rho^{(1)};{\cal N}) + (1-\lambda) I(\rho^{(2)};{\cal N}).
\end{split}\]

\begin{lemma}
  \label{lemma:minimax}
  For any non-commutative bipartite graph
  $K < {\cal L}(A\!\rightarrow\! B)$,
  \[\begin{split}
    C_{\min E}(K) 
         &= \min_{{\cal N} \text{ s.t.} \atop \cK({\cal N}) < K} \max_\rho\ \ \,I(\rho;{\cal N}) \\
         &= \ \ \,\max_\rho \min_{{\cal N} \text{ s.t.} \atop \cK({\cal N}) < K} I(\rho;{\cal N}).
  \end{split}\]
\end{lemma}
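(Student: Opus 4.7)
The first equality is essentially the definition, since $C_E(\mathcal{N}) = \max_\rho I(\rho;\mathcal{N})$ and $C_{\min E}(K) = \min\{C_E(\mathcal{N}) : \mathcal{K}(\mathcal{N})<K\}$. So the content is in the minimax swap.

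My plan is to apply Sion's minimax theorem to the bilinear-like game on $(\rho, \mathcal{N})$. The four hypotheses to verify are convexity/compactness of each domain and the appropriate concavity/convexity plus continuity of $I(\rho;\mathcal{N})$. The set of density operators on $A$ is convex and compact. For the set $\mathcal{F}_K := \{\mathcal{N} : \mathcal{K}(\mathcal{N}) < K\}$, convexity follows because $K$ is a linear subspace: if $\mathcal{N}_0, \mathcal{N}_1 \in \mathcal{F}_K$ with Kraus operators $\{E_j^{(0)}\}, \{E_j^{(1)}\} \subset K$, then $\lambda\mathcal{N}_0 + (1-\lambda)\mathcal{N}_1$ has Kraus operators $\{\sqrt{\lambda}E_j^{(0)}\} \cup \{\sqrt{1-\lambda}E_j^{(1)}\} \subset K$. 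Equivalently, on the level of Choi matrices, $\mathcal{F}_K$ is the affine slice of channels whose Choi state has support inside the projection $P_{AB}$ onto $(\mathbf{1}\otimes K)\ket{\Phi}$; this is closed, hence compact inside the compact set of all channels.

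Next, the concavity of $\rho \mapsto I(\rho;\mathcal{N})$ and convexity of $\mathcal{N} \mapsto I(\rho;\mathcal{N})$ are exactly what was established in the discussion immediately preceding the lemma via strong subadditivity. Continuity of $I$ on the compact product domain follows from continuity of von Neumann entropy in finite dimension (Fannes). With these ingredients, Sion's minimax theorem~\cite{Sion} applies and yields
\[
  \min_{\mathcal{N}\in\mathcal{F}_K} \max_\rho I(\rho;\mathcal{N})
      = \max_\rho \min_{\mathcal{N}\in\mathcal{F}_K} I(\rho;\mathcal{N}),
\]
giving the second equality. (The minima are attained by compactness and continuity, justifying ``$\min$'' rather than ``$\inf$''; in particular the minimum in the definition of $C_{\min E}(K)$ exists, matching the remark already made in the text.)

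The main thing to check carefully is really just the convexity of $\mathcal{F}_K$, since the rest is either bookkeeping (compactness, continuity) or was explicitly prepared in the paragraphs before the lemma (the two concavity/convexity calculations for $I(\rho;\mathcal{N})$). There is no genuine obstacle; the proof should be only a few lines once Sion is invoked and the four hypotheses are listed.
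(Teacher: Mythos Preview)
Your proposal is correct and follows essentially the same approach as the paper: invoke Sion's minimax theorem using the concavity/convexity of $I(\rho;\cN)$ established just before the lemma. You are simply more explicit than the paper in verifying the remaining hypotheses (convexity and compactness of $\cF_K$, continuity of $I$), which the paper takes for granted.
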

\begin{proof}
The first equation is the definition of $C_{\min E}(K)$, with
the formula for $C_E({\cal N})$ inserted. Above we saw that the
argument $I(\rho;{\cal N})$ is concave in the first and convex
in the second argument. Hence von Neumann's minimax theorem, or
rather its generalisation due to Sion~\cite{Sion} applies, allowing us
to interchange the order of min and max.
\end{proof}

\begin{lemma}
  \label{lemma:add}
  For non-commutative bipartite graphs $K_1 < {\cal L}(A_1\!\rightarrow\! B_1)$
  and $K_2 < {\cal L}(A_2\!\rightarrow\! B_2)$,
  \[
    C_{\min E}(K_1 \otimes K_2) = C_{\min E}(K_1) + C_{\min E}(K_2).
  \]
\end{lemma}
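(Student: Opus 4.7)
The plan is to prove the two inequalities separately, the ``$\leq$'' direction being immediate and the ``$\geq$'' direction being the substantive part.

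For ``$\leq$'', pick channels $\cN_i$ realizing the minimum in the definition of $C_{\min E}(K_i)$, i.e.~with $\cK(\cN_i) < K_i$ and $C_E(\cN_i) = C_{\min E}(K_i)$. The product channel $\cN_1 \otimes \cN_2$ has Kraus span $K_1 \otimes K_2$ (since tensoring spanning sets gives a spanning set of the tensor space), so by additivity of entanglement-assisted capacity~\cite{BSST}, $C_{\min E}(K_1 \otimes K_2) \leq C_E(\cN_1 \otimes \cN_2) = C_E(\cN_1) + C_E(\cN_2) = C_{\min E}(K_1) + C_{\min E}(K_2)$.

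For ``$\geq$'', I would invoke the maximin form of Lemma~\ref{lemma:minimax}. Let $\rho_i^*$ attain the outer maximum for $K_i$, and consider the product input $\rho_1^* \otimes \rho_2^*$ for an arbitrary channel $\cN:\cL(A_1 A_2)\to\cL(B_1 B_2)$ with $\cK(\cN) < K_1 \otimes K_2$. Define the marginal channels
\[
  \cN_1(\sigma) := \tr_{B_2} \cN(\sigma \otimes \rho_2^*), \qquad
  \cN_2(\sigma) := \tr_{B_1} \cN(\rho_1^* \otimes \sigma).
\]
The first step, and the place to be careful, is to verify that $\cK(\cN_i) < K_i$. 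Expanding any Kraus operator $E_\ell$ of $\cN$ in a product basis $\{f_\alpha \otimes g_\beta\}$ of $K_1 \otimes K_2$ with $f_\alpha \in K_1, g_\beta \in K_2$, writing $\rho_1^* = \sum_i p_i \proj{i}$, and projecting the $B_1$ output onto a basis $\{\ket{b}\}$, the resulting operators $\sqrt{p_i}\bra{b}_{B_1} E_\ell (\ket{i}_{A_1} \otimes \1_{A_2}) = \sum_\beta \bigl(\sum_\alpha c_\ell^{\alpha\beta}\bra{b} f_\alpha\ket{i}\bigr) g_\beta$ form a valid Kraus set of $\cN_2$ consisting of linear combinations of the $g_\beta$'s, hence lying in $K_2$; the argument for $\cN_1$ is symmetric.

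The second step is the mutual information inequality $I(\rho_1^* \otimes \rho_2^*;\cN) \geq I(\rho_1^*;\cN_1) + I(\rho_2^*;\cN_2)$. Taking purifications $\ket{\phi_i}_{A_i' A_i}$ of $\rho_i^*$ and letting $\sigma$ be the joint state on $A_1' A_2' B_1 B_2$, a short calculation shows $\sigma_{A_i' B_i} = (\id \otimes \cN_i)\phi_i$, so $I(A_i':B_i)_\sigma = I(\rho_i^*;\cN_i)$. Since $\sigma_{A_1' A_2'} = \rho_1^* \otimes \rho_2^*$ is a product, $I(A_1' : A_2') = 0$, and combining the chain rule with strong subadditivity yields $I(A_2' : B_2 | A_1') = I(A_2':B_2) + I(A_2':A_1'|B_2) \geq I(A_2':B_2)$. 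Data processing then gives
\[
  I(A_1' A_2' : B_1 B_2) = I(A_1':B_1 B_2) + I(A_2' : B_1 B_2 | A_1')
                         \geq I(A_1':B_1) + I(A_2':B_2),
\]
which is the desired inequality. Finally, since $\cK(\cN_i) < K_i$, the maximin definition of $\rho_i^*$ gives $I(\rho_i^*;\cN_i) \geq C_{\min E}(K_i)$, so $I(\rho_1^* \otimes \rho_2^*;\cN) \geq C_{\min E}(K_1) + C_{\min E}(K_2)$ for every $\cN$ in the feasible set; minimizing over $\cN$ and using Lemma~\ref{lemma:minimax} finishes the argument. The main obstacle is the first step: the marginal channels $\cN_i$ are defined operationally, and one must check rigorously that they inherit the Kraus-span constraint from $\cN$, since a priori it is only the full tensor that lies in $K_1 \otimes K_2$.
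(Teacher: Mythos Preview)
Your proof is correct and follows essentially the same route as the paper: the ``$\leq$'' direction via product channels and additivity of $C_E$, and the ``$\geq$'' direction via the maximin form of Lemma~\ref{lemma:minimax} with a product test state, marginal channels $\tr_{B_2}\cN(\cdot\otimes\rho_2^*)$ and $\tr_{B_1}\cN(\rho_1^*\otimes\cdot)$, the Kraus-span check, and a strong-subadditivity inequality. The only cosmetic difference is that the paper packages the entropy step as the identity $I(A_1A_2:B_1B_2)-I(A_1:B_1)-I(A_2:B_2)=I(A_1B_1:A_2B_2)-I(B_1:B_2)-I(A_1:A_2)\geq 0$, whereas you reach the same conclusion via the chain rule and data processing.
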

\begin{proof}
We show this by separately demonstrating ``$\leq$'' and ``$\geq$''
in the above relation, using the two expressions for $C_{\min E}$
from Lemma~\ref{lemma:minimax}. In the following, choose optimal
states $\rho_1$, $\rho_2$ and channels ${\cal N}_1$, ${\cal N}_2$
for $K_1$, $K_2$, respectively.

\medskip\noindent
{\bf ``$\mathbf{\leq}$'':} By the first expression in
Lemma~\ref{lemma:minimax},
\[\begin{split}
  C_{\min E}(K_1\otimes K_2) &\leq \max_\rho I(\rho;{\cal N}_1\otimes{\cal N}_2) \\
                             &=    C_E({\cal N}_1\otimes{\cal N}_2) \\
                             &=    C_E({\cal N}_1) + C_E({\cal N}_2) \\
                             &=    C_{\min E}(K_1) + C_{\min E}(K_2),
\end{split}\]
using the fact that the entanglement-assisted capacity is additive,
proved by Adami and Cerf in~\cite{AdamiCerf}. Note that
$\cK({\cal N}_1\otimes{\cal N}_2) = \cK({\cal N}_1) \otimes \cK({\cal N}_2)< K_1 \otimes K_2$.

\medskip\noindent
{\bf ``$\mathbf{\geq}$'':} By the second expression in
Lemma~\ref{lemma:minimax},
\[
  C_{\min E}(K_1\otimes K_2)
    \geq \min_{{\cal N} \text{ s.t.}\atop \cK({\cal N}) < K_1\otimes K_2}
                                                  I(\rho_1\otimes\rho_2;{\cal N}),
\]
and we need only to show that the minimum is attained at a
product channel ${\cal N} = {\cal N}_1\otimes{\cal N}_2$
with $\cK({\cal N}_i)< K_i$.
For this purpose, consider the state
\[
  \sigma_{A_1A_2B_1B_2} = (\id_{A_1}\otimes\id_{A_2}\otimes{\cal N})(\phi_{1}\otimes\phi_{2}),
\]
for the purifications $\phi_i$ of $\rho_i$ ($i=1,2$). Now observe
that with respect to $\sigma$,
\[\begin{split}
  I(A_1A_2:B_1B_2) - I(A_1:B_1) - I(A_2:B_2)
          &= S(A_1A_2)+S(B_1B_2)-S(A_1A_2B_1B_2) \\
          &\phantom{======} -S(A_1)-S(B_1)+S(A_1B_1) \\
          &\phantom{======} -S(A_2)-S(B_2)+S(A_2B_2) \\
          &= I(A_1B_1:A_2B_2) - I(B_1:B_2) - I(A_1:A_2) \geq 0,
\end{split}\]
because $I(A_1:A_2)=0$ and by strong subadditivity. In other words,
\[\begin{split} 
  I(A_1A_2:B_1B_2)_\sigma &\geq I(A_1:B_1)_{\sigma_1} + I(A_2:B_2)_{\sigma_2} \\
                          &=    I(A_1A_2:B_1B_2)_{\sigma_1\otimes\sigma_2},
\end{split}\]
with the reduced states
\begin{align*}
  \sigma_1 = \sigma_{A_1B_1} &= \tr_{A_2B_2} \sigma
                              = \bigl(\id_{A_1}\otimes(\tr_{B_2}\!\circ{\cal N})\bigr)(\phi_1\otimes\rho_2), \\
  \sigma_2 = \sigma_{A_2B_2} &= \tr_{A_1B_1} \sigma
                              = \bigl(\id_{A_2}\otimes(\tr_{B_1}\!\circ{\cal N})\bigr)(\rho_1\otimes\phi_2).
\end{align*}
I.e.,
\[
  I(\rho_1\otimes\rho_2;{\cal N}) \geq I\bigl(\rho_1;\tr_{B_2}\!\circ{\cal N}(\cdot\otimes\rho_2)\bigr)
                                        + I\bigl(\rho_2;\tr_{B_1}\!\circ{\cal N}(\rho_1\otimes\cdot)\bigr).
\]

Finally, $\tr_{B_2}\!\circ{\cal N}(\cdot\otimes\rho_2)$ is eligible:
If ${\cal N}$ has Kraus operators $E_i \in K_1\otimes K_2 < {\cal L}(A_1A_2\!\rightarrow\!B_1B_2)$,
and choosing an eigenbasis of $\rho_2$ and an arbitrary basis of $B_2$,
\[
  \cK\bigl(\tr_{B_2}\!\circ{\cal N}(\cdot\otimes\rho_2)\bigr)
     = \operatorname{span}\, \bigl\{ \bra{j}_{B_2} E_i \ket{k}_{A_2} : i,j,k \bigr\}
     < K_1.
\]
$\cK\bigl(\tr_{B_1}\!\circ{\cal N}(\rho_1\otimes\cdot)\bigr) < K_2$ is
analogous, and we are done.
\end{proof}
%

\begin{lemma}
  \label{lemma:mono}
  All of $C_{0EF}$, $\overline{C}_{0EF}$ and $C_{\min E}$ are monotonic under 
  pre- and post-processing of the channel: For non-commutative bipartite graphs
  $K < \cL(A\rightarrow B)$ and $K_A < \cL(U \rightarrow A)$,
  $K_B < \cL(B \rightarrow V)$, the matrix-multiplied space 
  $K_B K K_A < \cL(U \rightarrow V)$ is a non-commutative bipartite
  graph, and
  \begin{align*}
    C_{0EF}(K)            &\geq C_{0EF}(K_B K K_A),            \\
    \overline{C}_{0EF}(K) &\geq \overline{C}_{0EF}(K_B K K_A), \\
    C_{\min E}(K)         &\geq C_{\min E}(K_B K K_A).
  \end{align*}
\end{lemma}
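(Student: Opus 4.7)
The plan is to handle all three monotonicity claims by the same simulation strategy. First I would verify that $K_B K K_A$ is indeed a non-commutative bipartite graph: choosing any channels $\cN_A, \cN, \cN_B$ with $\cK(\cN_A) < K_A$, $\cK(\cN) < K$, $\cK(\cN_B) < K_B$ (such channels exist since the $K_A, K, K_B$ are themselves non-commutative bipartite graphs), the composition $\cN_B \circ \cN \circ \cN_A: U \to V$ has Kraus operators of the form $G_b E_j F_a$ with $F_a \in K_A$, $E_j \in K$, $G_b \in K_B$, so $\cK(\cN_B \circ \cN \circ \cN_A) < K_B K K_A$.

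For $C_{0EF}(K) \geq C_{0EF}(K_B K K_A)$, I would take Stinespring dilations $V_A = \sum_a \ket{a}_{E_A} \otimes F_a$ and $V_B = \sum_b \ket{b}_{E_B} \otimes G_b$ of $\cN_A$ and $\cN_B$, and, given any zero-error feedback-assisted code for $K_B K K_A$ in $n$ rounds, produce one for $K$ by modifying Alice's encoding isometries to append $V_A$ (her local register $X_t$ growing to include the $E_A$ ancilla) and Bob's feedback isometries to prepend $V_B$ (his $Y_t$ growing to include $E_B$). The new output vector for a Kraus sequence $\underline{j} = j_1 \ldots j_n$ of $\cN$ then decomposes as
\[
  \ket{\widetilde\phi^{(m)}_{\underline{j}}}
      = \sum_{\underline{a}, \underline{b}}
         \ket{\underline{a}}_{E_A^n} \ket{\underline{b}}_{E_B^n}
         \otimes \ket{\phi^{(m)}_{(G_{b_t} E_{j_t} F_{a_t})_t}},
\]
where the last factor is the original code's output under Kraus sequence $(G_{b_t} E_{j_t} F_{a_t})_t$ of $\cN_B \circ \cN \circ \cN_A$. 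Substituting this expansion into the $K$-orthogonality condition from the excerpt and exploiting orthogonality of the $\ket{\underline{a}}, \ket{\underline{b}}$ basis reduces it to a sum of the original code's $K_B K K_A$-orthogonality conditions, each of which vanishes. The same modification applied to a code using $b$ extra noiseless forward bits (whose isometries do not involve $\cN$) gives $\overline{C}_{0EF}(K) \geq \overline{C}_{0EF}(K_B K K_A)$ identically.

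For $C_{\min E}(K) \geq C_{\min E}(K_B K K_A)$ I would rely on the operational meaning of the entanglement-assisted classical capacity together with its monotonicity under pre- and post-composition with channels, an immediate consequence of the data-processing inequality for quantum mutual information (equivalently, any entanglement-assisted code for $\cN$ induces one for $\cN_B \circ \cN \circ \cN_A$ at the same rate by absorbing $\cN_A, \cN_B$ into encoder and decoder). For any $\cN$ with $\cK(\cN) < K$ and any fixed $\cN_A, \cN_B$ as above, this yields $C_{\min E}(K_B K K_A) \leq C_E(\cN_B \circ \cN \circ \cN_A) \leq C_E(\cN)$, and minimising over $\cN$ gives the bound.

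The main subtlety I expect sits in the zero-error step: the $K$-orthogonality condition quantifies independently over each $E_{j_t} \in K$, yet the hypothesis only directly controls Kraus operators of the composed channel. The decomposition displayed above is what makes this work, because it explicitly records the Kraus labels of $\cN_A$ and $\cN_B$ in the ancillary registers $E_A^n$ and $E_B^n$, so that the single $K_B K K_A$-orthogonality of the source code furnishes a whole family of $K$-orthogonality identities indexed by $(\underline{a}, \underline{b})$. Once this is in place, the remaining steps are essentially bookkeeping.
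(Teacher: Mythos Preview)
Your proposal is correct and follows essentially the same approach as the paper: the paper's proof is the terse operational statement that pre- and post-processings ``may be absorbed into the input modulation and feedback-decoding,'' and for $C_{\min E}$ invokes the operational meaning of $C_E$, which is precisely what you carry out in detail via Stinespring dilations and the data-processing inequality. Your explicit tracking of the ancilla registers $E_A^n, E_B^n$ is the natural way to make the paper's one-line argument rigorous within its isometric encoder/decoder framework, and the ``subtlety'' you flag is not actually an obstacle since the $K_B K K_A$-orthogonality hypothesis already holds for \emph{all} elements of $K_B K K_A$ (not just the Kraus operators of one composed channel), which immediately covers every $G_b E F_a$ with $E \in K$.
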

\begin{proof}
  For $C_{0EF}$ and $\overline{C}_{0EF}$ this follows directly from 
  the operational definition:
  the pre- and post-processings may be absorbed into
  the input modulation and feedback-decoding, respectively,
  showing that a zero-error code for $K_B K K_A$ yields one
  for $K$.
  
  For $C_{\min E}$, the argument is similar using the fact that
  $C_E(\cN)$ is the operational entanglement-assisted capacity
  of the channel $\cN$~\cite{BSST}.
\end{proof}

\medskip
We can now give yet another characterization of the feasibility of
$\overline{C}_{0EF}(K)>0$, adding to the list of Theorem~\ref{thm:C-0EF-feasibility}
and Proposition~\ref{prop:feasibility-equiv}.

\begin{theorem}
  \label{thm:feasibility-C-minE}
  For any non-commutative bipartite graph $K$,
  $\overline{C}_{0EF}(K)>0$ if and only if $C_{\min E}(K)>0$. 
\end{theorem}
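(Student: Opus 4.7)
The plan is to prove the two implications separately, with one direction being immediate from the upper bound already established and the other reducing to the characterization of trivial non-commutative bipartite graphs.

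For the forward direction, $\overline{C}_{0EF}(K) > 0 \Rightarrow C_{\min E}(K) > 0$, I would simply invoke Theorem~\ref{thm:upper-bound}, which gives $C_{\min E}(K) \geq \overline{C}_{0EF}(K)$. Strict positivity of the left hand side is immediate.

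For the reverse direction, I would argue by contrapositive: assume $\overline{C}_{0EF}(K) = 0$ and show $C_{\min E}(K) = 0$. By Theorem~\ref{thm:C-0EF-feasibility} (together with Proposition~\ref{prop:feasibility-equiv}), $\overline{C}_{0EF}(K) = 0$ forces $K$ to be trivial, meaning there exists a state vector $\ket{\beta}\in B$ with $\ket{\beta}\ox A^\dag < K$. Consequently the constant channel $\cN_0:\rho \mapsto \proj{\beta}\tr\rho$ satisfies $\cK(\cN_0) < K$, so $\cN_0$ is eligible in the minimization defining $C_{\min E}(K)$.

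It then remains to observe that $C_E(\cN_0) = 0$. For any input state $\rho$ with purification $\ket{\phi}_{AA'}$, the output Choi state is
\[
  \sigma_{AB} = (\id\ox \cN_0)\proj{\phi} = \rho_A \ox \proj{\beta}_B,
\]
a product state, hence $I(A:B)_\sigma = 0$. Maximizing over $\rho$ gives $C_E(\cN_0)=0$, so $C_{\min E}(K) \leq C_E(\cN_0) = 0$, completing the proof. There is really no main obstacle here: the theorem is essentially a corollary of Theorem~\ref{thm:C-0EF-feasibility} combined with Theorem~\ref{thm:upper-bound}, because the only way the upper bound $C_{\min E}$ can vanish is witnessed by a constant channel, and constant channels are precisely the obstruction to $\overline{C}_{0EF}$ being positive.
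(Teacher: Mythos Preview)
Your proof is correct and follows essentially the same route as the paper: both hinge on the observation that $C_{\min E}(K)=0$ precisely when $K$ contains the Kraus space of a constant channel, which by Theorem~\ref{thm:C-0EF-feasibility} is exactly the condition $\overline{C}_{0EF}(K)=0$. The only minor difference is that for the forward implication you invoke the upper bound Theorem~\ref{thm:upper-bound} (which rests on Bowen's feedback result or the Quantum Reverse Shannon Theorem), whereas the paper handles both directions symmetrically via the triviality characterization alone, avoiding that heavier machinery; either way the argument is immediate.
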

\begin{proof}
The only way in which $C_{\min E}(K)$ can be $0$ is that
there is a channel $\mathcal{N}$ with $\mathcal{K}(\mathcal{N}) < K$
and $C_E(\mathcal{N})=0$, i.e.~$\mathcal{N}$ has to be constant.
We have seen that this is eqivalent to $\ket{\beta}\ox A < K$
for a state vector $\ket{\beta}\in B$. But by Theorem~\ref{thm:C-0EF-feasibility}
this is precisely the characterization of $\overline{C}_{0EF}(K)$
being $0$.
\end{proof}

\bigskip
To illustrate the bound of Theorem~\ref{thm:upper-bound}, 
we consider the example of Weyl diagonal channels and the
dependence on the output state geometry for cq-channels.

\medskip\noindent
{\bf Weyl diagonal channels.}
Denoting by $X$ and $Z$ the 
discrete translation and phase shift (which generate a subgroup of
the unitary group of cardinality $d^3$, thanks to the commutation
relation $XZ = \omega ZX$, $\omega = e^{2\pi i/d}$), consider the
channel
\[
  {\cal N}(\rho) = \sum_{a,b=0}^{d-1} p_{ab} X^a Z^b \rho Z^{-b} X^{-a},
\]
with probabilities $p_{ab} \geq 0$ summing to $1$. Clearly,
\[
  \cK({\cal N}) = \operatorname{span}\{ W_{ab} := X^a Z^b \,:\, p_{ab} > 0 \},
\]
i.e.~this $K$ is characterised by a subset ${\cal S} \subset \ZZ_d \times \ZZ_d$.
It supports precisely those Weyl diagonal channels ${\cal N}$ with $p_{ab}=0$
for $ab \not\in {\cal S}$ -- and of course many channels that are
not Weyl diagonal.

First, note that ${\cal N}$ above is Weyl-covariant:
\[
  {\cal N}(W_{ab}\rho W_{ab}^\dagger) = W_{ab} {\cal N}(\rho) W_{ab}^\dagger
\]
for all $ab$. From this, and the irreducibility of the action of
the Weyl operators on $\CC^d$, it follows that
\[
  C_E({\cal N}) = I\left(\frac{1}{d}\1;{\cal N}\right)
                  = 2 \log d - H(\vec{p}),
\]
where $\vec{p}=(p_{ab}: a,b=0,\cdots, d-1)$ is the probability vector.
This means that for a $k$-element ${\cal S} \subset \ZZ_d\times \ZZ_d$
and $K = \operatorname{span}\{ W_{ab} \,:\, ab \in {\cal S} \}$,
\begin{equation}
  \label{eq:Weyl-diag-CE}
  \min_{{\cal N} \text{ Weyl-diag.} \atop \cK({\cal N}) < K} C_E({\cal N}) = 2\log d - \log k,
\end{equation}
the minimum being attained at the uniform distribution on ${\cal S}$:
$p_{ab} = \frac{1}{k}$ for $ab\in{\cal S}$, and $0$ otherwise.

We will now show that $2\log d - \log k$ is an achievable rate of
zero-error communication via this channel when assisted by feedback
(plus a constant activating amount of noiseless communication).
The key is the observation that if we use
\[
  {\cal N}_0(\rho) = \frac{1}{k}\sum_{ab\in{\cal S}} W_{ab} \rho W_{ab}^\dagger
\]
with dense coding, i.e.~with a
maximally entangled state $\ket{\Phi_d}$ and sender modulation
by the very Weyl operators $W_{ab}$, the receiver making a Bell
measurement in the basis $(W_{ab}\otimes\1)\ket{\Phi_d}$, we obtain a
generalised typewriter channel
\begin{align*}
  T : \ZZ_d\times\ZZ_d &\longrightarrow \ZZ_d\times\ZZ_d, \\
              T(ab|cd) &= \begin{cases}
                            \frac{1}{k} & \text{ if } (a-c,b-d)\in{\cal S}, \\
                            0           & \text{ otherwise.}
                          \end{cases}
\end{align*}
(And choosing a different ${\cal N}$ supported by $K$ changes only
the non-zero transition probabilities.)
$T$ is easily seen to have fractional packing number $d^2/k$,
so its activated feedback-assisted zero-error capacity is $2\log d - \log k$.
Hence $\overline{C}_{0EF}(K) \geq 2\log d - \log k$, and together with
eq.~(\ref{eq:Weyl-diag-CE}), we conclude
\[
  \overline{C}_{0EF}(K) = C_{\min E}(K) = \overline{C}_{0F}(T) = 2\log d - \log k.
\]

Finally, this is also the minimal zero-error communication cost
to simulate a channel supported by $K$ (using entanglement and
shared randomness), making use of an idea in~\cite{BSST}:
By the results of~\cite{CLMW:0}, one can simulate $T$ with free
shared randomness at communication rate $2\log d - \log k$. Now,
if in the teleportation protocol using a maximally entangled state
and the Weyl unitaries $W_{ab}$, we replace the noiseless channel
of $d^2$ messages by this $T$, one simulates exactly ${\cal N}_0$.
\qed

\medskip\noindent
{\bf Nontrivial dependence of $\mathbf{\overline{C}_{0EF}}$ on the channel geometry.}
Consider a non-commutative bipartite graph corresponding to a
pure state cq-channel, $K = \operatorname{span}\{ \ketbra{\psi_i}{i} \}$. 
We can see that $\overline{C}_{0EF}(K)$ depends nontrivially 
on the geometry of the vector arrangement of the $\ket{\psi_i}$, 
even if they are all pairwise non-orthogonal: Indeed,
when they are close to parallel, $\overline{C}_{0EF}(K)$ is arbitrarily
close to $0$, but when they are sufficiently close to being mutually orthogonal, 
$\overline{C}_{0EF}(K)$ is arbitrarily close to $\log|A|$.

Clearly, the closer to being parallel the $\ket{\psi_i}$ are,
the larger the required $n$ in the argument in Subsection~\ref{subsec:pure-cq}
becomes, so the lower bound moves closer to $0$. 
On the other hand, this is really necessary, since
\[
  C_{\min E}(K) = \max_{(p_i)} S\left(\sum_i p_i \proj{\psi_i}\right)
\]
converges to $0$ as the $\ket{\psi_i}$ get closer to being collinear. 

In the other extreme, to show that $C_{0EF}(K) \rightarrow \log |A|$
when $C_{\min E}(K) \rightarrow \log |A|$,
i.e.~when the $\psi_i$ become closer and closer to being orthogonal,
we use once more the ideas from Subsection~\ref{subsec:pure-cq}:
Assume that for all $i\neq j$, $|\bra{\psi_i}\psi_j\rangle| \leq \epsilon$,
which is a more convenient expression for $C_{\min E}(K) \geq \log |A| - \delta$.

We claim that if $\epsilon$ is small enough, we can use $K$ to simulate 
a ``random superset channel'' (cf.~\cite{CLMW:0}): 
for integers $t < a = |A|$ define the classical channel 
$S_{1,t}^a : [a] \rightarrow {[a] \choose t}$ such that
\[
  S_{1,t}^a: [a] \ni i \longmapsto J \in {[a] \choose t} \text{ randomly with } i \in J,
\]
where ${[a] \choose t}=\{J: J\subseteq [a], |J|=t\}$, the collection of all subsets of $[a]$ with $t$ elements. Note that the transition probability matrix of $S_{1,t}^a$ is given by $\{p(J|i)\}$ such that 
$$p(J|i)={{a-1 \choose t-1}}^{-1},~i\in [a],~J\in {[a] \choose t}.$$

Indeed, we use the characterization of~\cite{CJW-pure-trans},
which will show that there is a deterministic transformation of the 
set $\{\ket{\psi_i}\}$ to the set $\{\ket{\varphi_i}\}$, with
\[
  \ket{\varphi_i} = \frac{1}{\sqrt{{a-1 \choose t-1}}} \sum_{i \in J \in {[a]\choose t}} \ket{J}
                  \in \CC^{{a \choose t}}.
\]
Once this is achieved, Bob measures the states $\ket{\varphi_i}$
in the computational basis, resulting in an output of the channel
$S_{1,t}^a$.
To see this in detail, let us focus on the smallest possible case $t=2$,
for which we see that for $i\neq j$, $\bra{\varphi_i}\varphi_j\rangle = \frac{1}{a-1}$.
The necessary and sufficient condition required in~\cite{CJW-pure-trans}
for the existence of a cptp map transforming $\{\ket{\psi_i}\}$ into 
$\{\ket{\varphi_i}\}$ is that there exists a positive semidefinite $a\times a$-matrix
$M$ such that $\Psi = \Phi \circ M$, where $\Psi = \bigl[ \bra{\psi_i}\psi_j\rangle \bigr]$
and $\Phi = \bigl[ \bra{\varphi_i}\varphi_j\rangle \bigr]$ are the Gram matrices
of the two input/output state sets, and $\circ$ denotes the elementwise
(Hadamard/Schur) product. In other words, 
\[
  M = \Psi \circ \Phi^{\circ -1} \geq 0, \quad\text{i.e.}\quad (a-1)\Psi \geq (a-2)\1.
\]
However, all eigenvalues of $\Psi$ are lower bounded by $1-(a-1)\epsilon$, 
which is $\geq \frac{a-2}{a-1}$ as soon as $\epsilon \leq \frac{1}{(a-1)^2}$.
In this case, we find 
$\overline{C}_{0EF}(K) \geq \overline{C}_{0F}(S_{1,2}^a) = \log a - 1$.
Applying the same to multiple copies of the channel, this reasoning
shows that if $\epsilon \leq (|A|-1)^{-2n}$, then
$\overline{C}_{0EF}(K) \geq \log a - \frac1n$.
\qed

\bigskip
We do not know whether in general $\overline{C}_{0EF}$ equals
$C_{\min E}$ or not. However, we can show that the latter is
a genuine capacity, as per the following theorem, whose proof
however we relegate to Appendix~\ref{app:adversarial} because
it would detract from our principal, zero-error argument.

\begin{theorem}
  \label{thm:adversarial}
  For any non-commutative bipartite graph $K$, the
  \emph{adversarial entanglement-assisted classical capacity}
  of $K$ is given by $C_{\ast E}(K) = C_{\min E}(K)$. 
\end{theorem}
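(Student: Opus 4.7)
The theorem asserts the two inequalities $C_{\ast E}(K) \leq C_{\min E}(K)$ and $C_{\ast E}(K) \geq C_{\min E}(K)$, of which the first is essentially immediate from the definitions. Any entanglement-assisted code that succeeds with vanishing error against every channel whose Kraus span is contained in $K$ must in particular succeed against the minimizer $\cN^\ast$ of $C_E$ over this set, and the converse part of the BSST coding theorem~\cite{BSST} then forces any achievable rate to lie below $C_E(\cN^\ast) = C_{\min E}(K)$.

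The nontrivial direction is the achievability, for which the plan is to exploit the maximin form
\[
  C_{\min E}(K) = \max_\rho \min_{\cK(\cN) < K} I(\rho;\cN)
\]
from Lemma~\ref{lemma:minimax}. Fix an input state $\rho^\ast$ attaining the outer maximum, so that $I(\rho^\ast;\cN) \geq C_{\min E}(K)$ simultaneously for every eligible $\cN$. I would use $n$ purified copies of $\rho^\ast$ as the preshared entanglement resource, build a random BSST-style codebook of size $\lfloor 2^{nR}\rfloor$ for any $R < C_{\min E}(K)$, and then average both the encoding isometries and the decoding POVM over the symmetric group $S_n$ acting on the $n$ channel uses. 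By this symmetrization, the worst-case error is achieved by a permutation-covariant adversarial cptp map $\cN^{(n)}:\cL(A^{\ox n})\to\cL(B^{\ox n})$ with Kraus operators in $K^{\ox n}$. For the i.i.d.\ subfamily $\cN^{\ox n}$ with $\cK(\cN) < K$, the standard BSST analysis (using $I(\rho^{\ast\ox n};\cN^{\ox n}) = n\,I(\rho^\ast;\cN) \geq n\,C_{\min E}(K)$) shows the error probability is exponentially small in $n$.

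The main obstacle is bridging from i.i.d.\ to general permutation-covariant adversaries, and this is where the Constrained Postselection Lemma of Appendix~\ref{app:postselection} enters. In analogy with~\cite{ChristandlKoenigRenner}, it should bound any permutation-covariant cptp map with Kraus operators in $K^{\ox n}$ by a $\mathrm{poly}(n)$ multiple of a de Finetti mixture $\int \cN^{\ox n}\,d\mu(\cN)$ whose support is restricted to channels satisfying $\cK(\cN) < K$. Applied to the error operator of our symmetrized code, this converts the worst-case error against the full adversarial class into at most $\mathrm{poly}(n)$ times the average i.i.d.\ error from the previous step, which still vanishes; the polynomial overhead contributes only $O\bigl(\tfrac{1}{n}\log n\bigr)\to 0$ to the rate, matching $C_{\min E}(K)$ in the limit. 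The technical crux—and the reason a new postselection result is required rather than the classical one—is verifying that the Kraus-span constraint $\cK(\cN^{(n)}) < K^{\ox n}$ is faithfully reflected by the de Finetti reduction, so that the mixture is genuinely supported on channels respecting the noncommutative bipartite graph $K$; establishing this reflection property while preserving the polynomial bound is the most substantive part of the argument.
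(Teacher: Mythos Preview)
Your upper bound and the broad strategy for achievability (symmetrize over $S_n$, then use a de Finetti reduction to collapse the adversary to something close to i.i.d.) match the paper. However, the execution you sketch diverges from what actually works, and the divergence is precisely at the ``technical crux'' you flag.

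The Constrained Postselection Lemma is applied to \emph{states}, not channels. One forms the permutation-symmetric output state $\sigma^{(n)} = (\id\ox\overline{\cN}^{(n)})\phi^{\ox n}$ and bounds it by $\text{poly}(n)\int {\rm d}\sigma\,\sigma^{\ox n}\,F(\sigma^A,\rho)^{2n}$, where the integral is over states supported on $(\1\ox K)\ket{\phi}$. Two constraints are in play: the support constraint (encoding $\cK(\cN)<K$) is preserved \emph{exactly} by the reduction, but the marginal constraint $\sigma^A=\rho$ (encoding trace preservation) is preserved only \emph{approximately}, through the fidelity weight. So the mixture is not over bona fide channels $\cN^{\ox n}$ with $\cK(\cN)<K$ as you posit; it is over states $\sigma$ that merely lie in an $\epsilon$-neighbourhood $\cS^{(\epsilon)}_{K,\rho}$ of the channel-output set, plus an exponentially suppressed remainder from the low-fidelity tail.

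Consequently the reduction is not to a single i.i.d.\ analysis but to a \emph{compound} problem indexed by $\sigma\in\cS^{(\epsilon)}_{K,\rho}$, and a separate coding theorem (Proposition~\ref{prop:compound}) is needed for this compound cq-channel. This is also why the encoding is not a generic BSST random codebook: Alice encodes messages as \emph{permutations} $\underline{\pi}(i)\in S_k^\ell$, so that the compound channel takes the form $\pi\mapsto(\1\ox U_\pi)\sigma^{\ox k}(\1\ox U_\pi)^\dagger$, letting Shor's lemma (Lemma~\ref{lemma:shor}) convert the Holevo quantity of this ensemble into $k\,I(A:B)_\sigma$ up to polylog corrections, after which the Bjelakovic--Boche compound result~\cite{quantum-compound-channel} delivers a code good uniformly over all $\sigma$. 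Your sketch elides both the approximate nature of the constraint and the compound-channel step, and without them the argument does not close.
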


The definition of this capacity is as follows: An entanglement-assisted
$n$-block code consists of an entangled state (w.l.o.g.~pure) 
$\ket{\phi}^{A_0 B_0}$, $N$ modulation cptp maps
$\mathcal{E}_i:\mathcal{L}(A_0) \rightarrow \mathcal{L}(A^n)$
($m=1,\ldots,N$), and a POVM $(D_i)_{i=1}^N$ on $B_0 B^n$.
The code is said to have error $\epsilon$ for $K^{\ox n}$
if the (average) error probability,
\[
  P_{\text{err}}\bigl(\cN^{(n)}\bigr)
       = \frac{1}{N}\sum_{i=1}^N \Bigl( 1 - \tr\bigl( (\cN^{(n)}\circ\cE_i \ox \id)\phi\bigr) D_i \Bigr),
\]
is $\leq \epsilon$ for every channel $\cN^{(n)}$ with 
$\mathcal{K}(\cN^{(n)}) < K^{\ox n}$. In this case, we call the 
collection $(\phi;\cE_i,D_i)$ an $(n,\epsilon)$-code for $K^{\ox n}$.
Denoting the largest number $N$ of messages of an $(n,\epsilon)$-code
as $N(n,\epsilon;K)$, the \emph{adversarial entanglement-assisted classical capacity}
is defined as
\[
  C_{\ast E}(K) := \inf_{\epsilon > 0} \liminf_{n\rightarrow\infty} \frac1n \log N(n,\epsilon;K).
\]

In Appendix~\ref{app:adversarial} we shall actually show that 
\[
  \lim_{n\rightarrow\infty} \frac1n \log N(n,\epsilon;K) = C_{\min E}(K)
\]
for every $0 <\epsilon < 1$ (this is known as a strong converse).
There we will see that even allowing entanglement and arbitrary feedback
in the communication protocol does not increase the capacity 
$C_{\ast E}(K)$ beyond $C_{\min E}(K)$, hence we may also address it 
as \emph{feedback-assisted adversarial capacity} $C_{\ast EF}(K)$.

\section{Conclusion}
\label{sec:outro}
We have introduced the problem of determining the zero-error capacity
of a quantum channel assisted by noiseless feedback. We showed that the
capacity only depends on the ``non-commutative bipartite graph'' $K$ of the
channel, and that every nontrivial $K$ has positive capacity.

Motivated by Shannon's treatment of the classical case, we considered the
minimisation of entanglement-assisted classical capacities over all
channels with the same non-commutative bipartite graph
and proved several properties of this definition: it is an upper
bound on the activated feedback-assisted zero-error capacity, it is given by
a minimax/maximin formula, and is additive. It is also shown to be
equal to the adversarial entanglement-assisted capacity.

Note that when restricting all statements above to classical channels,
which are given by a bipartite equivocation graph $\Gamma$, all of these
quantities boil down to the fractional packing number:
\[
  2^{C_{\min E}(K)} = 2^{C_{\min}(\Gamma)} = 2^{\overline{C}_{0F}(\Gamma)} = \alpha^*(\Gamma),
\]
which furthermore quantifies the zero-error capacity and simulation 
cost of $\Gamma$ when assisted by general no-signalling correlations~\cite{CLMW:0},
$2^{C_{0,\text{NS}}(\Gamma)} = 2^{S_{0,\text{NS}}(\Gamma)} = \alpha^*(\Gamma)$.
However, for quantum channels and non-commutative bipartite graphs 
these notions start diverging, so none of them can be considered
as a preferred ``quantum fractional packing number'': 
In~\cite{DW:ns-ass}, no-signalling assisted zero-error
capacity and simulation cost were determined for cq-channels,
$C_{0,\text{NS}}(K) = \log \Aram(K)$ and $S_{0,\text{NS}}(K) = \log \Sigma(K)$,
with the semidefinite packing number $\Aram(K)$ and another SDP $\Sigma(K)$,
and while in general (for cq-channels)
\[
  \log \Aram(K) \leq C_{\min E}(K) \leq \log \Sigma(K),
\]
both inequalities can be strict~\cite{DW:ns-ass}. It remains an open
question how $\overline{C}_{0EF}(K)$ fits into this picture, and in particular
whether it is equal to or sometimes strictly smaller than $C_{\min E}(K)$.
We believe that pure state cq-channels offer a good testing ground
for ideas; we might take encouragement from~\cite{TKG:unambiguous}, where it 
was shown that the \emph{unambiguous capacity} of a pure state cq-graph $K$
equals $C_{\min E}(K)$. Other interesting $K$ are those that admit only
one channel $\cN$, for instance channels extremal in the set of cptp maps, 
cf.~\cite{DW:ns-ass}, an example of which is the amplitude damping channel;
in this case, $C_{\min E}(K) = C_E(\cN)$.

Next, motivated by the fact that both $\Aram(K)$ and $\Sigma(K)$ are SDPs
(at least for cq-graphs), we ask if there is a manifestly semidefinite programming
(or even just convex optimisation) characterisation of $2^{C_{\min E}(K)}$? 
To make progress, we need at least to understand some properties of an optimal 
${\cal N}$ for given $K$, and potentially also an optimal input state.

To offer a concrete approach to the question whether $C_{\min E}(K)$
is an achievable rate for pure state cq-graph $K$, we suggest to look
at the possible use of conclusive exclusion to implement a list-decoding
protocol, by excluding more than one state by each outcome
-- cf.~\cite{BJOP}.
 
\begin{quote}
\emph{List-decoding from approximate decoding?}
Given state vectors $\ket{\psi_1},\ldots,\ket{\psi_N} \in B$ (w.l.o.g.~$|B|=N$)
that are sufficiently orthogonal in the sense that there exists
an orthonormal basis $\{\ket{v_1},\ldots,\ket{v_N}\}$ of $B$ such that
\[
  \forall i \quad |\langle v_i \ket{\psi_i}|^2 \geq 1-\epsilon.
  \ 
  \left(\text{For instance, this holds if for each }
  i,\ \sum_{j\neq i} |\langle \psi_i \ket{\psi_j}|^2 \leq \epsilon,\
  \text{by~\cite{HausladenJozsaSchumacherWestmorelandWootters}.}
  \right)
\]
Then, does there exist a subset of $N' \geq \Omega(N^{1-\delta})$ of
these states, $\{ \ket{\psi_{i_j}} : j=1,\ldots N' \}$,
$L \leq O(N^\delta)$ 
($\delta \rightarrow 0$ with $\epsilon \rightarrow 0$ uniformly)
and a POVM $\left( M_S : S\in{[N']\choose L} \right)$, such that
$\bigl\{ j : \bra{\psi_{i_j}}M_S\ket{\psi_{i_j}} \neq 0 \bigr\} \subset S$
for all $S\in{[N']\choose L}$?
\end{quote}

Note that a positive answer would imply that by preparing 
$\psi_{i_j}$ and measuring the POVM elements
$M_S$, we construct a classical channel/hypergraph $\Gamma$ with
$\alpha^*(\Gamma) \geq \frac{N'}{L}$.
To see this, observe that each output $S$ is reached from at most
$L$ inputs $j$, namely those $j\in S$, so the weight distribution 
$w_j = \frac{1}{L}$ for all $i$ is admissible in the definition of 
$\alpha^*(\Gamma)$.
Thus we would obtain
\[
  \overline{C}_{0EF}(K) \geq \overline{C}_{0F}(\Gamma) 
                        \geq \log\frac{N'}{L} \geq (1-2\delta)\log N - O(1),
\]
which is at least consistent with $C({\cal N})$ being
of the order $(1-\epsilon)\log N - O(1)$, by the existence of
the basis $\{\ket{v_1},\ldots,\ket{v_N}\}$ and Fano's inequality.

By Hausladen et~al.~\cite{HausladenJozsaSchumacherWestmorelandWootters}
this would imply that we can asymptotically achieve the rate $C(\cN)=C_{\min E}(K)$
as activated feedback-assisted zero-error capacity, where
$K=\text{span}\{\ketbra{\psi_i}{i}:i=1,\ldots,N\}$.
It would also imply a new proof of the result of~\cite{TKG:unambiguous},
since we could use the Shannon scheme~\cite{Shannon56} to get arbitrarily
close to the rate $\log\alpha^*(\Gamma)$ by a deterministic list-decoding
with constant list size, and then constant activating communication, 
which we clearly can realize in an unambiguous fashion with constant
overhead.

Finally, there is another generalization of the instantaneous feedback
considered by Shannon, which was dubbed ``coherent feedback'' in~\cite{QRST},
and which consist in the channel environment $C$ from the Stinespring
isometry $V:A\hookrightarrow B \ox C$ to be handed back to Alice. 
More like Shannon's model, it is completely passive as it doesn't
involve any action of Bob's. The resulting zero-error capacity,
$C_{0\ket{F}}(V)$ is not even obviously a function of $K$ only, nor
is it clear whether additional free entanglement or free active
feedback from Bob to Alice will increase it, though it is clear
from the Quantum Reverse Shannon Theorem that all of $C_{0\ket{F}}(V)$ 
and its variants are upper bounded by $C_{E}(\cN)$.

\bigskip\noindent
{\bf Acknowledgments.}
We thank Marcin Paw\l{}owski and Ciara Morgan for listening with
empathy to our teething problems with the mixed state cq-channels,
Aram Harrow, Janis N\"otzel and many others for enjoyable discussions
on zero-error information theory, and Matthias Christandl for 
a conversation on post-selection lemmas. 
Special thanks to Will Matthews for pointing out an error in our 
first proof of Shannon's Conjecture. We are also grateful to an anonymous reviewer 
for pointing out to us that the second part of the proof of Lemma \ref{lemma:add} 
was presented in \cite{HsiehDattaWilde} to show the superadditivity of mutual information.

RD was or is supported in part by the Australian Research Council (ARC) under 
Grant DP120103776 (with AW), 
and by the National Natural Science Foundation of China under grant 
no. 61179030. He was also supported in part by an ARC 
Future Fellowship under Grant FT120100449.
SS was or is supported by a Newton International Fellowship, the Royal Society
and the U.K.~EPSRC.
AW was or is in part supported by the European Commission (STREPs ``QCS''
and ``RAQUEL''), the European Research Council (Advanced Grant ``IRQUAT''), 
the U.K.~EPSRC, the Royal Society and a Philip Leverhulme Prize. 
Furthermore, by the Spanish MINECO, projects FIS2008-01236 and
FIS2013-40627-P, with the support of FEDER funds, as well as by
the Generalitat de Catalunya CIRIT, project no.~2014 SGR 966.

\appendix

\section{$\mathbf{C_{\min E}(K)}$ equals the adversarial entanglement-assisted capacity}
\label{app:adversarial}

Here we give a complete proof of the following theorem from
Section~\ref{sec:C_min_E}.

\medskip\noindent
{\bf Theorem~\ref{thm:adversarial}}
For any non-commutative bipartite graph $K$, the
\emph{adversarial entanglement-assisted classical capacity}
of $K$ is given by $C_{\ast E}(K) = C_{\min E}(K)$. 

\bigskip
Before proving it, we show a simpler statement on so-called compound 
channels, which will be pivotal for the general proof, however. 
For a non-commutative bipartite graph $K < \cL(A\rightarrow B)$,
and a pure state $\ket{\phi} \in AA'$ such that $\phi^A = \phi^{A'} = \rho$,
define $X = (\1\ox K)\ket{\phi} < A\ox B$ and the sets of states,
\[
  \cS_{K,\rho} := \bigl\{ (\id\ox\cN)\phi : \cK(\cN) < K \bigr\}
                = \bigl\{ \sigma \in \cS(AB) : \operatorname{supp}\sigma < X,\ \sigma^A=\rho \bigr\},
\]
as well as, for $\epsilon > 0$,
\[
  \cS^{(\epsilon)}_{K,\rho}
  = \bigl\{ \sigma \in \cS(AB) : \exists \sigma'\in\cS_{K,\rho}
                                 \ \text{s.t.}\ \|\sigma-\sigma'\|_1 \leq \epsilon \bigr\}.
\]

\begin{proposition}
  \label{prop:compound}
  For any non-commutative bipartite graph $K < \cL(A\rightarrow B)$,
  a test state $\rho$ on $A$, and parameters $\epsilon > 0$ and an integer $k$,
  consider the family of cq-channels
  $\bigl[ W^\sigma:S_k \rightarrow \cS(A^k\ox B^k) : \sigma\in\cS^{(\epsilon)}_{K,\rho} \bigr]$,
  with
  \[
    W^\sigma : \pi \longmapsto (\1\ox U_\pi)\sigma^{\ox k}(\1\ox U_\pi)^\dagger.
  \]
  Then, for sufficiently large $\ell$, there is an $\ell$-block code
  of $N=2^{nR}$ messages ($n=k\ell$) and decoding POVM $(D_i)_{i=1}^N$, with rate
  \[
    R \geq \min_{\sigma\in\cS_{K,\rho}} I(A:B)_\sigma - 2\delta,
  \]
  and uniformly bounded error probability
  \[
    P_{\text{err}}\bigl((W^\sigma)^{\ox\ell}\bigr) 
        = \frac{1}{N}\sum_{i=1}^N 
            \Bigl( 1 - \tr\bigl( W^\sigma_{\pi_1(i)}\ox\cdots\ox W^\sigma_{\pi_\ell(i)} \bigr) D_i \Bigr)
        \leq c^\ell
  \]
  for all $\sigma\in\cS^{(\epsilon)}_{K,\rho}$.
  Here, $c < 1$ and 
  $\delta = 2\epsilon\log(|A||B|) + \frac{3}{k} + 2|B|^2 \frac{\log(k+|B|)}{k}$.
\end{proposition}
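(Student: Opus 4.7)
The plan is to establish a compound classical--quantum coding theorem for the family $\{W^\sigma : \sigma\in\cS^{(\epsilon)}_{K,\rho}\}$, combining random permutation codes and a universal decoder with a Schur--Weyl lower bound on the Holevo information of $W^\sigma$ under the uniform input distribution on $S_k$.

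\textbf{Step 1 (continuity).} For every $\sigma\in\cS^{(\epsilon)}_{K,\rho}$ there exists $\sigma'\in\cS_{K,\rho}$ with $\|\sigma-\sigma'\|_1 \leq \epsilon$, and applying Fannes--Audenaert type continuity to each of the three entropies in $I(A\!:\!B)_\sigma = S(\sigma^A)+S(\sigma^B)-S(\sigma)$ yields
\[
|I(A\!:\!B)_\sigma - I(A\!:\!B)_{\sigma'}| \leq 2\epsilon\log(|A||B|) + o_\epsilon(1).
\]
This accounts for the first term of $\delta$ and reduces the task to building a code that succeeds uniformly on $\cS^{(\epsilon)}_{K,\rho}$ at per-symbol rate $\min_{\sigma\in\cS_{K,\rho}} I(A\!:\!B)_\sigma$, up to the remaining $O(1/k)$ corrections.

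\textbf{Step 2 (Holevo bound for $W^\sigma$, the technical heart).} Writing $\bar\tau^\sigma := \tfrac{1}{k!}\sum_\pi (\1\otimes U_\pi)\sigma^{\otimes k}(\1\otimes U_\pi)^\dagger$, one has $\chi(u;W^\sigma) = S(\bar\tau^\sigma) - kS(\sigma)$. Because $\sigma^{\otimes k}$ is invariant under the simultaneous $S_k$-action on $A^k$ and $B^k$, the twirl can be transported onto the $A$-side, and in particular $\bar\tau^\sigma$ has marginals $\rho^{\otimes k}$ on $A^k$ and $(\sigma^B)^{\otimes k}$ on $B^k$. Decomposing $B^{\otimes k} = \bigoplus_\lambda \mathcal{V}_\lambda\otimes\mathcal{U}_\lambda$ by Schur--Weyl duality, the $S_k$-twirl replaces each multiplicity factor by $\Pi_{\mathcal{U}_\lambda}/\dim\mathcal{U}_\lambda$, endowing $\bar\tau^\sigma$ with a block-diagonal structure that contributes an entropy $\log\dim\mathcal{U}_\lambda$ per block. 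Counting at most $(k+|B|)^{|B|^2}$ blocks and using that the typical-$\lambda$ multiplicity dimensions behave like $2^{kS(\sigma^B)}$ up to polynomial-in-$k$ factors, one arrives at
\[
S(\bar\tau^\sigma) \geq kS(\rho) + kS(\sigma^B) - 2|B|^2 \log(k+|B|),
\]
and hence $\chi(u;W^\sigma) \geq k\,I(A\!:\!B)_\sigma - 2|B|^2\log(k+|B|)$. Dividing by $k$ to convert to a per-symbol rate yields the $\tfrac{2|B|^2\log(k+|B|)}{k}$ contribution to $\delta$. An equivalent derivation applies the Constrained Postselection Lemma of Appendix~\ref{app:postselection} on the symmetric subspace to produce the same bound with the same polynomial overhead.

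\textbf{Step 3 (compound coding and wrap-up).} Cover $\cS^{(\epsilon)}_{K,\rho}$ by a $1/k$-net of cardinality $k^{O((|A||B|)^2)}$: by Step~1 this costs only $O(1/k)$ in rate, which together with the sub-leading terms in the Schur--Weyl accounting produces the $3/k$ contribution to $\delta$. Draw $N=2^{nR}$ codewords i.i.d.\ uniformly from $S_k^\ell$ and decode with a universal Hayashi--Nagaoka / square-root measurement (or a Bjelakovi\'c--Boche sequential decoder for compound cq-channels) built from the average typical subspace over the net. A Chernoff bound on the expected codebook error, combined with a union bound over the net and the Holevo estimate of Step~2, yields uniform error $\leq c^\ell$ as soon as $kR \leq \min_\sigma \chi(u;W^\sigma) - O(\log k)$, which rearranges into the claimed $R \geq \min_\sigma I(A\!:\!B)_\sigma - 2\delta$. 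The main obstacle is precisely the entropy lower bound of Step~2: subadditivity only supplies the matching upper bound $S(\bar\tau^\sigma) \leq kS(\rho)+kS(\sigma^B)$, and showing that the $B$-side twirl really does decouple $A^k$ from $B^k$ at the von~Neumann entropic level --- up to a $\mathrm{poly}(k,|B|)$ slack --- genuinely requires the Schur--Weyl block analysis, or equivalently a de~Finetti-style postselection reduction.
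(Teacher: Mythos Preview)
Your proposal follows the same three-part architecture as the paper: Fannes continuity to pass from $\cS^{(\epsilon)}_{K,\rho}$ to $\cS_{K,\rho}$, a Schur--Weyl lower bound on the Holevo information of $W^\sigma$ under the uniform input on $S_k$, and compound cq-channel coding. Two differences are worth flagging. First, the paper simply invokes the Bjelakovic--Boche compound cq-coding theorem as a black box rather than rebuilding it via nets, random codes and a union bound as in your Step~3; your sketch is not wrong, but the $3/k$ term in $\delta$ in the paper comes from the slack $\delta$ in that black-box invocation, not from a net argument. Second, and more interesting, the paper's proof of the key entropy bound (its Lemma~\ref{lemma:shor}, due to Shor) is tighter than your ``typical-$\lambda$ multiplicity dimensions behave like $2^{kS(\sigma^B)}$'' heuristic: it writes $\chi = kI(A:B)_\sigma - I(A^k:B^k)_\Omega$ directly from $\Omega^{A^k}=(\sigma^A)^{\otimes k}$ and $\Omega^{B^k}=(\sigma^B)^{\otimes k}$, then observes that after the $B$-side twirl the $S_k$-irrep registers $P_\lambda$ in $B^{\ox k}=\bigoplus_\lambda Q_\lambda^b\otimes P_\lambda$ are maximally mixed and can be discarded by a reversible local map on $B^k$, giving $I(A^k:B^k)_\Omega \leq 2\log\sum_\lambda\dim Q_\lambda^b \leq 2|B|^2\log(k+|B|)$ without any typicality or spectrum-estimation input. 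Two minor slips: the $A^k$-marginal of $\bar\tau^\sigma$ is $(\sigma^A)^{\otimes k}$, not $\rho^{\otimes k}$ (this is harmless since you only need $I(A:B)_\sigma$); and the Constrained Postselection Lemma is \emph{not} used here in the paper --- it enters only later, in the proof of Theorem~\ref{thm:adversarial}, to reduce the fully adversarial channel to this compound setting.
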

\begin{proof}
The family of cq-channels
$\bigl[ W^\sigma:S_k \rightarrow \cS(A^k\ox B^k) : \sigma\in\cS^{(\epsilon)}_{K,\rho} \bigr]$
generates a compound channel, meaning that on block length $\ell$, the
communicating parties face one of the i.i.d.~channels
$(W^\sigma)^{\ox \ell}$, $\sigma\in\cS^{(\epsilon)}_{K,\rho}$, but they
do not know beforehand which one, so they need to use a code that is
good for all of them.

For this we invoke the general result of Bjelakovic and 
Boche~\cite{quantum-compound-channel},
which states that there are such codes with rate 
\[
  \min_{\sigma\in\cS^{(\epsilon)}_{K,\rho}} 
    \chi\left(\left\{ p_\pi = \frac{1}{k!}, 
                W^\sigma_\pi = (\1\ox U_\pi)\sigma^{\ox k}(\1\ox U_\pi)^\dagger \right\}\right) - k\delta
\]
for any $\delta > 0$ and with error probability
uniformly bounded by $c^\ell$, $c=c(\delta)<1$.

By Lemma~\ref{lemma:shor} below, 
\[
  \chi\left(\left\{ p_\pi = \frac{1}{k!}, 
                    W^\sigma_\pi = (\1\ox U_\pi)\sigma^{\ox k}(\1\ox U_\pi)^\dagger \right\}\right) 
                                                              \geq k\,I(A:B)_\sigma - 2|B|^2 \log(k+|B|), 
\]
and because there is $\sigma'\in\cS_{K,\rho}$ with
$\|\sigma-\sigma'\|_1 \leq \epsilon$, Fannes' inequality~\cite{Fannes}
shows that the rate (over $n=k\ell$) is
\[
  \geq \min_{\sigma\in\cS_{K,\rho}} I(A:B)_\sigma 
         - 2\epsilon\log(|A||B|) - \frac{3}{k} - 2|B|^2 \frac{\log(k+|B|)}{k} - \delta,
\]
and we are done, choosing $\delta$ as advertised.

We end this proof pointing out a rather nice feature of the code:
each message is encoded as an $\ell$-tuple of permutations from $S_k$,
$i\mapsto \underline{\pi}(i) = \pi_1(i)\ldots\pi_\ell(i)$, which we may view naturally as
an element of $S_k \times \cdots \times S_k \subset S_n$, acting
on $B^n$ by permuting the tensor factors, each $\pi_j(i)$ on its own
block of $k$, hence message $i$ is mapped to the state
$W^\sigma_{\underline{\pi}(i)} 
= (\1\ox U_{\underline{\pi}(i)})\sigma^{\ox n}(\1\ox U_{\underline{\pi}(i)})^\dagger$
on $A^n B^n$.
\end{proof}

\begin{lemma}[Cf.~Shor~\cite{Shor:C_E}]
  \label{lemma:shor}
  For any channel $\cN:\cL(A)\rightarrow\cL(B)$ and a state $\rho$ 
  on $A$ with purification $\ket{\phi}\in AA'$, 
  and let $\sigma^{AB} = (\id\ox\cN)\phi$. 
  Then, for any integer $k$,
  \[
    \chi\left(\left\{ p_\pi = \frac{1}{k!}, 
                      W^\sigma_\pi = (\1\ox U_\pi)\sigma^{\ox k}(\1\ox U_\pi)^\dagger \right\}\right) 
                                                              \geq k\,I(A:B)_\sigma - 2|B|^2 \log(k+|B|), 
  \]
  where $\pi$ ranges over the symmetric group $S_k$, acting on $B^k$
  by permuting the tensor factors. 
\end{lemma}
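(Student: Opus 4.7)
The plan is to reduce the Holevo quantity to the mutual information of a single twirled bipartite state and then exploit Schur--Weyl symmetry to bound it. First, since each $W^\sigma_\pi$ is unitarily equivalent to $\sigma^{\ox k}$ (via $\1\ox U_\pi$), we have $S(W^\sigma_\pi)=k\,S(\sigma)$, so the Holevo quantity simplifies to $\chi = S(\bar W)-k\,S(\sigma)$ with $\bar W:=\frac{1}{k!}\sum_\pi W^\sigma_\pi$. The twirl is trivial on $A^k$ and the marginal $\sigma_B^{\ox k}$ is already $S_k$-invariant, so $\bar W^{A^k}=\rho^{\ox k}$ and $\bar W^{B^k}=\sigma_B^{\ox k}$; a one-line rearrangement then yields
\[
k\, I(A:B)_\sigma - \chi = I(A^k:B^k)_{\bar W}.
\]
It therefore suffices to show $I(A^k:B^k)_{\bar W}\leq 2|B|^2\log(k+|B|)$.

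Since $\bar W$ commutes with $\1_{A^k}\ox U_\pi^B$ for every $\pi\in S_k$, Schur--Weyl duality $B^{\ox k}\cong\bigoplus_\lambda \cM_\lambda\ox\cP_\lambda$ (over Young diagrams $\lambda\vdash k$ with at most $|B|$ rows) forces the block-diagonal form
\[
\bar W=\bigoplus_\lambda p_\lambda\,\hat\omega_\lambda^{A^k\cM_\lambda}\ox\frac{\1_{\cP_\lambda}}{\dim\cP_\lambda}.
\]
Because $\bar W$ is already block-diagonal in this decomposition, the Schur--Weyl PVM on $B^k$ is non-disturbing, and adjoining a classical register $T$ recording its outcome preserves $I(A^k:B^k)_{\bar W}=I(A^k:B^kT)$. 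The chain rule, together with the fact that conditional on $T=\lambda$ the $\cP_\lambda$ factor is maximally mixed and decoupled from everything else, gives
\[
I(A^k:B^k)_{\bar W}= I(A^k:T)+\sum_\lambda p_\lambda\, I(A^k:\cM_\lambda)_{\hat\omega_\lambda}\leq H(T)+2\sum_\lambda p_\lambda\log\dim\cM_\lambda,
\]
using the elementary bound $I(X:Y)\leq 2\log\dim Y$ on each conditional term.

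All that remains is dimension counting. The number $L$ of partitions of $k$ into at most $|B|$ parts satisfies $L\leq(k+1)^{|B|-1}$, so $H(T)\leq(|B|-1)\log(k+|B|)$. By the Weyl dimension formula, $\dim\cM_\lambda\leq(k+|B|)^{|B|(|B|-1)/2}$. Combining,
\[
I(A^k:B^k)_{\bar W}\leq(|B|-1)\log(k+|B|)+|B|(|B|-1)\log(k+|B|)=(|B|^2-1)\log(k+|B|)\leq 2|B|^2\log(k+|B|).
\]
The one genuinely non-routine step is the Schur--Weyl reduction in the middle paragraph: without the block-diagonality observation (and the resulting non-disturbance of the type measurement), the chain rule would leak a term of order $k\,S(\sigma_B)$ that is far too large. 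Everything else is standard entropy inequalities and Young-tableau combinatorics.
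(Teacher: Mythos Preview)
Your proof is correct and follows essentially the same route as the paper: both reduce the Holevo quantity to $kI(A:B)_\sigma - I(A^k:B^k)_{\bar W}$ via the identical marginal computation, and both then exploit the Schur--Weyl block decomposition $\bar W=\bigoplus_\lambda q_\lambda\,\omega_\lambda^{A^kQ_\lambda}\ox\tau_\lambda^{P_\lambda}$ to show that only the (polynomially small) $GL(|B|)$-irrep registers carry correlation with $A^k$. The only cosmetic difference is in the final step: the paper compresses $B^k$ reversibly to $D=\bigoplus_\lambda Q_\lambda$ and bounds $I(A^k:D)\leq 2\log|D|\leq 2|B|^2\log(k+|B|)$, whereas you adjoin the type register and apply the chain rule, which yields the slightly sharper constant $(|B|^2-1)$; the two arguments are equivalent in content.
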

\begin{proof}
With the average state 
\[
  \Omega^{A^k B^k} = \frac{1}{k!} \sum_{\pi\in S_k} (\1\ox U_\pi)\sigma^{\ox k}(\1\ox U_\pi)^\dagger,
\]
we have
\[\begin{split}
  \chi\left(\left\{ \frac{1}{k!}, (\1\ox U_\pi)\sigma^{\ox k}(\1\ox U_\pi)^\dagger \right\}\right) 
            &= S\bigl( \Omega^{A^k B^k} \bigr) - S\bigl( \sigma^{\ox k} \bigr)                   \\
            &= S\bigl( \Omega^{A^k} \bigr) + S\bigl( \Omega^{B^k} \bigr) - I(A^k:B^k)_\Omega 
                                                                 - S\bigl( \sigma^{\ox k} \bigr) \\
            &= k I(A:B)_\sigma - I(A^k:B^k)_\Omega,
\end{split}\]
where we have used that all ensemble members are just unitary transformed
versions of $\sigma^{\ox k}$ (first line), the definition of the mutual
information (second line), the fact that $\Omega^{A^k} = (\sigma^A)^{\ox k}$
and $\Omega^{B^k} = (\sigma^B)^{\ox k}$ as well as additivity of the von
Neumann entropy (third line).

Now we use the representation theory of $S_k$ acting on $B^k$ to bound the
mutual information remaining: From Schur-Weyl duality~\cite{FH1991} it is 
known that
\[
  B^k = \bigoplus_\lambda Q_\lambda^b \ox P_\lambda,
\]
where $\lambda$ are Young diagrams with at most $b=|B|$ rows, $P_\lambda$
are the corresponding irreps of $S_k$ and $Q_\lambda^b$ is the multiplicity
space, which is an irrep of the commutant representation, ${\rm SU}(b)$.
With the maximally mixed state $\tau_\lambda$ on $P_\lambda$, Schur's Lemma
implies that
\[
  \Omega^{A^k B^k} = \bigoplus_\lambda q_\lambda \omega_\lambda^{A^k Q_\lambda^b} \ox \tau_\lambda^{P_\lambda}.
\]
Now observe that $\Omega^{A^k B^k}$ can by local operations $B^k \leftrightarrow D := \bigoplus_\lambda Q_\lambda^b$ be reversibly transformed into
\[
  \widetilde{\Omega}^{A^k D} = \bigoplus_\lambda q_\lambda \omega_\lambda^{A^k Q_\lambda^b},
\]
hence
\[
  I(A^k:B^k)_\Omega = I(A^k:D)_{\widetilde{\Omega}} \leq 2 \log |D| \leq 2 b^2 \log(k+b).
\]
The latter because it is known that there are only $L\leq (k+1)^b$ 
Young diagrams and each ${\rm SU}(b)$ irrep has dimension
$|Q_\lambda^b| \leq M = (k+b)^{\frac12 b^2}$, hence
$|D| \leq LM = (k+1)^b (k+b)^{\frac12 b^2} \leq (k+b)^{b^2}$,
as we only need to consider the case $b\geq 2$.
\end{proof}

\medskip
\begin{proof}{\bf (of Theorem~\ref{thm:adversarial})}
First we show the upper bound, to be precise the strong 
converse. Because among the eligible channels is ${\cal N}^{\otimes n}$
with $\cK({\cal N}) < K$ attaining the minimum in $C_{\min E}(K)$, 
we see immediately that $C_{\ast E}(K) \leq C_E(\cN) = C_{\min E}(K)$.
In fact, the Quantum Reverse Shannon Theorem for 
$\cN^{\ox n}$~\cite{QRST,BertaChristandlRenner} implies the strong converse
as well, i.e.~for all $\epsilon < 1$,
\[
  \limsup_{n\rightarrow\infty} \frac1n N(n,\epsilon;K) \leq C_E(\cN) = C_{\min E}(K).
\]
A direct proof of this can be found in~\cite{GuptaWilde2013} (see also \cite{CooneyMosonyiWilde2014}).
Furthermore, Bowen~\cite{Bowen:feedback} (alternatively again the 
Quantum Reverse Shannon Theorem) showed that feedback does not increase
the entanglement-assisted capacity.

It remains to show achievability of $C_{\min E}(K)$;
for this it will be enough to show that for any test state $\rho$ 
on $A$, $C_{\ast E}(K) \geq \min_{\cK(\cN)<K} I(\rho;\cN)$, by 
exhibiting a sequence of codes with this rate and error probability 
going to $0$, exponentially in $n$.
Choose a purification $\ket{\phi}^{AA'}$ of $\rho$ and let
Alice and Bob share $\phi^{\ox n}$ as well as a maximally entangled 
state of Schmidt rank $n!$, which is measured by both parties
in the computational basis to obtain a shared random permutation
$\tau \in S_n$.
Alice's encoding will be to subject her $n$ input $A'$-systems to
a permutation $\underline{\pi}(i)$ for each message $i=1,\ldots,N$,
then apply $\tau$ and send the resulting state through the
channel $\cN^{(n)}$; Bob will apply the permutation $\tau^{-1}$ to his 
$n$ output $B$-systems. The state this prepares for Bob is
\[\begin{split}
  {\omega(i)}^{A^nB^n}
    &= \frac{1}{n!} \sum_{\tau\in S_n} 
                       (\1\ox U_\tau)^\dagger
                          \Bigl[ (\id\ox\cN^{(n)})
                               \big( (\1\ox U_\tau U_{\underline{\pi}(i)})
                                      \phi^{\ox n} 
                                     (\1\ox U_\tau U_{\underline{\pi}(i)})^\dagger \bigr)
                          \Bigr] (\1\ox U_\tau)                \\
    &= (\1\ox U_{\underline{\pi}(i)})
           \bigl[(\id\ox\overline{\cN}^{(n)})\phi^{\ox n}\bigr]
       (\1\ox U_{\underline{\pi}(i)})^\dagger                  \\
    &=: (\1\ox U_{\underline{\pi}(i)}) \sigma^{(n)} (\1\ox U_{\underline{\pi}(i)})^\dagger,
\end{split}\]
with the permutation-symmetrized channel
\[
  \overline{\cN}^{(n)}(\rho) 
     = \frac{1}{n!} \sum_{\tau\in S_n} 
                       U_\tau^\dagger 
                         \cN^{(n)}\big( U_\tau \rho U_\tau^\dagger \bigr)
                       U_\tau.
\]
Note that as $\cK(\cN^{(n)}) < K^{\ox n}$, the same holds for
$\overline{\cN}^{(n)}$. The permutations $\underline{\pi}(i)$
form a code for the compound channel 
\[
  \Bigl[
    W^\sigma_\pi = (\1\ox U_\pi)\sigma^{\ox k}(\1\ox U_\pi)^\dagger
     :
    \sigma^{AB} \in \cS^{(\epsilon)}_{K,\rho}
  \Bigr]
\]
according to Proposition~\ref{prop:compound} and its proof;
here, $n=k\ell$, and we will determine $k$ and $\epsilon$ later.
Bob will use the very decoding POVM $(D_i)$ from the same
proposition.

To analyze the performance of this strategy, we apply the 
Constrained Postselection Lemma~\ref{lemma:CPSL} to the 
permutation-symmetric state
$\sigma^{(n)} = (\id\ox\overline{\cN}^{(n)})\phi^{\ox n}$,
$X = (\1\ox K)\ket{\phi} < A \ox B$ and $\cR = \tr_B$:
\[
  \sigma^{(n)} \leq (n+1)^{3|A|^2|B|^2} 
                    \int {\rm d}\sigma\, \sigma^{\ox n}\, F(\sigma^A,\rho^A)^{2n},
\]
where the integral is over states $\sigma^{AB}$ supported on $X < AB$.
We split the integral into two parts, a first where
$F(\sigma^A,\rho^A) < 1-\alpha$ and a second one where
$F(\sigma^A,\rho^A) \geq 1-\alpha$. 
Choosing $\alpha$ small enough ensures that those $\sigma^{AB}$ 
are in $\cS^{(\epsilon)}_{K,\rho}$.
Thus, 
\[
  \sigma^{(n)} \leq (n+1)^{3|A|^2|B|^2} (1-\alpha)^{2n} \sigma_0 +
                    (n+1)^{3|A|^2|B|^2} 
                    \int_{F(\sigma^A,\rho^A) \geq 1-\alpha} {\rm d}\sigma\, \sigma^{\ox n},
\]
with some state $\sigma_0$. At this point we can evaluate the error probability:
\[\begin{split}
  P_{\text{err}} &= \frac{1}{N} \sum_{i=1}^N 
                       \tr\bigl( (\1\ox U_{\underline{\pi}(i)}) 
                                      \sigma^{(n)}
                                 (\1\ox U_{\underline{\pi}(i)})^\dagger (\1-D_i) \bigr) \\
                 &\leq \text{poly}(n) \left[ (1-\alpha)^{2n}
                                             + \max_{\sigma\in\cS^{(\epsilon)}_{K,\rho}}
                                                 \tr\bigl( (\1\ox U_{\underline{\pi}(i)}) 
                                                             \sigma^{(n)}
                                                           (\1\ox U_{\underline{\pi}(i)})^\dagger
                                                             (\1-D_i) \bigr)
                                       \right]                                          \\
                 &\leq \text{poly}(n)\bigl( (1-\alpha)^{2n} + c^{n/k} \bigr),
\end{split}\]
showing that for every $n$ and $\epsilon$ the error probability goes
to zero exponentially -- in fact, at the same rate as the corresponding 
compound channel, except for the additional term $(1-\alpha)^{2n}$.

The rate, according to Proposition~\ref{prop:compound} is
$\geq \min_{\cK(\cN)<K} I(\rho;\cN) - 2\delta$, where
$\delta = 2\epsilon\log(|A||B|) + \frac{3}{k} + 2|B|^2 \frac{\log(k+|B|)}{k}$
can be made arbitrarily small by choosing $\epsilon$ small
enough and $k$ large enough.
\end{proof}

\medskip
\begin{remark}
Along the same lines, the use of permutation-symmetrization and 
the Postselection Lemma allow to give a new proof of the coding theorem 
for \emph{arbitrarily varying cq-channels}~\cite{AVQC-C}, by reducing
it to a compound cq-channel~\cite{quantum-compound-channel}, 
cf.~also~\cite{env-assist}.

Observe however that what we treated here is not an ``arbitarily varying
quantum channel'' in any sense previously considered~\cite{AVQC,AVQC-C},
going beyond the model in~\cite{env-assist}, too.
\end{remark}

\section{A Constrained Post-Selection Lemma}
\label{app:postselection}
Here we show the following extension of the main technical result 
of~\cite{CKR-postselect} (albeit with a worse polynomial prefactor).
\begin{lemma}
  \label{lemma:CPSL}
  For given Hilbert space $X$ with dimension $d$, denote by 
  ${\rm d}\sigma$ the measure on the quantum states $\mathcal{S}(X)$ 
  obtained by drawing a pure state from $X \ox X'$ uniformly at
  random (i.e., from the unitarily invariant probability measure)
  and tracing out $X'$.
  
  Then, for any $S_n$-invariant state $\rho^{(n)}$ on $X^n$,
  \[
    \rho^{(n)} \leq (n+1)^{3d^2} \int {\rm d}\sigma\, \sigma^{\ox n}\, F\bigl( \rho^{(n)}, \sigma^{\ox n} \bigr)^2.
  \]
  The measure ${\rm d}\sigma$ is universal in the sense that it depends only on the space $X$.
  
  Furthermore, let ${\cal R}:{\cal L}(X) \rightarrow {\cal L}(Y)$ be a cptp map,
  $\eta \in \mathcal{S}(Y)$ a state. Then, for every $S_n$-invariant state $\rho^{(n)}$
  on $X^n$ with $\mathcal{R}^{\ox n}\bigl( \rho^{(n)} \bigr) = \eta^{\ox n}$,
  \[
    \rho^{(n)} \leq (n+1)^{3d^2} \int {\rm d}\sigma\, \sigma^{\ox n}\, F\bigl( \mathcal{R}(\sigma),\eta \bigr)^{2n}.
  \]
  Note that the right hand side depends only on $X$, $\mathcal{R}$, $\eta$ and $n$. 
\end{lemma}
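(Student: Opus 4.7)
The plan is to adapt the Christandl--K\"onig--Renner Post-Selection Lemma~\cite{CKR-postselect} by incorporating a fidelity weight that reflects how close each test state $\sigma^{\ox n}$ is to $\rho^{(n)}$. First I would purify the $S_n$-invariant state $\rho^{(n)}$ to a pure state $\ket{\Psi} \in \operatorname{Sym}^n(X \ox X')$, which is always possible because $\rho^{(n)}$ can be diagonalized in a permutation-symmetric basis and purified accordingly. Then I would use the standard resolution of the symmetric projector
\[
  \Pi_{\operatorname{Sym}^n(X\ox X')} = d_n \int d\phi\, M_\phi,
  \qquad M_\phi := \proj{\phi^{\ox n}},
\]
where $d_n = \binom{n + d^2 - 1}{d^2 - 1} \leq (n+1)^{d^2 - 1}$ and $d\phi$ is the unitarily invariant pure-state measure on $X \ox X'$.

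The crucial technical step is to upgrade the na\"ive bound $\proj{\Psi} \leq \Pi_{\operatorname{Sym}}$ to a fidelity-weighted one. Writing $\proj{\Psi} = \Pi_{\operatorname{Sym}} \proj{\Psi} \Pi_{\operatorname{Sym}}$ and substituting the resolution twice yields
\[
  \proj{\Psi} = d_n^2 \iint d\phi\, d\phi'\, c_{\phi,\phi'}\, \ketbra{\phi^{\ox n}}{\phi'^{\ox n}},
\]
with $c_{\phi,\phi'} = \braket{\phi^{\ox n}}{\Psi}\braket{\Psi}{\phi'^{\ox n}}$. Applying the rank-one Cauchy--Schwarz operator inequality $c\,\ketbra{u}{v} + \bar c\,\ketbra{v}{u} \leq \varepsilon^{-1}|c|^2 \proj{u} + \varepsilon\proj{v}$ with a suitably $\phi'$-dependent choice of $\varepsilon$ should allow one to extract the weight $|\braket{\phi^{\ox n}}{\Psi}|^2$ at each $M_\phi$, at the price of a further polynomial factor in $d_n$ coming from absorbing the residual term back into $\Pi_{\operatorname{Sym}}$.

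Tracing out $X'^n$ then converts $M_\phi$ into $\sigma(\phi)^{\ox n}$ with $\sigma(\phi) = \tr_{X'}\proj{\phi}$, and the push-forward of $d\phi$ under the partial trace is by definition $d\sigma$. By Uhlmann's theorem the overlap is controlled by the fidelity, $|\braket{\phi^{\ox n}}{\Psi}| \leq F(\rho^{(n)}, \sigma^{\ox n})$, so the weight can be replaced by the fidelity, establishing the first, unconstrained bound. The constrained case is then immediate: if $\cR^{\ox n}(\rho^{(n)}) = \eta^{\ox n}$, by monotonicity of fidelity under cptp maps and its multiplicativity under tensor products,
\[
  F\bigl(\rho^{(n)}, \sigma^{\ox n}\bigr)
    \leq F\bigl(\cR^{\ox n}(\rho^{(n)}), \cR(\sigma)^{\ox n}\bigr)
    = F(\eta, \cR(\sigma))^n,
\]
and substituting into the unconstrained inequality gives the stated bound.

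The main obstacle is the second step: extracting the \emph{squared} weight $|\braket{\phi^{\ox n}}{\Psi}|^2$ rather than its first power. A na\"ive symmetrization of the double integral followed by a single Cauchy--Schwarz in one variable would only produce $|\braket{\phi^{\ox n}}{\Psi}|$, which is too weak to yield the fidelity-squared in the end. Obtaining the squared weight requires a more refined two-sided estimate that trades a controlled residual against a polynomial blow-up, and this is what accounts for the exponent $3d^2$ in place of the $d^2-1$ of the plain CKR bound.
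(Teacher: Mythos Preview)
Your plan is sound and structurally identical to the paper's: purify $\rho^{(n)}$ to a vector $\ket{\Psi}$ in the Bose-symmetric subspace of $(X\ox X')^n$, sandwich $\proj{\Psi}$ between two copies of the resolution $\Pi_{\operatorname{Sym}} = d_n\int d\phi\,\proj{\phi^{\ox n}}$, extract a diagonal weight $|\braket{\phi^{\ox n}}{\Psi}|^2$, trace out ${X'}^n$, and replace the overlap by $F(\rho^{(n)},\sigma^{\ox n})^2$ via Uhlmann; the constrained version then follows exactly as you say by monotonicity and multiplicativity of the fidelity. The only genuine difference from the paper is the ``diagonalisation'' step. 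The paper does not work with the continuous double integral; instead it invokes Carath\'eodory to write $\frac{1}{d_n}\Pi_{\operatorname{Sym}}$ as a finite mixture $\sum_{i=1}^{d_n^2} q_i\proj{\zeta_i^{\ox n}}$, then applies Hayashi's pinching inequality (an $m\times m$ positive matrix is $\leq m$ times its diagonal) to the resulting $d_n^2\times d_n^2$ coefficient matrix, and finally uses $q_i\leq 1/d_n$ (which follows from $q_i\proj{\zeta_i^{\ox n}}\leq \frac{1}{d_n}\Pi_{\operatorname{Sym}}$). This produces the prefactor $d_n^2\cdot d_n^2\cdot d_n^{-1}=d_n^3$.

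Your Cauchy--Schwarz route works too, and in fact more cleanly than you anticipate. With $c_{\phi,\phi'}=\braket{\phi^{\ox n}}{\Psi}\braket{\Psi}{\phi'^{\ox n}}$ and the choice $\varepsilon=\varepsilon(\phi')=|\braket{\Psi}{\phi'^{\ox n}}|^2$ in
\[
  c\,\ketbra{u}{v}+\bar c\,\ketbra{v}{u}\ \leq\ \varepsilon^{-1}|c|^2\proj{u}+\varepsilon\,\proj{v},
\]
the first term becomes $|\braket{\phi^{\ox n}}{\Psi}|^2\proj{\phi^{\ox n}}$ and the second becomes $|\braket{\phi'^{\ox n}}{\Psi}|^2\proj{\phi'^{\ox n}}$; after integrating and relabelling they coincide, so there is no ``residual to absorb into $\Pi_{\operatorname{Sym}}$'' at all, and you get
\[
  \proj{\Psi}\ \leq\ d_n^{\,2}\int d\phi\,|\braket{\phi^{\ox n}}{\Psi}|^2\,\proj{\phi^{\ox n}}.
\]
(On the null set where $\varepsilon(\phi')=0$ one has $c_{\phi,\phi'}=0$, so those pairs contribute nothing.) Thus your approach yields the prefactor $d_n^{\,2}\leq (n+1)^{2d^2}$, slightly sharper than the paper's $(n+1)^{3d^2}$; your worry in the last paragraph, while well placed, is resolved precisely by this $\phi'$-dependent choice of $\varepsilon$. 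The trade-off is that the paper's discretisation-plus-pinching argument is perhaps more transparent about where the polynomial factor comes from, whereas your continuous version avoids the detour through Carath\'eodory and the auxiliary bound on the weights $q_i$.
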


\medskip
Here, $F(\xi,\eta) = \| \sqrt{\xi}\sqrt{\eta} \|_1$ is the
fidelity between (mixed) states $\xi,\eta\in\mathcal{S}(X)$
\cite{Uhlmann,Jozsa,Fuchs-vandeGraaf}.

\medskip
\begin{remark}
Note that in $\Omega^{(n)}$, the contribution of states $\sigma$
with $F\bigl( \mathcal{R}(\sigma),\eta \bigr) < 1-\epsilon$ is exponentially
small in $n$. I.e., for a symmetric state with an additional constraint,
expressed by $\mathcal{R}$ and $\eta$, the universal de Finetti state 
from~\cite{CKR-postselect} may be chosen in such a way that almost all
its contributions also approximately obey the constraint.
\end{remark}

\medskip
\begin{proof}
Denoting the uniform (i.e.~unitarily invariant) probability 
measure over \emph{pure} states $\zeta = \proj{\zeta}$
on $X\ox X'$ by ${\rm d}\zeta$, it is well known that
\[
  \int {\rm d}\zeta\, \zeta^{\otimes n} = \frac{1}{{n+d^2-1 \choose d^2-1}} \Pi_{\text{Sym}^n(X\ox X')},
\]
with $\Pi_{\text{Sym}^n(X\ox X')}$ denoting the projector onto the (Bose)
symmetric subspace of $(X\ox X')^{\otimes n}$. The reason is that the latter
is an irrep of the $U^{\otimes n}$-representation for $U \in \text{SU}(d^2)$, 
so Schur's Lemma applies.
Now we apply Caratheodory's Theorem, which says that ${\rm d}\zeta$ can be 
convex-decomposed into measures with finite support, more precisely
ensembles $\{q_i,\zeta_i\}_{i=1}^{D^2}$, with $D={n+d^2-1 \choose d^2-1} \leq (n+1)^{d^2}$,
the dimension of the Bose symmetric subspace of $(X\otimes X')^{\otimes n}$, 
and
\[
  \sum_i q_i \zeta_i^{\otimes n} = \frac{1}{D} \Pi_{\text{Sym}^n(X\ox X')}.
\]
For the moment we shall focus on one of these measures/ensembles.

It is also well known that one can purify $\rho^{(n)}$ in a Bose symmetric way,
i.e.~$\rho^{(n)} = \tr_{{X'}^n} \varphi^{(n)}$, with 
$\varphi^{(n)} = \proj{\varphi^{(n)}}$ a pure state supported on the Bose
symmetric subspace. Thus, with the operator $A := \sum_i \ket{\zeta_i}^{\ox n}\!\bra{i}$,
\begin{equation*}\begin{split}
  \varphi^{(n)} &= \Pi_{\text{Sym}^n(X\ox X')} \varphi^{(n)} \Pi_{\text{Sym}^n(X\ox X')} \\
                &= D^2 \sum_{ij} q_i q_j \zeta_i^{\ox n}\, \varphi^{(n)}\, \zeta_j^{\ox n} \\
                &= D^2 A \left( \sum_{ij} q_i q_j \ketbra{i}{j} 
                                  \bra{\zeta_i}^{\ox n} \varphi^{(n)} \ket{\zeta_j}^{\ox n} \right) 
                       A^\dagger \\
                &\leq D^4 A \left( \sum_{i} q_i^2 \ketbra{i}{i} 
                                     \bra{\zeta_i}^{\ox n} \varphi^{(n)} \ket{\zeta_i}^{\ox n} \right) 
                          A^\dagger \\
                &\leq D^3 A \left( \sum_{i} q_i \proj{i}
                                     F\left( \zeta_i^{\ox n}, \varphi^{(n)}\right)^2 \right) 
                          A^\dagger \\
                &\leq D^3 \sum_i q_i \zeta_i^{\ox n}\, 
                                       F\left( (\tr_{X'}\zeta_i)^{\ox n}, \rho^{(n)} \right)^2, 
\end{split}\end{equation*}
where in the fourth line we have used Hayashi's pinching inequality
\cite{Hayashi-pinch}, 
and in the fifth $q_i \leq \frac{1}{D}$; 
in line six 
we have invoked the monotonicity
of the fidelity under cptp maps, here the partial trace,
as well as $\tr_{{X'}^n} \varphi^{(n)} = \rho^{(n)}$. 

Now we remember that $\{q_i,\zeta_i\}$ was just one of the Caratheodory components of
the uniform measure ${\rm d}\zeta$, so by convex combination,
\[
  \varphi^{(n)} 
      \leq D^3 \int {\rm d}\zeta\, \zeta^{\ox n}\, F\left( (\tr_{X'}\zeta)^{\ox n}, \rho^{(n)} \right)^2,
\]
hence by partial trace over ${X'}^n$, and recalling the definition
of ${\rm d}\sigma$, we arrive at
\[
  \rho^{(n)} \leq D^3 \int {\rm d}\sigma\, \sigma^{\ox n}\, F\bigl( \sigma^{\ox n}, \rho^{(n)} \bigr)^2.
\]

To obtain the second bound, we apply the map $\mathcal{R}^{\ox n}$ to the
states inside the above fidelity; by monotonicity of the fidelity
once more,
\[\begin{split}
  F\bigl( \sigma^{\ox n}, \rho^{(n)} \bigr) 
         &\leq F\left( \mathcal{R}^{\ox n}(\sigma^{\ox n}), \mathcal{R}^{\ox n}(\rho^{(n)}) \right) \\
         &=    F\left( \bigl(\mathcal{R}(\sigma)\bigr)^{\ox n}, \eta^{\ox n} \right) \\
         &=    F\bigl( \mathcal{R}(\sigma), \eta)^n,
\end{split}\]
as desired.
\end{proof}

\medskip
\begin{remark}
  It is the trick to sandwich the Bose-symmetric state $\varphi^{(n)}$ between
  symmetric subspace projectors -- rather than bounding it directly by
  that projector --, which allows the introduction of fidelities
  between the state and ``test'' product states.
  
  Here we have used this to enforce a linear constraint valid for $\rho^{(n)}$
  on the components of the de Finetti state on the right hand side. It turns 
  out, perhaps unsurprisingly, that also other convex constraints
  (with a ``good'' behaviour linking $n=1$ with the general case) are
  amenable to the same treatment, for instance membership in the convex
  set of separable states for a multipartite space $X = X_1\ox \cdots \ox X_k$,
  and other similar sets, or even non-convex constraints. Such
  generalizations and their applications are discussed in~\cite{LancienWinter2015}.
\end{remark}

\end{document}